\let\ce\ch
\newcommand{\qunasys}{QunaSys Inc., Aqua Hakusan Building 9F, 1-13-7 Hakusan, Bunkyo, Tokyo 113-0001, Japan}
\newcommand{\fujitsu}{Quantum Laboratory, Fujitsu Research, Fujitsu Limited, 4-1-1 Kamikodanaka, Nakahara, Kawasaki, Kanagawa 211-8588, Japan}
\newcommand{\abs}[1]{\left| #1 \right|}
\newcommand{\order}[1]{\mathcal{O}\left( #1 \right)}
\newcommand{\norm}[1]{\| #1 \|}
\newcommand{\1}{\mbox{1}\hspace{-0.25em}\mbox{l}}
\newcommand{\sgn}[1]{\mathrm{sgn}\left( #1 \right)}
\theoremstyle{plain}
\newtheorem{dfn}{Definition}
\newtheorem{lem}[dfn]{Lemma}
\newtheorem{thm}[dfn]{Theorem}
\begin{document}

\preprint{APS/123-QED}

\title{Mirror subspace diagonalization: A quantum Krylov algorithm with near-optimal sampling cost}

\author{Shota Kanasugi}
 \affiliation{\fujitsu}
 \email{kanasugi.shota@fujitsu.com}
 
\author{Yuya O. Nakagawa}%
 \affiliation{\qunasys}

\author{Norifumi Matsumoto}
 \affiliation{\fujitsu}

\author{Yuichiro Hidaka}%
 \affiliation{\qunasys}

\author{Kazunori Maruyama}
 \affiliation{\fujitsu}

\author{Hirotaka Oshima}
 \affiliation{\fujitsu}

\date{\today}

\begin{abstract}
Quantum Krylov algorithms have emerged as a promising approach for ground-state energy estimation in the near-term quantum computing era. A major challenge, however, lies in their inherently substantial sampling cost, primarily due to the individual measurement of each term in the Hamiltonian. While various techniques have been proposed to mitigate this issue, the sampling overhead remains a significant bottleneck, especially for practical large-scale electronic structure problems. In this work, we introduce an alternative method, dubbed mirror subspace diagonalization (MSD), which approaches the theoretical lower bound of the sampling cost for quantum Krylov algorithms. MSD leverages a finite-difference formula to express the Hamiltonian operator as a linear combination of time-evolution unitaries with symmetrically shifted timesteps, enabling efficient estimation of the Hamiltonian matrix within the Krylov subspace. In this scheme, the finite difference and statistical errors are simultaneously minimized by optimizing the timestep parameter and shifting the energy spectrum. Consequently, MSD attains the lower bound of the sampling cost of the quantum Krylov algorithms up to a logarithmic factor. Furthermore, we employ classical post-processing of the Hamiltonian matrix element estimates to infer Hamiltonian moments, which are then used to mitigate the ground state energy error based on the Lanczos scheme. Through theoretical analysis of the sampling cost, we demonstrate that MSD is particularly effective when the spectral norm of the Hamiltonian is significantly smaller than its 1-norm. 
Such a situation arises, for example, in high-accuracy simulations of molecules using large basis sets that incorporate strong electronic correlations. 
Numerical results for various molecular models reveal that MSD can achieve sampling cost reductions ranging from approximately 10 to 10,000 times compared to the conventional quantum Krylov algorithm. 
\end{abstract}

\maketitle


\section{\label{sec:Intro}Introduction}
Accurately determining ground-state energies and properties of quantum many-body systems remains a central challenge in physics and chemistry, with far-reaching implications for materials science, drug discovery, and a fundamental understanding of nature. The advent of quantum computers has set expectations to revolutionize this field, offering the potential to solve problems intractable for even the most powerful classical supercomputers~\cite{cao2019quantum,mcardle2020quantum,bauer2020quantum}. Among the various quantum algorithms proposed for this task, quantum phase estimation (QPE) has long been considered the gold standard~\cite{kitaev1995quantum,kitaev2002classical}. QPE leverages the principles of quantum mechanics to directly extract the eigenvalues of a Hamiltonian operator, providing a rigorous guarantee on the accuracy of the estimated ground-state energy. This distinguishes QPE from classical heuristic methods such as density matrix renormalization group (DMRG)~\cite{chan2011density,baiardi2020density} and coupled cluster theory~\cite{bartlett2007coupled}, which often lack such guarantees and can struggle with strongly correlated systems. However, the implementation of QPE for practical chemistry problems demands deep quantum circuits with long coherence times and extensive quantum gate resources~\cite{lee2021femoco,goings2022p450}, making it a long-term goal contingent on the realization of fault-tolerant quantum computers (FTQC). The stringent requirements of QPE have spurred significant research efforts into alternative quantum algorithms that can be implemented on near-term quantum devices. 

In the near-term noisy intermediate-scale quantum (NISQ) era~\cite{preskill2018quantum}, the variational quantum eigensolver (VQE) and related hybrid quantum-classical algorithms have garnered significant attention~\cite{peruzzo2014variational}. VQE offers the potential for implementation on shallower circuits compared to QPE, making it more amenable to the limitations of current quantum hardware. In VQE, a parameterized quantum circuit, known as an ansatz, is used to prepare a trial wave function, and the energy expectation value is optimized iteratively using a classical optimization algorithm. While VQE has shown promise for tackling small-scale molecular systems, it suffers from several challenges that hinder its scalability to practical problem sizes. One major obstacle is the presence of barren plateaus in the optimization landscape~\cite{mcclean2018barren,cerezo2021cost,wang2021noise}, where the gradients of the energy expectation value vanish exponentially with the number of qubits, making it difficult to train the quantum circuit. Furthermore, the accuracy of VQE is highly dependent on the choice of ansatz and optimizer, and the lack of accuracy guarantees makes its advantage over classical heuristic methods unclear~\cite{lee2023evaluating}. 

Recently, quantum Krylov methods~\cite{parrish2019quantum,motta2020determining,stair2020multireference,cohn2021filter,seki2021power,klymko2022real,cortes2022quantum,epperly2022theory,shen2023real,stair2023stochastic,tkachenko2024quantum,zhang2024measurement,kirby2024analysis,lee2024sampling,dcunha2024fragment,yoshioka2024diagonalization,lee2025efficient,byrne2024quantum,oumarou2025molecular,szasz2025ground,oleary2025partitioned} have emerged as a compelling alternative, bridging the gap between QPE and VQE. Quantum Krylov algorithms leverage the power of Krylov subspace methods, which have been widely used in classical numerical linear algebra, to efficiently estimate the ground-state energy of a Hamiltonian. Quantum Krylov algorithms construct a Krylov subspace by repeatedly applying a function of the Hamiltonian operator $\hat{H}$, expressed as $f(\hat{H})$ for a function $f$, to an initial trial state, and then diagonalize the projected Hamiltonian within this subspace to obtain the ground-state energy estimate. Specifically, we refer to quantum Krylov methods utilizing the time-evolution operator, i.e., $f(\hat{H})=e^{-i\hat{H}t}$ for some time $t$, to construct the Krylov subspace as Krylov quantum diagonalization (KQD) in the following. KQD offers a convergence guarantee to the exact ground state~\cite{epperly2022theory,kirby2024analysis,shen2023real}, akin to QPE, while maintaining the relatively shallow circuit depths characteristic of quantum-classical hybrid algorithms like VQE. This makes KQD particularly well-suited for implementation on early fault-tolerant quantum computing (early-FTQC) devices~\cite{suzuki2022preftqc,katabarwa2024early,akahoshi2024star,toshio2025starv2,akahoshi2024compilation,kuroiwa2025averaging}, which are expected to have limited qubit counts and coherence times but still offer the potential for significant quantum speedups.

However, the KQD algorithms are susceptible to numerical instability due to various error sources, including quantum noise, Trotter errors, and statistical errors arising from finite sampling~\cite{epperly2022theory,kirby2024analysis,lee2024sampling,lee2025efficient}. While quantum noise and Trotter errors are expected to be mitigated in the early-FTQC era through techniques such as partial quantum error correction~\cite{akahoshi2024star,toshio2025starv2,akahoshi2024compilation}, statistical errors arising from finite sampling remain a persistent challenge, even with improved quantum device performance~\cite{lee2024sampling,lee2025efficient}. The sampling cost of KQD, which is determined by the number of quantum measurements required to estimate the matrix elements of the Hamiltonian projected onto the Krylov subspace, can be prohibitively high for classically intractable large-scale quantum systems. This is because the number of measurements required to achieve a desired level of accuracy scales polynomially with the system size. For example, if we naively decompose a molecular Hamiltonian as a linear combination of Pauli operators and measure each component equally, the sampling cost scales as $\order{N_{\rm orb}^4}$ with $N_{\rm orb}$ being the number of spatial orbitals. Therefore, reducing the sampling cost is crucial for the practical application of KQD and has been the subject of active research~\cite{zhang2024measurement,lee2024sampling,lee2025efficient}. Various techniques have been proposed to reduce the sampling cost of KQD, including the weighted sampling scheme~\cite {lee2024sampling} and the Hamiltonian modification techniques~\cite{lee2025efficient}. However, the fundamental limits on the achievable sampling cost in KQD remain poorly understood, and the development of algorithms that can approach these limits is a major challenge.

\begin{figure*}[htbp]
  \centering
  \includegraphics[width=0.98\textwidth]{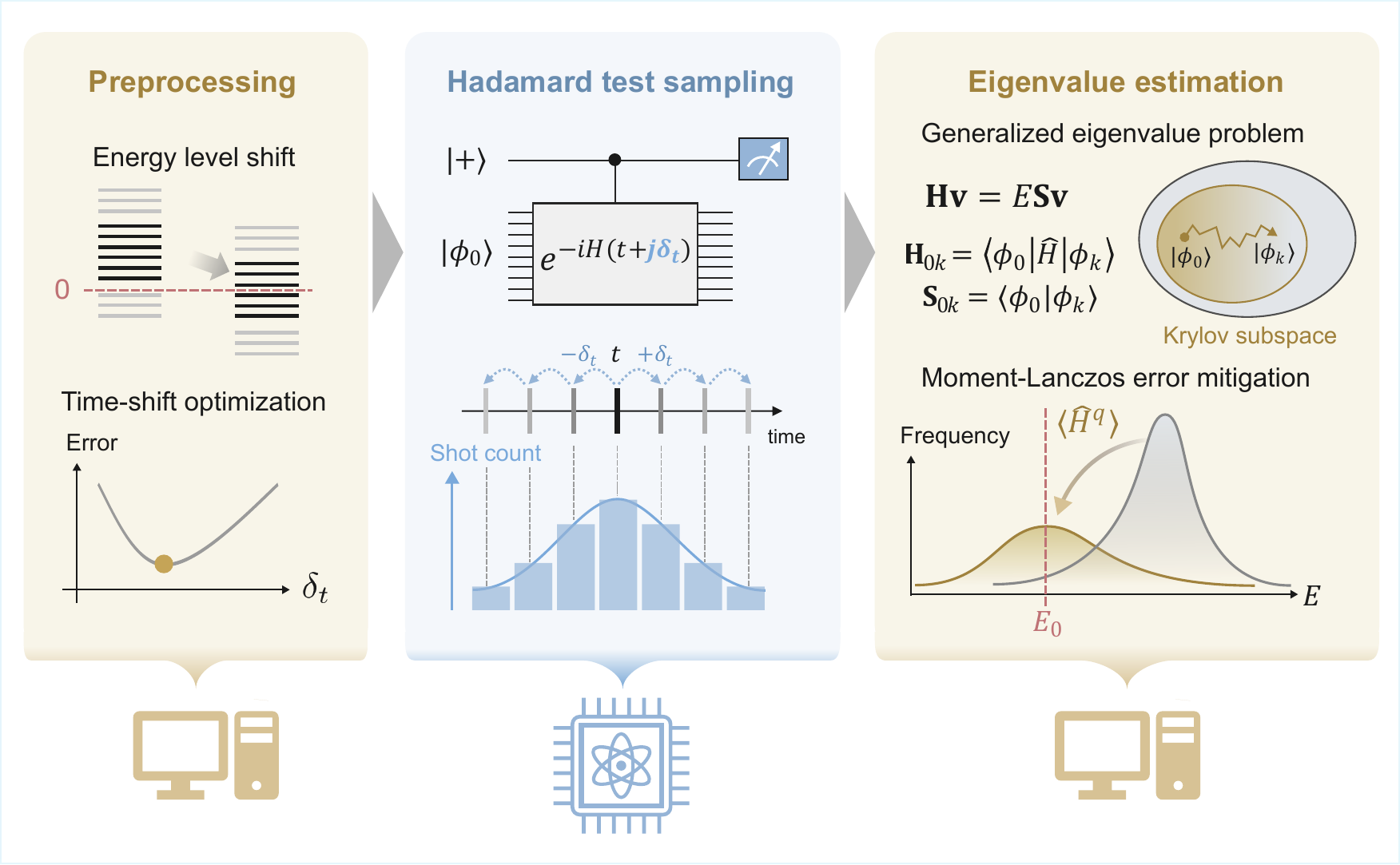}
  \caption{Schematic illustration of the MSD algorithm. It begins with two classical preprocessing steps: time-shift optimization and energy level shift, both designed to minimize the sampling cost. Next, Hadamard test sampling of shifted time-evolution operators is executed on a quantum processor. This sampling employs a shot allocation weighted by the central finite difference formula, which approximates the Hamiltonian operator. The matrix elements, estimated from the Hadamard test sampling data, are then used to classically solve a generalized eigenvalue problem for the Krylov subspace. Optionally, energy error mitigation can be applied using Hamiltonian moments derived from the Hadamard test sampling data.  
  \label{fig:msd}}
\end{figure*}
In this work, we propose a quantum Krylov algorithm, which we term mirror subspace diagonalization (MSD). Its name, ``mirror", derives from its core principle of utilizing symmetrically shifted timesteps, based on a central finite-difference formula. 
MSD leverages a central finite-difference formula to express the Hamiltonian operator as a linear combination of time-evolution unitaries with symmetrically shifted timesteps. This enables efficient estimation of the projected Hamiltonian matrix within the Krylov subspace without the need to measure each term of the Hamiltonian individually.
By carefully optimizing the shift timestep parameter and performing a shifting of the energy spectrum, MSD minimizes the sampling cost (see Fig.~\ref{fig:msd} for a schematic illustration), which becomes comparable to a theoretical lower bound of the sampling cost of KQD approaches, as we analyze in this work. We provide a detailed theoretical analysis of the sampling cost of MSD and demonstrate its effectiveness through numerical simulations and resource estimations for various molecular Hamiltonians. Our results show that MSD can achieve sampling cost reductions ranging from 10 to 10,000 times compared to the KQD approach, making it a promising candidate for ground-state energy estimation on near-term quantum computers such as NISQ and early-FTQC devices.

The rest of the paper is organized as follows. In Sec.~\ref{sec:preliminaries}, we review the KQD algorithm and its sampling cost as preliminaries to our results. In Sec.~\ref{sec:MSD}, we formalize the MSD algorithm and rigorously provide error bounds for matrix perturbations, specifically those on the projected Hamiltonian matrix arising from statistical sampling error. Based on the error analysis, we describe the energy level shift and timestep parameter optimization techniques to attain the sampling lower bound. Furthermore, the sampling and Hamiltonian simulation costs are numerically investigated for various benchmark molecular models. In Sec.~\ref{sec:moment}, we extend MSD to Hamiltonian moment estimation and describe the ground state energy error mitigation scheme using the obtained Hamiltonian moments. In Sec.~\ref{sec:numerical}, the effectiveness of MSD is demonstrated by numerical simulation for \ce{H2} molecule models. Finally, we conclude this paper in Sec.~\ref{sec:conclusion}. 

\section{Preliminaries: the KQD algorithm and Its Sampling Cost \label{sec:preliminaries}}
In this section, we review the KQD algorithm for the ground state energy estimation as a preliminary to our work. 
KQD is a quantum-classical hybrid algorithm for obtaining the low-energy eigenvalues and eigenvectors of a quantum many-body Hamiltonian. To deal with the eigenvalue problem for an exponentially large Hilbert space of quantum many-body systems, KQD performs exact diagonalization within the Krylov subspace in analogy with the classical Lanczos method~\cite{lanczos1950iteration}. Specifically, KQD projects the given Hamiltonian $\hat{H}$ onto an $n$-dimensional Krylov subspace $\mathcal{K}_n$ defined as follows:
\begin{align}
    \mathcal{K}_n(\phi_0) &= \mathrm{Span}\left( \ket{\phi_0}, \ket{\phi_1}, \cdots, \ket{\phi_{n-1}} \right), \label{eq:krylov_space}
\end{align}
where $\ket{\phi_k}:= \hat{U}^k\ket{\phi_0}$ is a nonorthogonal Krylov basis generated by repeated application of the time-evolution unitary $\hat{U}:=e^{-i\hat{H}\tau}$ for some specific timestep $\tau$ to the initial reference state $\ket{\phi_0}$. The Krylov order $n$ is chosen as $n\ll\mathrm{dim}\hat{H}$, which enables us to solve the projected eigenvalue problem on a classical computer~\cite{epperly2022theory,kirby2024analysis}. 
Based on the Rayleigh-Ritz method for a nonorthogonal basis, eigenvalue estimates of $\hat{H}$ can be obtained from the Krylov subspace~\eqref{eq:krylov_space} by solving the following generalized eigenvalue problem (GEVP)
\begin{align}
    \mathbf{H}\mathbf{v}_{j} = E^{(n)}_{j}\mathbf{S}\mathbf{v}_{j}, \label{eq:GEVP}
\end{align}
where $E_j^{(n)}$ is the $j$-th lowest eigenvalue of the order-$n$ Krylov subspace (i.e., $E_0^{(n)}\leq E_1^{(n)} \leq \cdots \leq E_{n-1}^{(n)}$), and $\mathbf{v}_{j}$ is the corresponding eigenvector. The projected Hamiltonian matrix $\mathbf{H}$ and the overlap matrix $\mathbf{S}$ are $n\times n$ Hermitian matrices defined as 
\begin{align}
    \mathbf{H}_{k'k} &= \bra{\phi_{k'}}\hat{H}\ket{\phi_{k}} = \bra{\phi_0}e^{i\hat{H}k'\tau}\hat{H
    }e^{-i\hat{H}k\tau}\ket{\phi_0}, \label{eq:H_matrix} \\
    \mathbf{S}_{kk'} &= \braket{\phi_{k'}|\phi_{k}} = \bra{\phi_0}e^{-i\hat{H}(k-k')\tau}\ket{\phi_0}. \label{eq:S_matrix}
\end{align}
The overlap matrix $\mathbf{S}$ is a Toeplitz matrix satisfying $\mathbf{S}_{k',k} = \mathbf{S}_{0,k-k'}$ due to the unitarity of $e^{-i\hat{H}\tau}$. Furthermore, the projected Hamiltonian matrix $\mathbf{H}$ is also a Toeplitz matrix owing to the commutation relation $[\hat{H}, e^{-i\hat{H}\tau}]=0$ and is hence evaluated as 
\begin{align}
    \mathbf{H}_{k'k} &= \bra{\phi_0}\hat{H
    }e^{-i\hat{H}(k-k')\tau}\ket{\phi_0} = \mathbf{H}_{0,k-k'}. \label{eq:H_matrix_toeplitz}
\end{align}
As a consequence of the Toeplitz structure of matrices $\mathbf{S}$ and $\mathbf{H}$, we need to evaluate only $n$ independent matrix elements. 
The application of Eq.\eqref{eq:H_matrix_toeplitz} to the GEVP introduces an error primarily due to the non-commutativity between $\hat{H}$ and the Trotterized time-evolution unitary when approximating time evolution via Trotterization. This Trotter error can be substantial when only shallow-depth circuits are feasible on a quantum device. In such a case, we need to apply a non-Toeplitz construction based on Eq.~\eqref{eq:H_matrix} at the expense of the increased sampling cost. Nevertheless, the Toeplitz construction becomes superior to the non-Toeplitz construction in terms of the total error (encompassing both sampling and Trotter errors), even in the presence of such Trotter error, assuming that a specific circuit depth of $\mathcal{O}(n^2\tau^2)$ is attainable~\cite{lee2024sampling}. Therefore, assuming that sufficient circuit depth is achievable, our subsequent discussion will proceed with the Toeplitz construction unless otherwise mentioned.

\subsection{Perturbed generalized eigenvalue problem \label{subsec:GEVP_perturbation}}
In the KQD algorithm, the matrix pair $(\mathbf{H},\mathbf{S})$ is measured using a quantum computer. Therefore, the estimated matrices are perturbed by hardware noise of the quantum processor, algorithmic errors such as Trotter error, and statistical error due to finite sampling. We suppose that such matrix perturbations are described as follows: 
\begin{align}
    \tilde{\mathbf{S}} &:= \mathbf{S} + \mathbf{\Delta}_{\mathbf{S}}, \label{eq:noise_S} \\
    \tilde{\mathbf{H}} &:= \mathbf{H} + \mathbf{\Delta}_{\mathbf{H}}, \label{eq:noise_H}
\end{align}
where $(\tilde{\mathbf{H}}, \tilde{\mathbf{S}})$ is the perturbed matrix pair. Then, we need to solve the following perturbed GEVP: 
\begin{align}
    \tilde{\mathbf{H}}\tilde{\mathbf{v}}_j = \tilde{E}_j^{(n)}\tilde{\mathbf{S}}\tilde{\mathbf{v}}_j, \label{eq:GEVP_noisy}
\end{align}
where $\tilde{E}_j^{(n)}$ is the perturbed $j$-th lowest eigenvalue for the order-$n$ noisy Krylov subspace and $\tilde{\mathbf{v}}_j$ is the corresponding perturbed eigenvector. 

In practice, solving the GEVP requires careful regularization since the unitary Krylov basis $\{\ket{\phi_0},\cdots,\ket{\phi_{n-1}}\}$ often tends to be almost linearly dependent~\cite{epperly2022theory}, leading to the nearly singular behavior of the overlap matrix $\mathbf{S}$. 
This implies that the calculation of $\mathbf{S}^{-1}$, required for solving the GEVP, becomes numerically unstable, rendering the GEVP ill-conditioned. Consequently, small perturbations to the matrix pair $(\mathbf{H},\mathbf{S})$ can lead to significant errors in the obtained eigenvalues. 
The standard regularization approach for this ill-conditioning is the so-called {\it thresholding}, which truncates the dimension by removing the eigenbasis of $\tilde{\mathbf{S}}$ with small eigenvalues~\cite{epperly2022theory}. Since a small eigenvalue in $\tilde{\mathbf{S}}$ causes numerical instability in the calculation of $\tilde{\mathbf{S}}^{-1}$, such truncation alleviates the ill-conditioning of the GEVP. Specifically, we first perform an eigenvalue decomposition of $\tilde{\mathbf{S}}$ and discard eigenvalues smaller than or equal to a certain threshold level $\epsilon>0$. 
Then, using a projector matrix $\tilde{\mathbf{V}}_{>\epsilon}$ whose columns are the non-discarded eigenvectors, the GEVP~\eqref{eq:GEVP_noisy} is regularized as
\begin{align}
    \tilde{\mathbf{V}}_{>\epsilon}^{\dag}\tilde{\mathbf{H}}\tilde{\mathbf{V}}_{>\epsilon}\tilde{\mathbf{v}}_j = \tilde{E}_j^{(n_{\epsilon})}\tilde{\mathbf{V}}_{>\epsilon}^{\dag}\tilde{\mathbf{S}}\tilde{\mathbf{V}}_{>\epsilon}\tilde{\mathbf{v}}_j, \label{eq:GEVP_threshold}
\end{align}
where $n_{\epsilon} (\leq n)$ denotes the remaining dimension after the thresholding with the thresholding level $\epsilon$. Ultimately, we solve the regularized GEVP for the $n_\epsilon \times n_\epsilon$ matrix pair $(\tilde{\mathbf{V}}_{>\epsilon}^{\dag}\tilde{\mathbf{H}}\tilde{\mathbf{V}}_{>\epsilon},\tilde{\mathbf{V}}_{>\epsilon}^{\dag}\tilde{\mathbf{S}}\tilde{\mathbf{V}}_{>\epsilon})$ instead of the original $n\times n$ matrix pair $(\tilde{\mathbf{H}}, \tilde{\mathbf{S}})$. 

\subsection{General error analysis \label{subsec:general_error_analysis}}
In this subsection, we clarify the relationship between the matrix perturbation, described in Sec.~\ref{subsec:GEVP_perturbation}, and the ground state energy estimation error in the KQD algorithm. As mentioned in Sec.~\ref{sec:Intro}, KQD has a theoretical upper bound on the ground state energy estimation error, which dictates the convergence guarantee, similar to QPE. A general error analysis for the perturbed GEVP of the KQD algorithm was first presented in Ref.~\cite{epperly2022theory}, which formally revealed the analytic convergence bound for the ground state energy estimation error. Another general error analysis has been further conducted in Ref.~\cite{kirby2024analysis}, which showed a similar convergence upper bound for KQD but relying on fewer and relatively weak assumptions compared to Ref.~\cite{epperly2022theory}. 

Here, we briefly review the results of Ref.~\cite{epperly2022theory}. We characterize the noise strength for the matrix pair $(\mathbf{H},\mathbf{S})$ as follows: 
\begin{align}
    \eta := \sqrt{\norm{\bm{\Delta}_{\mathbf{S}}}^2 + \norm{\bm{\Delta}_{\mathbf{H}}}^2}, \label{eq:matrix_noise_rate}
\end{align}
where $\mathbf{\Delta}_{\mathbf{S}}$ and $\mathbf{\Delta}_{\mathbf{H}}$ are the matrix perturbation defined in Eqs.~\eqref{eq:noise_S} and~\eqref{eq:noise_H}, and $\norm{\cdots}$ denotes the spectral norm. Let 
\begin{align}
    \ket{\phi_0} = \sum_{j=0}^{N-1}\gamma_j\ket{\psi_j}, \label{eq:initial_state_decomposition}
\end{align}
be the eigenstate decomposition of the initial state $\ket{\phi_0}$, with $\ket{\psi_j}$ being the exact eigenstate for the Hamiltonian $\hat{H}$ corresponding to the $j$-th lowest energy (i.e., $E_0\leq E_1\leq \cdots \leq E_{N-1}$ and $\hat{H}\ket{\psi_j}=E_j\ket{\psi_j}$). 
$N (\leq \mathrm{dim}\hat{H})$ is the number of eigenstates required to span the initial state $\ket{\phi_0}$. 
Suppose that we construct a unitary Krylov subspace~\eqref{eq:krylov_space} for this initial state $\ket{\phi_0}$ with a timestep $\tau=\pi/\Delta{E}_{N-1}$, where $\Delta{E}_j:=E_j-E_0$. Then, Ref.~\cite{epperly2022theory} showed that the error of the ground state energy estimate from the noisy regularized GEVP~\eqref{eq:GEVP_threshold} is upper bounded as follows: 
\begin{align}
    &\left| \arctan{(\tilde{E}_0^{(n_\epsilon)})} - \arctan{(E_0)} \right| \nonumber \\
    &\leq \order{\frac{1-|\gamma_0|^2}{|\gamma_0|^2}e^{-\order{\frac{n\Delta{E}_1}{\Delta{E}_{N-1}}}} + \left[\frac{\Delta{E}_{N-1}}{|\gamma_0|^2}+d_0^{-1}\right]\eta^{\frac{1}{1+\alpha}}},  \label{eq:kqd_energy_error}
\end{align}
where $0\leq\alpha \leq1/2$ is a constant, and the threshold level is supposed to be $\epsilon=\Theta{(\eta^{\frac{1}{1+\alpha}})}$. $d_0^{-1}$ is the condition number defined as  
\begin{align}
    d_0^{-1} := |\mathbf{x}_0^{\dag}(\tilde{\mathbf{A}}+i\tilde{\mathbf{B}})\mathbf{x}_0|^{-1}, \label{eq:condition_number}
\end{align}
where $(\tilde{\mathbf{A}},\tilde{\mathbf{B}}):=(\tilde{\mathbf{V}}_{>\epsilon}^{\dag}\tilde{\mathbf{H}}\tilde{\mathbf{V}}_{>\epsilon},\tilde{\mathbf{V}}_{>\epsilon}^{\dag}\tilde{\mathbf{S}}\tilde{\mathbf{V}}_{>\epsilon})$ is the noisy regularized matrix pair, and $\mathbf{x}_0$ is its unit-norm eigenvector for the lowest eigenvalue. 
The first term on the right-hand side of Eq.~\eqref{eq:kqd_energy_error} describes the error due to the projection of the many-body Hamiltonian problem into the order-$n$ Krylov subspace. This projection error is suppressed exponentially with increasing the Krylov subspace dimension $n$ under the assumption of the spectral gap $\Delta{E}_1>0$. 
The second term on the right-hand side of Eq.~\eqref{eq:kqd_energy_error} indicates the effect of the matrix perturbation. Since it is sublinearly proportional to the noise strength $\eta$, we can reduce the energy estimation error by suppressing the matrix perturbations.

\subsection{Measurement of matrix elements \label{subsec:measurement}}
In KQD, the matrices $\mathbf{S}$ and $\mathbf{H}$ are measured using quantum computers. 
To do this, we typically decompose the Hamiltonian into a linear combination of unitaries (LCU) as follows: 
\begin{align}
    \hat{H} &= c_0\hat{\1} +\sum_{l=1}^{L}c_l\hat{P}_l, \label{eq:LCU}
\end{align}
where $c_l\in\mathbb{R}$, $\hat{\1}$ is the identity operator, and $\hat{P}_l$ is a unitary operator (i.e., $\hat{P}_l\hat{P}_l^{\dag}=\hat{\1}$). Although $\{\hat{P}_l\}$ can be an arbitrary set of unitary operators in principle, we consider a Pauli LCU decomposition with $\hat{P}_l=\{\hat{\1}, \hat{X}, \hat{Y}, \hat{Z}\}^{\otimes N_q}$, where $\{\hat{X}, \hat{Y}, \hat{Z}\}$ are Pauli matrices and $N_q$ is the number of qubits. Then, the projected Hamiltonian matrix~\eqref{eq:H_matrix_toeplitz} is rewritten as 
\begin{align}
    \mathbf{H}_{k',k} &= \sum_{l=1}^{L}c_l\bra{\phi_0}\hat{P
    }_le^{-i\hat{H}(k-k')\tau}\ket{\phi_0}, \label{eq:H_matrix_LCU}
\end{align}
where we have assumed the Toeplitz structure of the matrix. Note that the first term in Eq.~\eqref{eq:LCU} is ignored in Eq.~\eqref{eq:H_matrix_LCU} since it is just an energy constant, which is finally added to the eigenvalue estimate obtained by solving the GEVP. 

Each Pauli matrix element in Eq.~\eqref{eq:H_matrix_LCU} can be measured by performing the Hadamard test using the quantum circuit in Fig.~\ref{fig:measurement_circuit}. 
More specifically, the quantum state right before the measurement in the circuit in Fig.~\ref{fig:measurement_circuit} is described as follows: 
\begin{align}
    \ket{\Phi_t} = \frac{1}{\sqrt{2}}\left( \ket{0}_a\ket{\phi_0} + \ket{1}_a e^{-i\hat{H}t}\ket{\phi_0}\right), 
\end{align}
where $\ket{0}_a$ and $\ket{1}_a$ denote the computational basis states in the ancilla register. Then, the Pauli basis measurement at the end of the circuit gives the following expectation values: 
\begin{align}
    \bra{\Phi_t}\hat{X}\otimes\hat{P}\ket{\Phi_t}&= \mathrm{Re}\left[ \bra{\phi_0} \hat{P}e^{-i\hat{H}t}\ket{\phi_0}\right], \\
    \bra{\Phi_t}\hat{Y}\otimes\hat{P}\ket{\Phi_t}&= \mathrm{Im}\left[ \bra{\phi_0} \hat{P}e^{-i\hat{H}t}\ket{\phi_0}\right]. 
\end{align}
Therefore, we can obtain the projected Hamiltonian matrix~\eqref{eq:H_matrix_LCU} by performing the above Hadamard test for $\hat{P}=\hat{P_l}$ ($l\in\{1,\cdots,L\}$) and $t=k\tau$ ($k\in\{0,\cdots,n-1\}$). If we set $\hat{P}=\hat{\1}$, the measurement outcome of the ancilla register gives the overlap matrix $\mathbf{S}$ in Eq.~\eqref{eq:S_matrix}. 
Here, we remark that the Pauli LCU decomposition simplifies the gate operations in the Hadamard test. Specifically, the multi-qubit Pauli operation $\hat{P}$ can be implemented as a corresponding Pauli measurement, as shown in Fig.~\ref{fig:measurement_circuit}. If we apply an LCU decomposition other than the Pauli LCU, additional controlled gate operations are required after the controlled time-evolution. 

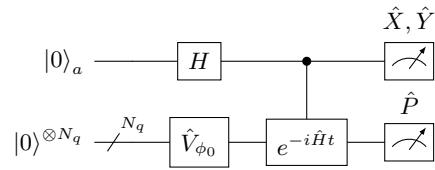
\begin{figure}[tbp]
    \begin{quantikz}[thin lines]
      \lstick{{$\ket{0}_a$}} & & \gate{H} & \ctrl{1} & \meter{\hat{X},\hat{Y}} \\
      \lstick{{$\ket{0}^{\otimes N_q}$}} & \qwbundle{N_q} & \gate{\hat{V}_{\phi_0}} & \gate{e^{-i\hat{H}t}} & \meter{\hat{P}} \\
    \end{quantikz}
    \vspace{0.2cm}
    \caption{Circuit diagrams for estimating the matrix element $\bra{\phi_0}\hat{P}e^{-i\hat{H}t}\ket{\phi_0}$. 
    Here, $\hat{V}_{\phi_0}$ denotes a quantum circuit to prepare the initial state $\ket{\phi_0}$ as $\hat{V}_{\phi_0}\ket{0}^{\otimes N_q} = \ket{\phi_0}$. 
    \label{fig:measurement_circuit}} 
\end{figure}

\subsection{Sampling cost \label{subsec:sampling_error}}
Although hardware noise and algorithmic errors, such as the Trotter error, can be systematically alleviated on early-FTQC devices through quantum error correction and mitigation (e.g., Refs.~\cite{akahoshi2024star,toshio2025starv2,akahoshi2024compilation}), the statistical sampling error arising from the measurement protocol in Sec.~\ref{subsec:measurement} remains inevitable even in the presence of perfect fault tolerance. The effect of such sampling error on the KQD algorithm has been investigated in detail in previous works~\cite {lee2024sampling,lee2025efficient}. In this subsection, we briefly review the results of Refs.~\cite {lee2024sampling,lee2025efficient} and summarize the sampling cost of the KQD algorithm. 

To estimate the sampling cost of KQD, the authors of Ref.~\cite {lee2024sampling} analyzed the sampling variance of the Hadamard test (see Appendix~\ref{append:hadamard_sampling} for the details of the analysis). They found that the sampling variance for the overlap matrix element $\tilde{\mathbf{S}}_{0k}$ is described as 
\begin{align}
    \mathrm{Var}[\tilde{\mathbf{S}}_{0k},m_k] \simeq \frac{2(1-\delta_{0k})}{m_k}\left(2-\frac{1}{d}\right), \label{eq:variance_S_opt}
\end{align}
where $\simeq$ indicates the Haar random averaging, $d=2^{N_q}$ is the dimension of the Hilbert space, and $m_k$ represents the number of shots allocated to estimate $\tilde{\mathbf{S}}_{0k}$ using the Hadamard test ($m_k/2$ shots for both real and imaginary parts). The variance is defined to be zero for $k=0$ since $\mathbf{S}_{00}=\mathbf{S}_{kk}=\braket{\phi_k|\phi_k}=1$ is precisely known by definition and thus requires no sampling. 
Considering the sampling variance of $\bra{\phi_0}\hat{P}_l\ket{\phi_k}$ based on the LCU decomposition in Eq.~\eqref{eq:LCU}, the sampling variance for the noisy Hamiltonian matrix element $\tilde{\mathbf{H}}_{0k}$ is similarly expressed as  
\begin{align}
    \mathrm{Var}[\tilde{\mathbf{H}}_{0k},\vec{m}_k] &\simeq\sum_{l=1}^{L}\frac{(2-\delta_{0k})|c_l|^2}{m_{kl}}\left(2-\frac{1}{d}\right), \label{eq:variance_H}
\end{align}
where $\vec{m}_k=\{m_{kl}^{(\rm r)},m_{kl}^{(\rm i)}\}_{l=1}^{L}$ denotes the shot allocation for each component, with $m_{kl}^{(\rm r)}$ ($m_{kl}^{(\rm i)}$) being the number of shots to estimate the real (imaginary) part of $\bra{\phi_0}\hat{P}_l\ket{\phi_k}$ by the Hadamard test. Equation~\eqref{eq:variance_H} is obtained by setting $m_{kl}^{(\rm r)}=m_{kl}^{(\rm i)}=m_{kl}/2$ for $k>0$ and $m_{0l}^{(\rm r)}=m_{0l}$ to minimize the variance of each $\bra{\phi_0}\hat{P}_l\ket{\phi_k}$. Note that the diagonal element $\bra{\phi_0}\hat{P}_l\ket{\phi_0}$ has only a real part due to Hermiticity, and hence its imaginary part is determined to be zero without sampling (i.e., $m_{0l}^{(\rm i)}=0$). 
The variance~\eqref{eq:variance_H} can be further reduced by optimizing the shot allocation $\vec{m}_k$ under the constraint $\sum_{l=1}^{L}m_{kl}=m_k$ based on the Lagrange multiplier method. 
The resulting optimized shot allocation and sampling variance are given by
\begin{align}
    m_{kl}^{(\rm opt)}&=\frac{|c_l|}{\lambda}m_k , \label{eq:shot_LCU} \\
    \mathrm{Var}[\tilde{\mathbf{H}}_{0k},\vec{m}_k^{(\rm opt)}] &= \frac{(2-\delta_{0k})\lambda^2}{m_k}\left(2-\frac{1}{d}\right), \label{eq:variance_H_opt}
\end{align}
where we have introduced the 1-norm of the LCU Hamiltonian~\eqref{eq:LCU} as
\begin{align}
    \lambda = \sum_{l=1}^{L}|c_l|. \label{eq:1_norm}
\end{align}

Based on the random matrix theory and the sampling variance for the above optimized shot allocations, Ref.~\cite{lee2024sampling} further elucidated the statistical behavior of the matrix perturbations $\norm{\mathbf{\Delta}_{\mathbf{S}}}$ and $\norm{\mathbf{\Delta}_{\mathbf{H}}}$. The result is summarized as the following theorem (see Appendix~\ref{append:sampling_error_derivation} for the derivation). 
\begin{thm}[Sampling perturbation of KQD~\cite{lee2024sampling}]
    Suppose the matrix pair $\mathbf{Z}=\mathbf{S},\mathbf{H}$ of the KQD algorithm is obtained from the sequence of Hadamard test sampling, whose shot allocation $\{m_k\}_{k=0}^{n-1}$ is given by
    \begin{align}
        m_k &= 
        \begin{cases}
            \dfrac{1-\delta_{0k}}{n-1}M, & \mathbf{Z}=\mathbf{S} \\
            \dfrac{\delta_{0k}+\sqrt{2}(1-\delta_{0k})}{\sqrt{2}(n-1)+1}M, & \mathbf{Z}=\mathbf{H} \\ 
        \end{cases} \label{eq:kqd_shot_allocation}
    \end{align}
    where $M$ denotes the total number of shots. Then, the matrix perturbations are upper bounded as follows: 
    \begin{align}
        \norm{\mathbf{\Delta}_{\mathbf{S}}} &\lesssim \frac{2n\sqrt{2\log{(2n)}}}{\sqrt{M}}, \label{eq:sampling_perturbation_S} \\
        \norm{\mathbf{\Delta}_{\mathbf{H}}} &\lesssim \frac{2n\lambda\sqrt{2\log{(2n)}}}{\sqrt{M}}. \label{eq:sampling_perturbation_H}
    \end{align}
    Here, the projected Hamiltonian matrix $\mathbf{H}$ is assumed to be sampled based on the LCU decomposition in Eq.~\eqref{eq:H_matrix_LCU}, and the shot allocation for each unitary component is distributed as in Eq.~\eqref{eq:shot_LCU}.  
\end{thm}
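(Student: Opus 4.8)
The plan is to establish the two bounds \eqref{eq:sampling_perturbation_S} and \eqref{eq:sampling_perturbation_H} by treating each perturbation matrix $\mathbf{\Delta}_{\mathbf{Z}}$ (for $\mathbf{Z}=\mathbf{S},\mathbf{H}$) as a random Hermitian Toeplitz matrix whose independent entries $(\mathbf{\Delta}_{\mathbf{Z}})_{0k}$ have zero mean (the Hadamard test estimators are unbiased) and variances given by Eqs.~\eqref{eq:variance_S_opt} and~\eqref{eq:variance_H_opt}. The spectral norm of a Hermitian Toeplitz matrix is controlled by the magnitudes of its generating entries: since $\mathbf{\Delta}_{\mathbf{Z}}$ is $n\times n$ Toeplitz with first-row entries $\delta_k := (\mathbf{\Delta}_{\mathbf{Z}})_{0k}$, one has the elementary bound $\norm{\mathbf{\Delta}_{\mathbf{Z}}} \le \sum_{k=-(n-1)}^{n-1}|\delta_{|k|}| \le 2\sum_{k=1}^{n-1}|\delta_k|$ (using $\delta_0=0$), or more sharply $\norm{\mathbf{\Delta}_{\mathbf{Z}}}\le 2\max_k |\delta_k|\cdot$(something), but for a clean high-probability statement the cleanest route is a union bound over the $n-1$ scalar concentration estimates.

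First I would invoke a scalar tail bound: each $\delta_k$ is (approximately) a sum of bounded i.i.d.\ $\pm$-valued Hadamard-test outcomes rescaled by $1/m_k$ (for $\mathbf{S}$) or a weighted combination of $L$ such sums with weights $c_l$ and shots $m_{kl}$ (for $\mathbf{H}$), so Hoeffding's inequality gives $\Pr[|\delta_k| \ge t] \le 2\exp(-c\, t^2 / \sigma_k^2)$ with $\sigma_k^2 = \mathrm{Var}[\tilde{\mathbf{Z}}_{0k},m_k]$ the per-element variance. Next, substitute the shot allocation~\eqref{eq:kqd_shot_allocation} into the variance formulas: for $\mathbf{Z}=\mathbf{S}$ this yields $\sigma_k^2 \simeq \tfrac{2(n-1)}{M}(2-\tfrac1d) \le \tfrac{4(n-1)}{M}$ for $k\ge1$, and for $\mathbf{Z}=\mathbf{H}$, combining~\eqref{eq:variance_H_opt} with~\eqref{eq:kqd_shot_allocation} gives $\sigma_k^2 \lesssim \tfrac{2\lambda^2(n-1)}{M}\cdot O(1)$ — the allocation is precisely the one that balances the $k=0$ and $k\ge1$ entries so that all $\sigma_k^2$ are comparable. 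Then choose the tail radius $t = \Theta(\sigma_k \sqrt{\log n})$ so that the union bound over $k=1,\dots,n-1$ leaves total failure probability $o(1)$; on the complementary event every $|\delta_k| \lesssim \sqrt{(n/M)\log n}$ (resp.\ $\lambda\sqrt{(n/M)\log n}$). Finally, feed these into the Toeplitz spectral-norm bound. Summing $n$ terms each of size $\sqrt{(n/M)\log n}$ overshoots by a factor $\sqrt n$; the stated bound $2n\sqrt{2\log(2n)}/\sqrt M$ suggests that instead one should use the operator-norm-via-maximum-entry estimate $\norm{\mathbf{\Delta}_{\mathbf{Z}}} \le 2(n-1)\max_k|\delta_k| + |\delta_0|$ combined with a union bound only on the single quantity $\max_k|\delta_k|$, giving $\norm{\mathbf{\Delta}_{\mathbf{Z}}} \lesssim 2n \cdot \sqrt{(1/M)\,2\log(2n)}\cdot\sqrt{\sigma^2 M/(n-1)}$, which collapses to the claimed $2n\sqrt{2\log(2n)}/\sqrt M$ once the $\sigma_k^2 \asymp (n-1)/M$ scaling is inserted (and an extra $\lambda$ for $\mathbf{H}$).

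The main obstacle — and the step deserving the most care — is getting the combinatorial/dimensional factors exactly right in passing from scalar concentration of the $n-1$ independent entries to the spectral norm of the Toeplitz matrix, i.e.\ justifying why the bound scales as $n$ rather than $n^{3/2}$ or $\sqrt n$. This is where the random-matrix input of Ref.~\cite{lee2024sampling} enters: rather than the crude entrywise sum, one uses a matrix concentration inequality (matrix Bernstein / matrix Hoeffding applied to the sum $\mathbf{\Delta}_{\mathbf{Z}} = \sum_{k}\sum_{\text{shots}} (\text{rank-}\le2\text{ Toeplitz increment})$), for which the relevant matrix variance parameter is $\norm{\sum \mathbb{E}[(\cdot)^2]}$; a Toeplitz shift matrix has operator norm $1$, so this parameter works out to $\sum_k m_k \sigma_k^2/m_k^2 \cdot n \asymp n^2/M$ (resp.\ $n^2\lambda^2/M$), whose square root is the advertised $n/\sqrt M$, with the $\sqrt{\log(2n)}$ the usual matrix-Bernstein logarithmic dimension factor. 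I would therefore structure the proof as: (i) identify $\mathbf{\Delta}_{\mathbf{Z}}$ as a sum of independent bounded Toeplitz-structured matrix increments, (ii) compute its mean (zero) and matrix variance proxy using~\eqref{eq:variance_S_opt}--\eqref{eq:variance_H_opt} and the allocation~\eqref{eq:kqd_shot_allocation}, (iii) apply matrix Bernstein, and (iv) simplify, absorbing the $(2-1/d)\le2$ and lower-order terms into the $\lesssim$. The per-element variances being equalized by the chosen $\{m_k\}$ is exactly what makes step (ii) produce the tight $n^2/M$ rather than a worse max-over-$k$ expression, so I would flag that the allocation~\eqref{eq:kqd_shot_allocation} is not incidental but is the optimizer that makes the bound tight.
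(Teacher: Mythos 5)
Your final plan---decomposing $\mathbf{\Delta}_{\mathbf{Z}}$ into Toeplitz shift-matrix increments of unit operator norm, computing a matrix variance proxy of order $n^2V_{\mathbf{Z}}^2/M$ (with $V_{\mathbf{S}}=1$, $V_{\mathbf{H}}=\lambda$) from the per-element Hadamard-test variances under the given shot allocation, and then applying Tropp-type matrix concentration to pick up the $\sqrt{2\log(2n)}$ factor---is essentially the paper's own route, which in Appendix~B combines the Gaussian-series expectation bound (Lemma~2) with the Toeplitz variance-statistic computation (Lemma~3) and notes that the stated allocation is precisely the variance-minimizing one. The only cosmetic differences are that you invoke matrix Bernstein on bounded per-shot increments instead of the Gaussian-coefficient series (the paper then converts its expectation bound into the stated $\lesssim$ by appealing to numerically observed concentration), and your opening scalar-Hoeffding/union-bound detour is one you correctly discard yourself.
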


Since the matrix perturbations $\norm{\mathbf{\Delta}_{\mathbf{S}}}$ and $\norm{\mathbf{\Delta}_{\mathbf{H}}}$ are connected to the energy estimation error through Eq.~\eqref{eq:kqd_energy_error}, the total sampling cost to achieve an accuracy of $\epsilon_{0}$ for ground state energy estimation (i.e., $|\tilde{E}_0^{(n_\epsilon)}-E_0|\leq\epsilon_0$) is estimated as
\begin{align}
    M = \order{\frac{n^2\log{(n)}\lambda^2}{\epsilon_{\rm 0}^{2(1+\alpha)}}}. \label{eq:kqd_sampling_scaling}
\end{align}
Equation~\eqref{eq:kqd_sampling_scaling} shows that the sampling cost of KQD is determined by the 1-norm $\lambda$ of a given Hamiltonian. 
Here, we note that Eq.~\eqref{eq:kqd_energy_error} is known to be a loose bound, significantly overestimates the actual energy estimation error~\cite{epperly2022theory}. Furthermore, the precise relation between $\epsilon_0$ and $\norm{\mathbf{\Delta}_\mathbf{H}}$ remains unclear due to unknown constants such as $d_0^{-1}$ and $\alpha$ in Eq.~\eqref{eq:kqd_energy_error}. This ambiguity is indeed reflected in the large discrepancy between $\epsilon_0$ and $\norm{\mathbf{\Delta}_\mathbf{H}}$ in our numerical results presented in Sec.~\ref{sec:numerical}. These characteristics complicate the discussion of the sampling cost's dependence on the energy accuracy $\epsilon_0$. Therefore, in this work, we focus on the sampling cost required to achieve a constant value of the matrix perturbation $\norm{\mathbf{\Delta}_\mathbf{H}}$, which, while related to $\epsilon_0$, is not generally equivalent to it.

\section{Mirror Subspace Diagonalization \label{sec:MSD}}
In this section, we describe our methodology to reduce the sampling cost of the quantum Krylov method compared to the conventional LCU-based approach described in Sec.~\ref{sec:preliminaries}. The starting point of our approach is to use a differential representation of the Hamiltonian
\begin{align}
    \hat{H} &= \left.i\frac{d}{dt}e^{-i\hat{H}t}\right|_{t=0} , \label{eq:H_diff}
\end{align}
where $t$ denotes a parameter for the evolution time. 
Then, we approximate the differentiation based on the degree-$J$ central finite difference formula~\cite{gilyen2019optimizing} 
\begin{align}
    \left.\frac{d}{dt}e^{-i\hat{H}t}\right|_{t=0} 
    = \frac{1}{\delta_t}\sum_{j=-J}^{J}a_j^{(J;1)}e^{-i\hat{H}j\delta_t} + \order{\delta_t^{2J}}, \label{eq:finite_difference}
\end{align}
where $\delta_t>0$ is a time shift parameter and the coefficients $a_j^{(J;1)}$ are given by
\begin{align}
    a_j^{(J;1)}&= (1-\delta_{j0})\dfrac{(-1)^{j-1}}{j}\dfrac{(J!)^2}{(J-\abs{j})! (J+\abs{j})!}. \label{eq:finite_difference_coeff}
\end{align}
The finite difference error can be arbitrarily small by increasing the degree $J$ or decreasing $\delta_t$, scaling as $\order{\delta_t^{2J}}$. The reason we adopted the central finite difference scheme is to maintain the Hermiticity of the Hamiltonian operator $\hat{{H}}=\hat{{H}}^{\dag}$. 
Based on the above finite difference approximation of the Hamiltonian $\hat{H}$, the projected Hamiltonian matrix $\mathbf{H}$ of KQD is approximated as
\begin{align}
    \mathbf{H}_{k'k} &= \bra{\phi_0}\hat{H
    }e^{-i\hat{H}(k-k')\tau}\ket{\phi_0}, \nonumber\\
    & \simeq \frac{i}{\delta_t}\sum_{j=-J}^{J}a_j^{(J;1)}\bra{\phi_0}e^{-i\hat{H}[(k-k')\tau+j\delta_t]}\ket{\phi_0}. \label{eq:msd_H_mat}
\end{align}
The idea of applying the finite difference representation of the Hamiltonian to KQD was originally proposed in Ref.~\cite {bespalova2021hamiltonian}. 
However, Ref.~\cite {bespalova2021hamiltonian} did not investigate the critical relationship between the finite difference scheme and the theoretical sampling cost, nor did it comprehensively compare its performance against the conventional LCU-based KQD approach. Consequently, the true effectiveness and potential advantages of this finite difference approach remained ambiguous. In contrast, our work systematically develops a novel finite-difference-based KQD algorithm, MSD, that demonstrably approaches the theoretical lower bound for sampling cost, providing a detailed and favorable comparison to existing methods.

In the following, we introduce {\it time-shift optimization} and {\it energy level shift} techniques to maximize the sampling efficiency of the finite difference formulation in Eq.~\eqref{eq:msd_H_mat} (see Fig.~\ref{fig:msd} for a schematic illustration). We call such a finite difference-based quantum Krylov method, whose performance is optimized by the above techniques, MSD. Then, we demonstrate the superiority of MSD compared to the conventional LCU-based KQD. 

\subsection{Error analysis \label{subsec:msd_error_analysis}}
To optimize the performance of the finite difference scheme, we first analyze its error behavior as in Sec.~\ref{subsec:sampling_error}. Specifically, we elucidate the upper bound of the matrix perturbation $\norm{\mathbf{\Delta}_{\mathbf{H}}}$ under the approximation of Eq.~\eqref{eq:msd_H_mat}. 
In the conventional KQD algorithms, the matrix perturbation arises from hardware noise of the quantum processor, algorithmic errors such as Trotter error, and statistical error due to finite sampling. In addition to these noise sources, the finite difference error contributes to the matrix perturbation when we sample the projected Hamiltonian matrix $\mathbf{H}$ according to Eq.~\eqref{eq:msd_H_mat}. Since hardware noise and algorithmic error can be suppressed with the growth of early-FTQC devices, we focus on the remaining two factors, i.e., the sampling error and finite difference error. Then, we describe the matrix perturbation as the following triangular inequality
\begin{align}
    \norm{\mathbf{\Delta}_{\mathbf{H}}} :=  \norm{\mathbf{\Delta}_{\mathbf{H}}^{(\rm s)}+\mathbf{\Delta}_{\mathbf{H}}^{(\rm d)}} \leq \norm{\mathbf{\Delta}_{\mathbf{H}}^{(\rm s)}} + \norm{\mathbf{\Delta}_{\mathbf{H}}^{(\rm d)}}, \label{eq:msd_matrix_perturbation_ineq}
\end{align}
where $\norm{\mathbf{\Delta}_{\mathbf{H}}^{(\rm s)}}$ and $\norm{\mathbf{\Delta}_{\mathbf{H}}^{(\rm d)}}$ denote the sampling and finite difference error contributions, respectively.  

The sampling error is estimated based on a discussion similar to Sec.~\ref{subsec:sampling_error}. Suppose that each propagator matrix element in Eq.~\eqref{eq:msd_H_mat}
\begin{align}
    \mathbf{U}_{k'k}^{(j)}:=\bra{\phi_0}e^{-i\hat{H}[(k-k')\tau+j\delta_t]}\ket{\phi_0}, \label{eq:msd_U_mat}
\end{align}
is measured by the Hadamard test sampling described in Sec.~\ref{subsec:measurement}. Note that this propagator matrix includes the overlap matrix as $\mathbf{U}_{k'k}^{(0)}=\mathbf{S}_{k'k}$. We also impose the Toeplitz-Hermitian structure as $\tilde{\mathbf{H}}_{k'k}=\tilde{\mathbf{H}}_{0,k-k'}$ and $\tilde{\mathbf{H}}_{k'k}=\tilde{\mathbf{H}}_{kk'}^*$ based on Eq.~\eqref{eq:msd_H_mat}, leading to the following requirement for the perturbed propagator matrix elements $\tilde{\mathbf{U}}_{k'k}^{(j)}$
\begin{align}
    \tilde{\mathbf{U}}_{k'k}^{(j)} &= \tilde{\mathbf{U}}_{0,k-k'}^{(j)}, \label{eq:msd_U_toeplitz} \\
    \tilde{\mathbf{U}}_{k'k}^{(j)} &= \tilde{\mathbf{U}}_{kk'}^{(-j)}. \label{eq:msd_U_hermite}
\end{align}
Under these requirements, $n((2J+1)-J)=n(J+1)$ independent matrix elements should be measured on a quantum processor. Here, we only need to perform the measurement for $j\geq0$ for the diagonal elements $\tilde{\mathbf{U}}_{00}^{(j)}=\tilde{\mathbf{U}}_{kk}^{(j)}$ owing to the Hermiticity condition~\eqref{eq:msd_U_hermite}. Then, by repeating the analysis performed in Sec.~\ref{subsec:sampling_error}, the sampling variance is expressed as
\begin{align}
    \mathrm{Var}[\tilde{\mathbf{H}}_{0k},\vec{m}_k] &\simeq 
    \sum_{j\in \mathcal{J}_k}\frac{2(1+\delta_{0k})|a_j^{(J;1)}|^2}{\delta_t^2 m_{kj}}\left(2-\frac{1}{d}\right), \label{eq:variance_H_diff}
\end{align}
where 
\begin{align}
    \mathcal{J}_k:= 
    \begin{cases}
        \{1,\cdots,J\},  & k=0 \\
        \{-J,-J+1,\cdots,J-1,J\},  & k>0
    \end{cases}
\end{align}
and $\vec{m}_k=\{m_{kj}^{\rm(r)},m_{kj}^{\rm(i)}\}_{j=-J}^{J}=\{m_{kj}/2,m_{kj}/2\}_{j=-J}^{J}$ denotes the shot allocation for each component as in Sec.~\ref{subsec:sampling_error}. Here, we have defined $m_{0j}:=0$ for $j<0$ considering the Hermiticity condition~\eqref{eq:msd_U_hermite}. Applying the Lagrange multiplier method, we obtain the optimized shot allocation
\begin{align}
    m_{kj}^{(\rm opt)}&=\frac{(1+\delta_{0k})|a_j^{(J;1)}|}{\norm{\vec{a}^{(J;1)}}_1}m_k \quad (j\in \mathcal{J}_k), \label{eq:shot_diff} 
\end{align}
and the resulting sampling variance
\begin{align}
    \mathrm{Var}[\tilde{\mathbf{H}}_{0k},\vec{m}_k^{(\rm opt)}] &\simeq 
    \begin{cases}
        \dfrac{\norm{\vec{a}^{(J;1)}}_1^2}{\delta_t^2 m_k}\left(2-\dfrac{1}{d}\right), & k=0 \\
        \dfrac{2\norm{\vec{a}^{(J;1)}}_1^2}{\delta_t^2 m_k}\left(2-\dfrac{1}{d}\right), & k >0
    \end{cases} \label{eq:variance_H_opt_diff}
\end{align}
where $\vec{a}^{(J;1)}:=\{a_{-J}^{(J;1)}, \cdots, a_{J}^{(J;1)}\}$ denotes the vector of the finite difference coefficients and $\norm{\vec{a}^{(J;1)}}_1:=\sum_{j=-J}^{J}|a_{j}^{(J;1)}|$ is the corresponding 1-norm. Then, adopting a random matrix theory similar to Ref.~\cite{lee2024sampling}, the sampling error contribution to the matrix perturbation $\norm{\mathbf{\Delta}_{\mathbf{H}}}$ is upper bounded as
\begin{align}
    \norm{\mathbf{\Delta}_{\mathbf{H}}^{(\rm s)}} &\lesssim \frac{2n\norm{\vec{a}^{(J;1)}}_1\sqrt{2\log{(2n)}}}{\delta_t \sqrt{M}},\label{eq:sampling_perturbation_H_diff}
\end{align}
where $M=\sum_{k=0}^{n-1} m_k$ is the total number of shots with the shot allocation $m_k$ given by Eq.~\eqref{eq:kqd_shot_allocation}. The detailed derivation is presented in Appendix~\ref{append:sampling_error_derivation}. 

The finite difference error is evaluated by considering the approximation error of the degree-$J$ central finite difference formula. Specifically, based on Lemma 22 of Ref.~\cite{gilyen2019optimizing} (see Appendix~\ref{append:fd_error_derivation} for details), we obtain 
\begin{align}
    \abs{(\mathbf{\Delta}_{\mathbf{H}}^{(\rm d)})_{k'k}} &\leq \sum_{j=-J}^{J}\abs{a_j^{(J;1)}j^{2J+1}}\frac{\norm{\hat{H}}_{\mathcal{K}}^{2J+1}}{(2J+1)!} \delta_t^{2J} , \label{eq:H_mat_fd_error}
\end{align}
where 
\begin{align}
    \norm{\hat{H}}_{\mathcal{K}} := \max_{\psi\in\mathcal{K}_{\infty}}{|\bra{\psi}\hat{H}\ket{\psi}|}, \label{eq:krylov_norm}
\end{align}
denotes the maximum absolute eigenvalue (i.e., spectral norm) within the ideal order-$\infty$ Krylov subspace $\mathcal{K}_{\infty}$. For electronic structure problems, where the Hamiltonian preserves symmetries and possesses conserved quantities (e.g., particle number, total spin), $\norm{\hat{H}}_{\mathcal{K}}$ represents the spectral norm within the symmetry sector to which the initial state $\ket{\phi_0}$ belongs. 
Adopting the relation of the spectral norm $\norm{\cdots}$ and the Frobenius norm $\norm{\cdots}_{\rm F}$, the upper bound of the matrix perturbation $\norm{\mathbf{\Delta}_{\mathbf{H}}^{(\rm d)}}$ is derived as 
\begin{align}
    \norm{\mathbf{\Delta}_{\mathbf{H}}^{(\rm d)}} &\leq \norm{\mathbf{\Delta}_{\mathbf{H}}^{(\rm d)}}_{\rm F}=\sqrt{\sum_{k'k}\abs{(\mathbf{\Delta}_{\mathbf{H}}^{(\rm d)})_{k'k}}^2} \nonumber \\
    &\leq n\sum_{j=-J}^{J}\abs{a_j^{(J;1)}j^{2J+1}}\frac{\norm{\hat{H}}_{\mathcal{K}}^{2J+1}}{(2J+1)!} \delta_t^{2J}. \label{eq:H_norm_fd_error}
\end{align}
Based on the above inequalities and the triangular inequality~\eqref{eq:msd_matrix_perturbation_ineq}, the total matrix perturbation is upper bounded as
\begin{align}
    \norm{\mathbf{\Delta}_{\mathbf{H}}} &\lesssim \frac{2n\norm{\vec{a}^{(J;1)}}_1\sqrt{2\log{(2n)}}}{\delta_t \sqrt{M}} \nonumber\\
    &+ n\sum_{j=-J}^{J}\abs{a_j^{(J;1)}j^{2J+1}}\frac{\norm{\hat{H}}_{\mathcal{K}}^{2J+1}}{(2J+1)!} \delta_t^{2J}. \label{eq:msd_matrix_perturbation}
\end{align}
In the following, we introduce techniques to minimize this matrix perturbation. 

\subsection{Time-shift optimization \label{subsec:parameter_optimization}}
Here, we introduce an optimization technique for the time-shift parameter $\delta_t$. As seen in Eq.~\eqref{eq:msd_matrix_perturbation}, there is a trade-off between the sampling and finite difference errors for the parameter $\delta_t$. If we increase $\delta_t$, the sampling variance and the resulting sampling error $\norm{\mathbf{\Delta}_{\mathbf{H}}^{(\rm s)}}$ is reduced. On the other hand, $\delta_t$ needs to be a small value to suppress the finite difference error $\norm{\mathbf{\Delta}_{\mathbf{H}}^{(\rm d)}}$. Therefore, an optimal $\delta_t$ must be determined to minimize the combined effect of the sampling and finite difference perturbations. As a consequence of this optimization of $\delta_t$, which can be solved analytically via a simple convex optimization, we arrive at the following theorem. 
\begin{thm}[Optimized sampling cost of MSD]\label{thm:msd_optimization}
    Suppose that we define the time shift parameter $\delta_t$ as
    \begin{align}
        \delta_t^{(\rm opt)} &:= \left(\frac{\alpha_{n,J}}{2J\beta_{n,J}\norm{\hat{H}}_{\mathcal{K}}^{2J+1}\sqrt{M}}\right)^{\frac{1}{2J+1}}, \label{eq:msd_optimal_delta}
    \end{align}
    where 
    \begin{align}
         \alpha_{n,J} &:= 2n\sqrt{2\log{(2n)}}\norm{\vec{a}^{(J;1)}}_1   , \\
        \beta_{n,J} &:= \frac{n}{(2J+1)!}\sum_{j=-J}^{J}\abs{a_j^{(J;1)}j^{2J+1}} . 
    \end{align}
    Then, the total number of shots $M$ to achieve $\norm{\mathbf{\Delta}_{\mathbf{H}}}\leq \eta_{\mathbf{H}}$ for a given target accuracy $\eta_{\mathbf{H}}>0$ is estimated as 
    \begin{align}
        M \gtrsim \frac{(2J+1)^{2+\frac{1}{J}}\alpha_{n,J}^2\beta_{n,J}^{\frac{1}{J}}\norm{\hat{H}}_{\mathcal{K}}^{2+\frac{1}{J}}}{(2J)^2\eta_{\mathbf{H}}^{2+\frac{1}{J}}}. \label{eq:msd_sampling_cost_formal}
    \end{align}
\end{thm}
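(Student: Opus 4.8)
The plan is to treat the bound~\eqref{eq:msd_matrix_perturbation} as a function of a single positive variable $\delta_t$ and minimize it. Writing the right-hand side as $f(\delta_t) = A/\delta_t + B\,\delta_t^{2J}$ with $A := \alpha_{n,J}/\sqrt{M}$ and $B := \beta_{n,J}\norm{\hat{H}}_{\mathcal{K}}^{2J+1}$, the first step is to observe that $f$ is convex on $(0,\infty)$ — the term $A/\delta_t$ is convex and $B\,\delta_t^{2J}$ is convex for $2J\geq 1$ — so the stationary point $f'(\delta_t)=0$ gives the global minimum. Solving $-A/\delta_t^2 + 2JB\,\delta_t^{2J-1}=0$ yields $\delta_t^{2J+1} = A/(2JB)$, which is exactly~\eqref{eq:msd_optimal_delta} once $A$ and $B$ are substituted back; I would then record $\delta_t^{(\rm opt)} = \left(A/(2JB)\right)^{1/(2J+1)}$.

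Next I would evaluate $f$ at this optimum. A clean way is to note that at a stationary point $A/\delta_t = 2JB\,\delta_t^{2J}$, so $f(\delta_t^{(\rm opt)}) = (2J+1)\,B\,(\delta_t^{(\rm opt)})^{2J} = (2J+1)\,B\,\left(A/(2JB)\right)^{2J/(2J+1)}$. The third step is to impose the target-accuracy requirement $f(\delta_t^{(\rm opt)}) \leq \eta_{\mathbf{H}}$ and solve for $M$. Since $A \propto M^{-1/2}$, we have $f(\delta_t^{(\rm opt)}) \propto A^{2J/(2J+1)} \propto M^{-J/(2J+1)}$, so the inequality $f(\delta_t^{(\rm opt)})\leq\eta_{\mathbf H}$ becomes a lower bound on $M$ of the form $M \gtrsim (\text{const})\,\eta_{\mathbf H}^{-(2J+1)/J}$. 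Carefully collecting the constants: raising $(2J+1)B\,(A/(2JB))^{2J/(2J+1)} \leq \eta_{\mathbf H}$ to the power $(2J+1)/J$ and isolating $M$ via $A = \alpha_{n,J}M^{-1/2}$ gives $M \gtrsim (2J+1)^{(2J+1)/J}(2J)^{-2}\alpha_{n,J}^{2}\,B^{1/J}\eta_{\mathbf H}^{-(2J+1)/J}$; writing $(2J+1)/J = 2 + 1/J$ and $B^{1/J} = \beta_{n,J}^{1/J}\norm{\hat H}_{\mathcal K}^{(2J+1)/J} = \beta_{n,J}^{1/J}\norm{\hat H}_{\mathcal K}^{2+1/J}$ reproduces~\eqref{eq:msd_sampling_cost_formal} exactly.

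The only genuinely delicate point — and the one I would be most careful about — is the bookkeeping of exponents in the last step: it is easy to mismatch $2J/(2J+1)$ versus $(2J+1)/(2J)$ or drop a factor of $2J$ when clearing the stationarity relation, and the final form deliberately packages everything in terms of $2+1/J$, so one must check that $(2J+1)^{2+1/J}$, $(2J)^{-2}$, $\beta_{n,J}^{1/J}$ and $\norm{\hat H}_{\mathcal K}^{2+1/J}$ all emerge with the right powers. A secondary subtlety worth a sentence is that the theorem statement uses $\gtrsim$ and $\lesssim$: strictly the derivation shows that choosing $M$ equal to the stated right-hand side makes the \emph{upper bound}~\eqref{eq:msd_matrix_perturbation} on $\norm{\mathbf\Delta_{\mathbf H}}$ at most $\eta_{\mathbf H}$, so~\eqref{eq:msd_sampling_cost_formal} is a sufficient shot count (up to the suppressed constants hidden in the $\lesssim$ of~\eqref{eq:msd_matrix_perturbation}), which is precisely the sense intended. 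Everything else is the routine single-variable calculus and algebra sketched above.
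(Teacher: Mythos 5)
Your proposal is correct and follows essentially the same route as the paper: minimize the right-hand side of Eq.~\eqref{eq:msd_matrix_perturbation} over $\delta_t$ by setting the derivative to zero (the paper does exactly this, without spelling out convexity), substitute the optimum back, and solve $f(\delta_t^{(\rm opt)})\leq\eta_{\mathbf{H}}$ for $M$. Your exponent bookkeeping checks out and reproduces Eq.~\eqref{eq:msd_sampling_cost_formal} exactly, and your remark that the bound is a sufficient shot count in the $\lesssim$ sense matches the paper's use of $\norm{\mathbf{\Delta}_{\mathbf{H}}}\lesssim f(\delta_t^{(\rm opt)})\leq\eta_{\mathbf{H}}$.
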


\begin{proof}
    We aim to minimize the right-hand side of Eq.~\eqref{eq:msd_matrix_perturbation}, which is expressed as
    \begin{align}
        \frac{\alpha_{n,J}}{\delta_t\sqrt{M}} + \beta_{n,J}\norm{\hat{H}}_{\mathcal{K}}^{2J+1}\delta_t^{2J} =: f(\delta_t).
    \end{align}
    To do this, we take a derivative of $f(\delta_t)$ with respect to $\delta_t$ as
    \begin{align}
        f'(\delta_t) = -\frac{\alpha_{n,J}}{\delta_t^2\sqrt{M}} + 2J\beta_{n,J}\norm{\hat{H}}_{\mathcal{K}}^{2J+1}\delta_t^{2J-1}.
    \end{align}
    The minimum of $f(\delta_t)$ is obtained at an optimal $\delta_t$ satisfying $f'(\delta_t)=0$, which is written as Eq~\eqref{eq:msd_optimal_delta}. Inserting this optimal value $\delta_t^{(\rm opt)}$ into the expression of $f(\delta_t)$, we obtain Eq.~\eqref{eq:msd_sampling_cost_formal} from the requirement $\norm{\mathbf{\Delta}_{\mathbf{H}}}\lesssim f(\delta_t^{(\rm opt)}) \leq \eta_{\mathbf{H}}$. 
\end{proof}

Here, we note that the coefficients $\{a_{-J}^{J;1}, \cdots, a_{J}^{J;1}\}$ satisfy the following inequalities~\cite{gilyen2019optimizing}:
\begin{align}
    \norm{\vec{a}^{(J;1)}}_1 = \sum_{j=-J}^{J}|a_j^{(J;1)}| \leq  \sum_{j=1}^{J}\frac{2}{j} &\leq 2(\log{(J)}+1), \label{eq:a_norm_bound} \\
    \frac{1}{(2J+1)!}\sum_{j=-J}^{J}\abs{a_j^{(J;1)}j^{2J+1}} &\leq e^{-\frac{J}{2}}. 
\end{align}
Using these results, asymptotical behavior of the sampling cost~\eqref{eq:msd_sampling_cost_formal} for a sufficiently large $J$ ($J\gg1$) can be described as 
\begin{align}
    M \gtrsim \frac{32e^{-\frac{1}{2}}n^{2+o(1)}\log{(n)}\log^2{(J)}\norm{\hat{H}}_{\mathcal{K}}^{2+o(1)}}{\eta_{\mathbf{H}}^{2+o(1)}}. \label{eq:msd_sampling_cost_asymp}
\end{align}

\subsection{Energy level shift \label{subsec:energy_shift}}
Equation~\eqref{eq:msd_sampling_cost_asymp} shows that the system size dependence of the sampling cost is governed by the restricted spectral norm $\norm{\hat{H}}_{\mathcal{K}}$. In this section, we introduce a technique to reduce this sampling cost by reducing $\norm{\hat{H}}_{\mathcal{K}}$. 

Specifically, we consider a case where the many-body Hamiltonian $\hat{H}$ preserves some symmetries and possesses corresponding conserved quantities. For instance, the electronic structure Hamiltonian is generally U(1) symmetric and preserves the total number of electrons. In such a scenario, the Hamiltonian possesses a block-diagonal structure as
\begin{align}
    \hat{H} = \bigoplus_{\bm{Q}}\hat{H}_{\bm{Q}}, \label{eq:H_block_decomp}
\end{align}
where $\hat{H}_{\bm{Q}}$ denotes a block of a symmetry sector characterized by the corresponding quantum number $\bm{Q}$. A typical electronic structure Hamiltonian is generally characterized by three symmetries: $(\hat{N}_e, \hat{S}^2, \hat{S}_z)$, where $\hat{N}_e$ is the total particle number operator, $\hat{S}^2$ is the total spin operator, and $\hat{S}_z$ is the $z$-component of the spin projection. Then, the corresponding quantum number is $\bm{Q}=(N_e,S,S_z)$, where $N_e$ is the total number of electrons, while $S$ and $S_z$ denote the total spin and its projection. 

\begin{figure}[tbp]
  \centering
  \includegraphics[width=0.49\textwidth]{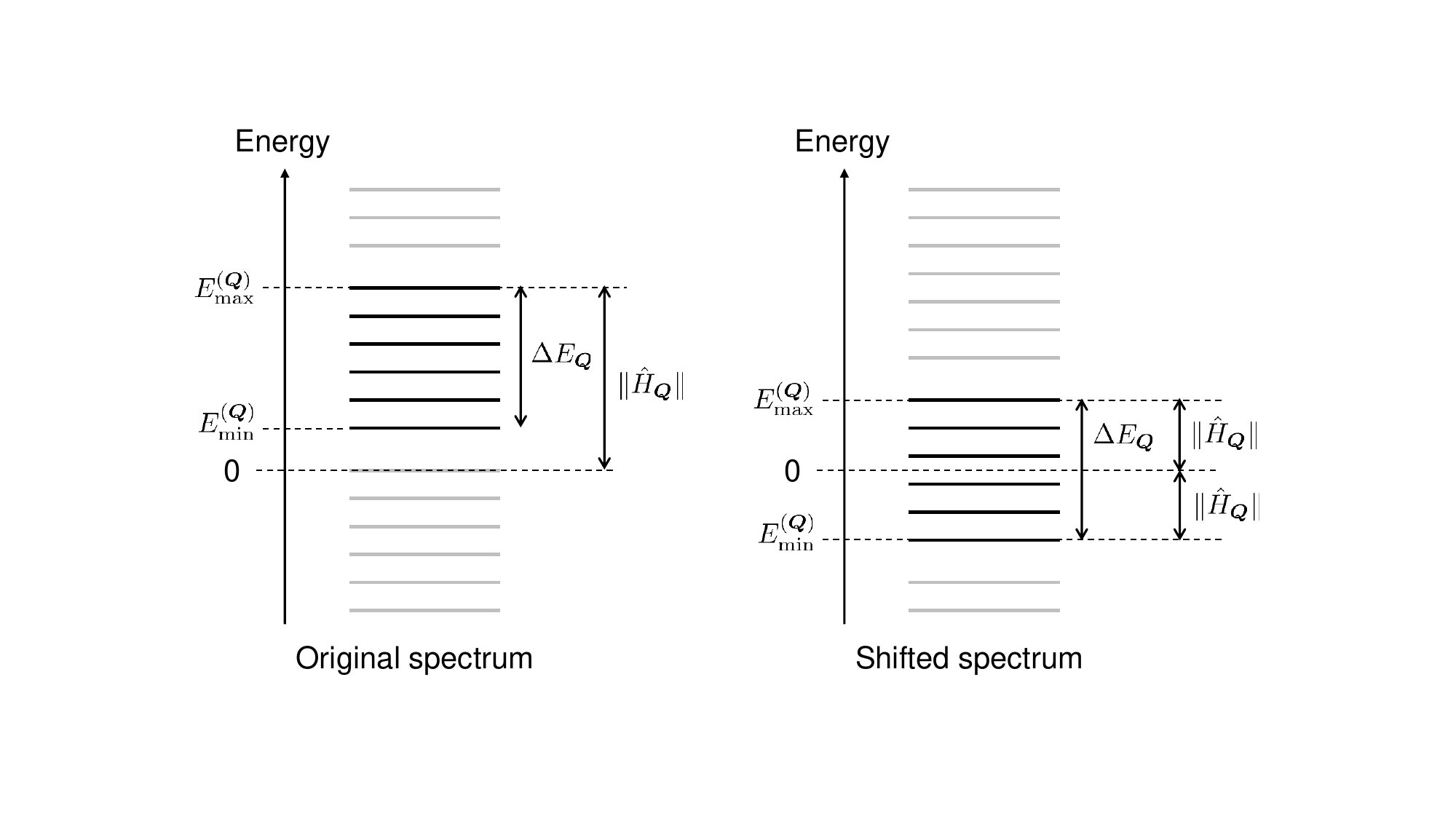}
  \caption{Illustration of the energy level shift technique to reduce the spectral norm $\norm{\hat{H}_{\bm{Q}}}$ for a $\bm{Q}$-symmetry sector. Energy levels belonging to the $\bm{Q}$-symmetry sector are depicted as black lines, while energy levels for other symmetry sectors are depicted as gray lines.
  By shifting the energy levels so that the center of the $\hat{H}_{\bm{Q}}$ spectrum is at the origin, the spectral norm $\norm{\hat{H}_{\bm{Q}}}$ is minimized to $\Delta{E}_{\bm{Q}}/2$. }
  \label{fig:spectral_shift}
\end{figure}

If the initial state $\ket{\phi_0}$ belongs to a specific $\bm{Q}$-symmetry sector, $\norm{\hat{H}}_{\mathcal{K}}$ represents the maximum absolute eigenvalue (spectral norm) of $\hat{H}_{\bm{Q}}$ according to Eq.~\eqref{eq:H_block_decomp}, i.e., $\norm{\hat{H}}_{\mathcal{K}}=\norm{\hat{H}_{\bm{Q}}}$. This is schematically illustrated in Fig.~\ref{fig:spectral_shift}, indicating that the value of $\norm{\hat{H}_{\bm{Q}}}$ depends on the choice of the origin of the energy spectrum. Specifically, the restricted spectral norm $\norm{\hat{H}_{\bm{Q}}}$ is minimized by performing the following energy shift
\begin{align}
    \hat{H} &\mapsto \hat{H} - \frac{E_{\rm min}^{(\bm{Q})} + E_{\rm max}^{(\bm{Q})}}{2} , \label{eq:energy_shift}
\end{align}
where $E_{\rm max}^{(\bm{Q})}:=\max_{j\in\bm{Q}}{E_{j}}$ and $E_{\rm min}^{(\bm{Q})}:=\min_{j\in\bm{Q}}{E_{j}}$ are the maximum and minimum energies in the $\bm{Q}$-symmetry sector, and $\Delta{E}_{\bm{Q}}:=E_{\rm max}^{(\bm{Q})}-E_{\rm min}^{(\bm{Q})}$ is the corresponding spectral range. This energy shift leads to the minimum spectral norm $\norm{\hat{H}_{\bm{Q}}}=\Delta{E}_{\bm{Q}}/2$ as illustrated in Fig.~\ref{fig:spectral_shift}. Then, the optimal sampling cost~\eqref{eq:msd_sampling_cost_asymp} is rewritten as 
\begin{align}
    M \gtrsim \frac{5n^{2+o(1)}\log{(n)}\log^2{(J)}\Delta{E}_{\bm{Q}}^{2+o(1)}}{\eta_{\mathbf{H}}^{2+o(1)}}, \label{eq:msd_sampling_cost_optimized}
\end{align}
where we have used $e^{-\frac{1}{2}}\simeq 0.61$. 

Performing the energy shifting in Eq.~\eqref{eq:energy_shift} requires values of $E_{\rm min}^{(\bm{Q})}$ and $E_{\rm max}^{(\bm{Q})}$. 
While exact eigenvalues are generally unfeasible to obtain, this shift can be efficiently implemented using approximate values readily derived from Hartree-Fock (HF) level calculations. Specifically, these approximate $E_{\rm min}^{(\bm{Q})}$ and $E_{\rm max}^{(\bm{Q})}$ values can be acquired via an HF-level orbital optimization method, as proposed in Ref.\cite{cortes2024spectral} (further details are available in Appendix~\ref{append:orbital_optimization}).  
In this approach, the HF-level approximate minimum eigenvalue $E_{\rm min}^{(\mathrm{HF},\bm{Q})}$ is larger than the exact value $E_{\rm min}^{(\bm{Q})}$ since the correlation energy is not fully included. This means that we can denote the approximate minimum eigenvalue as $E_{\rm min}^{(\mathrm{HF},\bm{Q})}=E_{\rm min}^{(\bm{Q})}+\varepsilon_{\rm corr}^{(\bm{Q})}$, where $\varepsilon_{\rm corr}^{(\bm{Q})}>0$ describes the correction due to the electron correlation. 
The approximate maximum eigenvalue $E_{\rm max}^{(\mathrm{HF},\bm{Q})}$ is obtained through the same optimization by taking the negative of the one- and two-electron integrals. Thus, the resulting $E_{\rm max}^{(\mathrm{HF},\bm{Q})}$ is smaller than the exact value $E_{\rm max}^{(\bm{Q})}$, leading to $E_{\rm max}^{(\mathrm{HF},\bm{Q})}=E_{\rm max}^{(\bm{Q})}-\tilde{\varepsilon}_{\rm corr}^{(\bm{Q})}$ with $\tilde{\varepsilon}_{\rm corr}^{(\bm{Q})}>0$ being the corresponding correlation energy. 
Consequently, the error of the energy shift can be quantified by the following metric
\begin{align}
    \delta_{\rm shift} &:= \frac{\abs{(E_{\rm min}^{(\mathrm{HF},\bm{Q})}+E_{\rm max}^{(\mathrm{HF},\bm{Q})}) - (E_{\rm min}^{(\bm{Q})}+E_{\rm max}^{(\bm{Q})})}}{\abs{E_{\rm min}^{(\mathrm{HF},\bm{Q})}+E_{\rm max}^{(\mathrm{HF},\bm{Q})}}} \nonumber\\
    &= \frac{\abs{\varepsilon_{\rm corr}^{(\bm{Q})}-\tilde{\varepsilon}_{\rm corr}^{(\bm{Q})}}}{\abs{E_{\rm min}^{(\mathrm{HF},\bm{Q})}+E_{\rm max}^{(\mathrm{HF},\bm{Q})}}}. \label{eq:energy_shift_error}
\end{align}
Since the HF energy typically accounts for over 99\% of the total energy of molecules~\cite{jensen2017introduction}, the energy shift error $\delta_{\rm shift}$ in Eq.~\eqref{eq:energy_shift_error} is generally expected to be within a few percent. Indeed, Table~\ref{tab:molecule_info} in Appendix~\ref{append:molecule} indicates that $\delta_{\rm shift}\lesssim 3\%$ for some benchmark molecular models. Therefore, the energy shift in Eq.~\eqref{eq:energy_shift} can be accurately conducted using the easily computable HF-level approximate eigenvalues with at most a few percent errors.

\subsection{Sampling cost reduction compared to KQD \label{subsec:KQD_vs_MSD}}
Here, we compare the sampling cost of KQD and MSD. Based on the result in Sec.~\ref{subsec:sampling_error} (Eq.~\eqref{eq:sampling_perturbation_H}), the total number of shots required to achieve $\norm{\mathbf{\Delta}_{\mathbf{H}}}\leq \eta_{\mathbf{H}}$ in the conventional LCU-based KQD is given by 
\begin{align}
    M \gtrsim \frac{8n^2\log{(n)}\lambda^2}{\eta_{\mathbf{H}}^2}. \label{eq:kqd_sampling_cost}
\end{align}
This indicates that we can systematically reduce the sampling cost of KQD by reducing the 1-norm $\lambda$. Such 1-norm reduction is possible by adopting techniques like the orbital optimization~\cite{koridon2021orbital} and block-invariant symmetry shift (BLISS)~\cite{loaiza2023bliss,patel2024blisslp}, which optimize the LCU decomposition of the Hamiltonian to reduce the 1-norm $\lambda$ while maintaining the spectrum structure (see Appendix~\ref{append:1norm_reduction} for details). 
The 1-norm for a general LCU-decomposed Hamiltonian has a lower bound given by $\lambda \ge \Delta E / 2$~\cite{loaiza2023lcu}, where $\Delta E$ represents the spectral range across the entire Hilbert space. For symmetry-preserving Hamiltonians, such as those encountered in electronic structure problems, the 1-norm can be significantly reduced using the symmetry-shift or BLISS~\cite{loaiza2023bliss,patel2024blisslp}. In such cases, the true lower bound is tightened to $\lambda \ge \Delta E_{\bm{Q}} / 2$~\cite{cortes2024spectral,loaiza2023bliss}.
Therefore, the lowest sampling cost, which is in principle achievable by the 1-norm reduction techniques, is described as
\begin{align}
    M_{\rm lowest} = \frac{2n^2\log{(2n)}\Delta{E}_{\bm{Q}}^2}{\eta_{\mathbf{H}}^2}. \label{eq:sampling_lower_bound}
\end{align}
However, achieving this sampling lower bound through the 1-norm reduction in the LCU-based KQD is in practice difficult. Indeed, various studies have shown that the 1-norm reduction methods typically realize one order of magnitude reduction of $\lambda$ at best~\cite{koridon2021orbital,patel2024blisslp}, while the spectral range $\Delta{E}_{\bm{Q}}$ can be more than one order of magnitude smaller than the 1-norm~\cite{cortes2024spectral}. Such a tendency can also be seen from our benchmark data in Appendix~\ref{append:molecule} (Table~\ref{tab:molecule_info}). 

On the other hand, Eq.~\eqref{eq:msd_sampling_cost_optimized} indicates that MSD achieves a sampling cost whose scaling is almost equal to the sampling lower bound~\eqref{eq:sampling_lower_bound} up to a constant factor. 
The discrepancy from the sampling lower bound is characterized by the factor $\log{(J)}$. To avoid a significant increase in the simulation cost, which is detailed in Sec.~\ref{subsec:simulation_cost}, we here set $J=n$. 
To circumvent the classical diagonalization of an exponentially large Hamiltonian matrix, the Krylov order $n$ in KQD is generally chosen as $n=\order{\mathrm{poly}(N_q)}$, where $N_q$ is the number of qubits. Consequently, the sampling cost of MSD is at most on the order of $\order{\mathrm{polylog}(N_q)}$ times that of the sampling lower bound $M_{\rm lowest}$ in Eq.~\eqref{eq:sampling_lower_bound}. Therefore, MSD can be considered a quantum Krylov algorithm that achieves a {\it near-optimal} sampling cost up to a logarithmic factor. 

Based on the above discussion, MSD realizes a significant sampling cost reduction from KQD, especially when the 1-norm $\lambda$ is much smaller than the spectral range $\Delta{E}_{\bm{Q}}$ and hence $\log{(n)}\Delta{E}_{\bm{Q}} \ll \lambda$. In the molecular Hamiltonian with $N_{\rm orb}$ spatial orbitals, the 1-norm $\lambda$ typically scales as $\lambda=\order{N_{\rm orb}^3}$ for the Pauli LCU, while the spectral range scales as $\order{N^x}$ with $x\lesssim 1$~\cite{cortes2024spectral}. Indeed, our benchmark data in Appendix~\ref{append:molecule} demonstrates that $\lambda=\order{N_{\rm orb}^{2.52}}$ and $\Delta{E}_{\bm{Q}} =\order{N_{\rm orb}^{0.93}}$ on average across various molecules. 
This indicates that MSD is more efficient than KQD for a molecular Hamiltonian described by a large number of molecular orbitals. This condition is, for example, met in high-accuracy simulations of molecules using large basis sets incorporating electronic correlations~\cite{jensen2017introduction}. 
\begin{figure}[tbp]
  \centering
  \includegraphics[width=0.48\textwidth]{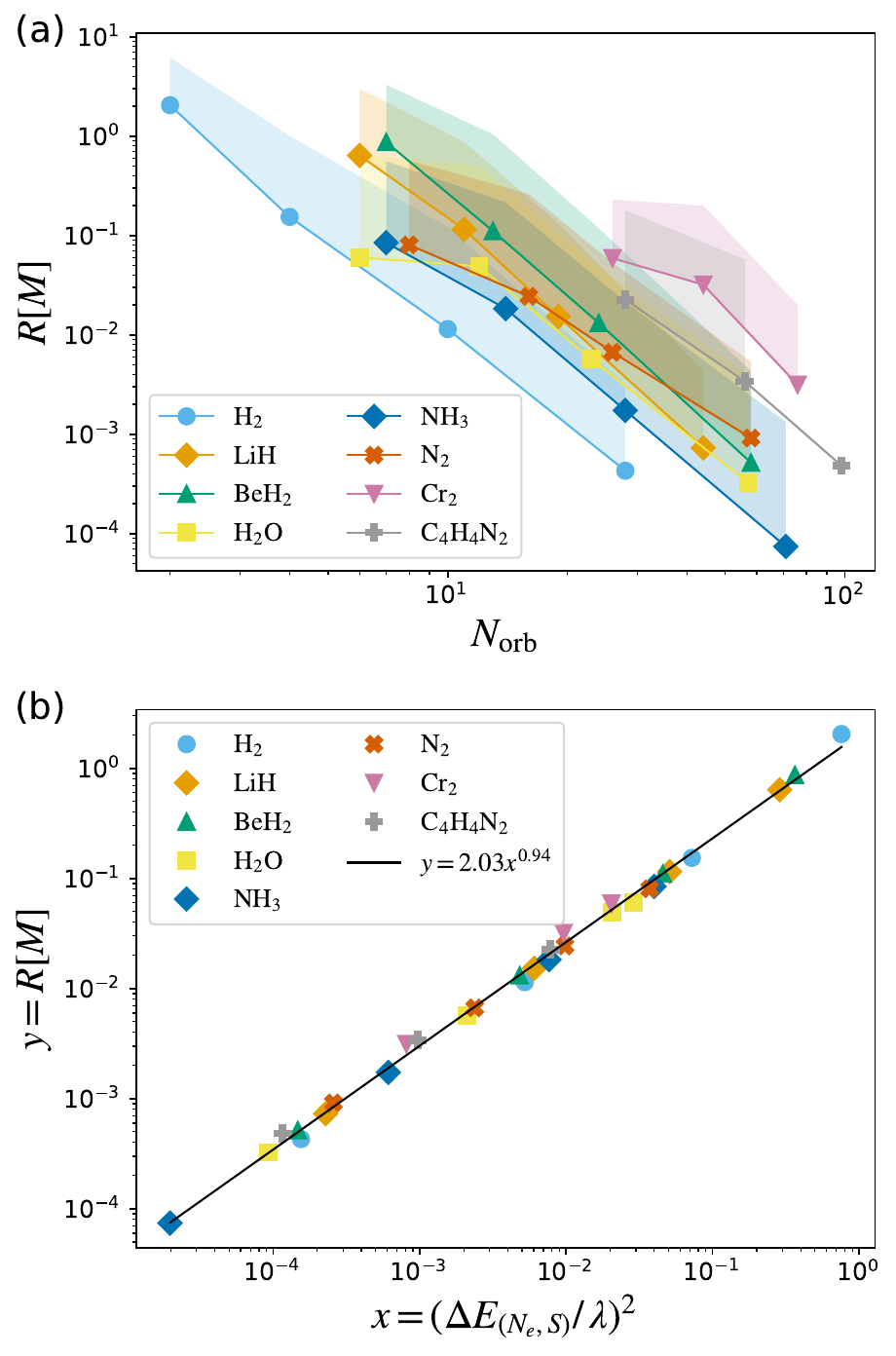}
  \caption{Numerical comparison of the sampling cost between KQD and MSD for various molecules. The number of spatial orbitals $N_{\rm orb}$ is increased for the same molecule by enlarging the basis set in the following order: STO-3G, 6-31G, cc-pVDZ, and cc-pVTZ. Dependence of the sampling cost ratio $R[M]:=M_{\rm msd}/M_{\rm kqd}$ on (a) the number of spatial orbitals $N_{\rm orb}$ and (b) the ratio $\Delta{E}_{(N_e,S)}/\lambda$. Here, $M_{\rm kqd}$ and $M_{\rm msd}$ are theoretical sampling costs for KQD and MSD, estimated based on Eqs.~\eqref{eq:kqd_sampling_cost} and~\eqref{eq:msd_sampling_cost_formal}, respectively, using the 1-norm $\lambda$ and spectral range $\Delta{E}_{\bm{Q}}$ for benchmark molecular models shown in Table~\ref{tab:molecule_info}. The shade regions in (a) depict the ratio $R[M]$ achievable by adopting the 1-norm reduction to KQD. The upper bound of the shading is determined by the reduced 1-norm values in Table~\ref{tab:molecule_info}, which are obtained by performing the Hamiltonian optimization in Appendix~\ref{append:1norm_reduction}. The parameters are chosen as $n=N_{\rm orb}$, $J=n=N_{\rm orb}$, and $\eta_{\mathbf{H}}=0.0016$.  }
  \label{fig:sampling_cost}
\end{figure}
To illustrate this behavior, Fig.~\ref{fig:sampling_cost} shows the sampling cost ratio $R[M]$ of MSD compared to KQD. Here, the ratio of the value of $A$ for MSD to that for KQD is denoted as $R[A]$. Concretely, $R[M]:=M_{\rm msd}/M_{\rm kqd}$, where $M_{\rm kqd}$ and $M_{\rm msd}$ are theoretical sampling costs for KQD and MSD estimated based on Eqs.~\eqref{eq:kqd_sampling_cost} and~\eqref{eq:msd_sampling_cost_formal}, respectively. The data points are obtained by using the 1-norm $\lambda$ and spectral range $\Delta{E}_{\bm{Q}}$ for benchmark molecular models detailed in Table~\ref{tab:molecule_info} in Appendix~\ref{append:molecule}. 
As shown in Fig.~\ref{fig:sampling_cost}(a), the sampling cost ratio $R[M]$ decreases with increasing $N_{\rm orb}$, meaning that MSD realizes a significant sampling cost reduction in the large basis set limit. Specifically, for the $\ce{NH3}$ molecule with the cc-pVDZ basis set, MSD achieves about a 13,500-fold (750-fold) decrease in sampling cost relative to KQD (1-norm optimized KQD). 
Figure~\ref{fig:sampling_cost}(b) depicts an almost linear correspondence between the sampling cost ratio $R[M]$ and $(\Delta{E}_{\bm{Q}}/\lambda)^2$, indicating that MSD is effective, especially when the 1-norm $\lambda$ is much smaller than the spectral range $\Delta{E}_{\bm{Q}}$. In other words, $R[M]$ scales as $\order{(\Delta{E}_{\bm{Q}}/\lambda)^2}=\order{N_{\rm orb}^{-3.18}}$, a finding consistent with the asymptotic scaling behavior in Eq.~\eqref{eq:msd_sampling_cost_optimized}. Consequently, MSD achieves an approximately cubic reduction in the sampling cost compared to KQD. 
The following numerical simulation in Sec.~\ref{sec:numerical} will not only elaborate on this reduction in terms of ratios but also provide concrete numbers for the sampling iterations. 

\subsection{Hamiltonian simulation cost \label{subsec:simulation_cost}}
We have seen that MSD realizes a significant sampling cost reduction from KQD and attains a near-optimal sampling cost. However, the gate count and circuit depth needed to perform the Hamiltonian simulation in the Hadamard test can be larger than KQD, since MSD generally requires longer time-evolution than KQD (see Eq.~\eqref{eq:msd_U_mat}). This subsection clarifies that MSD does not induce a significant increase in such Hamiltonian simulation costs under a proper choice of parameters. 

The Hamiltonian simulation cost of the repeated Hadamard test sampling, such as appearing in MSD and KQD, can be characterized by two metrics: {\it maximum evolution time} $T_{\rm max}$ and {\it total evolution time} $T_{\rm total}$~\cite{lin2022heisenberg}. The maximum evolution time $T_{\rm max}$ is the largest value of the evolution time $t$ in the propagator $e^{-i\hat{H}t}$ of the Hadamard test. Since the implementation cost of the propagator $e^{-i\hat{H}t}$ increases with $t$ in Hamiltonian simulation algorithms, $T_{\rm max}$ measures the maximum gate count or circuit depth required to perform the Hadamard test sampling. On the other hand, the total evolution time $T_{\rm total}$ denotes the sum of the evolution time of all of the Hadamard tests. This corresponds to the sum of the circuit depth over all executions, quantifying the total runtime. 

From the definition, the maximum evolution time $T_{\rm max}$ for KQD and MSD is given by
\begin{align}
    T_{\rm max} = 
    \begin{cases}
        (n-1)\tau, & \mathrm{KQD} \\
        (n-1)\tau + J\delta_t^{(\rm opt)}, & \mathrm{MSD} 
    \end{cases} \label{eq:t_max}
\end{align}
where $\tau=\pi/\Delta{E}_{\bm{Q}}$ and $\delta_t^{(\rm opt)}(>0)$ is determined by Eq.~\eqref{eq:msd_optimal_delta}. Therefore, the maximum evolution time of MSD increases by $J\delta_t^{(\rm opt)}$ from KQD. Since the parameter $J$ describes the degree of the central finite difference formula, it is desirable to take a sufficiently large $J$ to suppress the finite difference error. However, increasing $J$ comes at the cost of increased $T_{\rm max}$ by Eq.~\eqref{eq:t_max}. From Eq.~\eqref{eq:msd_optimal_delta}, the optimal shift timestep $\delta_t^{(\rm opt)}$ scales as $\delta_t^{(\rm opt)}=\order{\Delta{E}_{\bm{Q}}^{-1}\sqrt{M}^{-\frac{1}{2J+1}}}$, which is much smaller or comparable to the timestep $\tau=\pi/\Delta{E}_{\bm{Q}}$ depending on the total sampling $M$. Therefore, it is reasonable to take $J=n$ to limit the increase in $T_{\rm max}$ to at most a factor of two. 

The total evolution time $T_{\rm total}$ for KQD and MSD is described as 
\begin{align}
    T_{\rm total} = 
    \begin{dcases}
        \sum_{k=0}^{n-1}\sum_{l=1}^{L}m_{kl}^{(\rm opt)}k\tau, & \mathrm{KQD} \\
        \sum_{k=0}^{n-1}\sum_{j=-J}^{J}m_{kj}^{(\rm opt)}\abs{k\tau+j\delta_t^{(\rm opt)}}, & \mathrm{MSD} 
    \end{dcases} \label{eq:t_total}
\end{align}
where the optimal shot allocations $m_{kl}^{(\rm opt)}$ and $m_{kj}^{(\rm opt)}$ are determined by Eqs.~\eqref{eq:shot_LCU} and~\eqref{eq:shot_diff}, respectively. From Eqs.~\eqref{eq:shot_LCU} and~\eqref{eq:kqd_shot_allocation}, the $T_{\rm total}$ for KQD is obtained as
\begin{align}
    T_{\rm total} = \frac{n(n-1)}{2(n-1)+\sqrt{2}}M\tau \underset{n\gg1}{\simeq} \frac{n\tau}{2}M.  \label{eq:t_total_kqd}
\end{align}
Similarly, from Eqs.~\eqref{eq:shot_diff} and~\eqref{eq:kqd_shot_allocation}, the $T_{\rm total}$ for MSD is rewritten as
\begin{align}
    T_{\rm total} &= \frac{M}{\norm{\vec{a}^{(J;1)}}_1(\sqrt{2}(n-1)+1)} \left[ 2\delta_t^{(\rm opt)}\sum_{j=1}^{J}\abs{a_j^{(J;1)} j} \right. \nonumber\\
    & \left.+ \sqrt{2}\sum_{k=1}^{n-1}\sum_{j=-J}^{J}\abs{a_j^{(J;1)}}\abs{k\tau+j\delta_t^{(\rm opt)}} \right] .
\end{align}
Here, the first summation is bounded above as
\begin{align}
    \sum_{j=1}^{J}\abs{a_j^{(J;1)} j} \leq \sum_{j=1}^{J}1 = J,  
\end{align}
since $|a_j^{(J;1)}|\leq 1/|j|$. 
The second summation is evaluated as 
\begin{align}
    &\sum_{k=1}^{n-1}\sum_{j=-J}^{J}\abs{a_j^{(J;1)}}\abs{k\tau+j\delta_t^{(\rm opt)}} \nonumber\\
    &\leq 2\sum_{k=1}^{n-1}\sum_{j=1}^{J}\abs{a_j^{(J;1)}}(k\tau+j\delta_t^{(\rm opt)}) \nonumber\\
    &\leq n(n-1)\norm{\vec{a}^{(J;1)}}_1\tau + 2(n-1)J\delta_t^{(\rm opt)}.
\end{align}
Thus, the total evolution time $T_{\rm total}$ for MSD is estimated as 
\begin{align}
    T_{\rm total} &\leq \frac{n(n-1)M\tau}{\sqrt{2}(n-1)+1} + \frac{2JM\delta_t^{(\rm opt)}}{\norm{\vec{a}^{(J;1)}}_1}, \nonumber\\
    &\lesssim \left(\frac{n\tau}{\sqrt{2}} + \frac{J\delta_t^{(\rm opt)}}{\log{(J)}}\right)M, \label{eq:t_total_msd}
\end{align}
where we have used Eq.~\eqref{eq:a_norm_bound}. Comparing Eq.~\eqref{eq:t_total_msd} with Eq.~\eqref{eq:t_total_kqd}, it is expected that the value of $T_{\rm total}/M$ for MSD is nearly the same for KQD and MSD when $J=n$. Therefore, a reduction in the sampling cost $M$ is expected to directly translate into a reduction in the total runtime. 

\begin{figure}[tbp]
  \centering
  \includegraphics[width=0.48\textwidth]{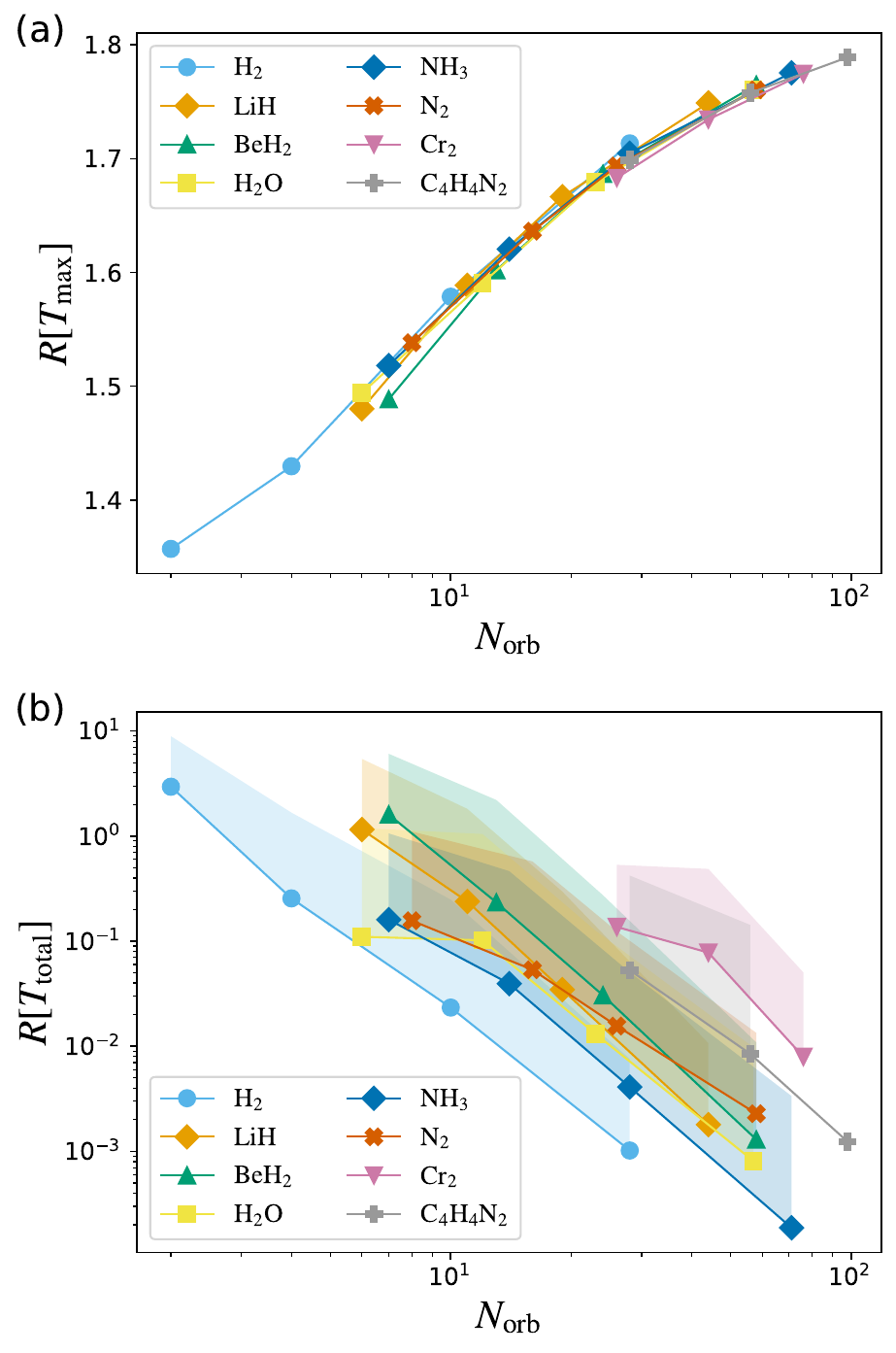}
  \caption{Numerical comparison of the simulation cost between KQD and MSD for various molecules. Dependence of (a) the maximal evolution time ratio $R[T_{\rm max}]$ and (b) the total evolution time ratio $R[T_{\rm total}]$ on the number of spatial orbitals $N_{\rm orb}$. Here, the total evolution time $T_{\rm total}$ is estimated based on theoretical sampling costs for KQD and MSD, determined by Eqs.~\eqref{eq:kqd_sampling_cost} and~\eqref{eq:msd_sampling_cost_optimized}, respectively, using the 1-norm $\lambda$ and spectral range $\Delta{E}_{\bm{Q}}$ for benchmark molecular models shown in Table~\ref{tab:molecule_info}. The shade regions in (b) depict the ratio $R[T_{\rm total}]$ achievable by adopting the 1-norm reduction to KQD. The upper bound of the shading is determined by the reduced 1-norm values in Table~\ref{tab:molecule_info}, which are obtained by performing the Hamiltonian optimization in Appendix~\ref{append:1norm_reduction}. The parameters are chosen as $n=N_{\rm orb}$, $J=n=N_{\rm orb}$, and $\eta_{\mathbf{H}}=0.0016$. }
  \label{fig:simulation_cost}
\end{figure}
Figure~\ref{fig:simulation_cost} illustrates the ratio of $T_{\rm max}$ and $T_{\rm total}$ for MSD ($J=n$) to that for KQD, denoted as $R[T_{\rm max}]$ and $R[T_{\rm total}]$, respectively, following the notation in Sec.~\ref{subsec:KQD_vs_MSD}. The data is obtained for various benchmark molecular models described in Appendix~\ref{append:molecule}. As expected, the increase of $T_{\rm max}$ is limited to at most a factor of two as shown in Fig.~\ref{fig:simulation_cost}(a). Then, MSD does not cause a significant increase in the maximum gate count and circuit depth. The reduction ratio for the total evolution time $R[T_{\rm total}]$ in Fig.~\ref{fig:simulation_cost}(b) exhibits almost the same behavior as that for the total sampling $R[M]$ in Fig.~\ref{fig:sampling_cost}(a). This is consistent with the theoretical expectation described above. 

\section{Energy error mitigation using Hamiltonian moment \label{sec:moment}}
In MSD, we obtain a dataset of propagator matrix elements as in Eq.~\eqref{eq:msd_U_mat}. In this section, we introduce a technique to compute Hamiltonian moment~\cite{aulicino2022state,vallury2020quantum} from these propagation matrix elements and perform energy error mitigation using the Hamiltonian moments. 

First, we consider a differential representation of the $q$-th power of the Hamiltonian $\hat{H}$
\begin{align}
    \hat{H}^q &= \left.i^q\frac{d^q}{dt^q}e^{-i\hat{H}t}\right|_{t=0}, \label{eq:hamiltonian_power_diff}
\end{align}
where $q=1,2,\cdots$ and $q=1$ corresponds to the differential representation of the Hamiltonian in Eq.~\eqref{eq:H_diff}. 
The $q$-th derivative of Eq.~\eqref{eq:hamiltonian_power_diff} can be approximated by using the degree-$J$ central finite difference formula as 
\begin{align}
    \left.\frac{d^q}{dt^q}e^{-i\hat{H}t}\right|_{t=0}
    &\simeq \frac{1}{\delta_t^q}\sum_{j=-J}^{J}a_j^{(J;q)}e^{-i\hat{H}j\delta_t} + \order{\delta_t^{s-q+1}}, \label{eq:power_finite_difference}
\end{align}
where $\delta_t>0$ and $s:=q-1+2(J+1-\lfloor\frac{q+1}{2}\rfloor)$. $\{a_j^{(J;q)}\}_{j=-J}^{J}$ denotes the central finite difference coefficients for the $q$-th derivative, which are defined as 
\begin{align}
    a_j^{(J;q)} := \frac{d^q}{dz^q}\left. \left(\prod_{\substack{r=-J \\ r\neq j}}^{J} \frac{z-r}{j-r}\right) \right|_{z=0} . \label{eq:cfd_coeff_general}
\end{align}
Specifically, the solution of Eq.~\eqref{eq:cfd_coeff_general} for $q=1$ is given by Eq.~\eqref{eq:finite_difference_coeff}. 

Here, we introduce the projected Hamiltonian power matrix as 
\begin{align}
    \mathbf{M}^{(q)}_{k'k} &= \bra{\phi_0}e^{i\hat{H}k'\tau} \hat{H}^q e^{-i\hat{H}k\tau}\ket{\phi_0}, \nonumber\\
    &= \bra{\phi_0}\hat{H}^q e^{-i\hat{H}(k-k')\tau}\ket{\phi_0}, \label{eq:hamiltonian_power_matrix}
\end{align}
where we have adopted the commutation relation $[\hat{H}^q, e^{-i\hat{H}t}]=0$ in the second line. Note that $\mathbf{M}^{(1)}=\mathbf{H}$ by definition. 
Applying the central finite difference formula in Eq.~\eqref{eq:power_finite_difference}, the projected Hamiltonian power matrix $\mathbf{M}^{(q)}$ is approximated as 
\begin{align}
    \mathbf{M}^{(q)}_{k'k} &\simeq \frac{i^q}{\delta_t^q}\sum_{j=-J}^{J}a_j^{(J;q)}\bra{\phi_0}e^{-i\hat{H}[(k-k')\tau+j\delta_t]}\ket{\phi_0}. \label{eq:msd_power_matrix}
\end{align}
Based on Eq.~\eqref{eq:msd_power_matrix}, we can estimate the projected Hamiltonian power matrix $\mathbf{M}^{(q)}$ by performing classical postprocessing of the propagator matrices $\{\mathbf{U}^{(j)}\}_{j=-J}^{J}$ (Eq.~\eqref{eq:msd_U_mat}) obtained in MSD. For the degree-$J$ central finite difference formulation, we can obtain the projected Hamiltonian power matrix for $q=1,2,\cdots,2J$. As detailed in Appendix~\ref{append:sampling_error_derivation} and~\ref{append:fd_error_derivation}, the matrix perturbation to the projected Hamiltonian power matrix $\mathbf{M}^{(q)}$ can be estimated in the same way as the projected Hamiltonian matrix $\mathbf{H}$. 
Considering finite sampling and finite difference approximation as a source of the matrix perturbation, the corresponding matrix perturbation, $\mathbf{\Delta}_{\mathbf{M}}^{(q)}:=\tilde{\mathbf{M}}^{(q)}-\mathbf{M}^{(q)}$ with $\tilde{\mathbf{M}}^{(q)}$ being the perturbed matrix, is upper bounded as 
\begin{align}
    \norm{\mathbf{\Delta}_{\mathbf{M}}^{(q)}} &\lesssim \frac{2n\sqrt{2v(\mathbf{\Delta}_{\mathbf{M}}^{(q)})\log{(2n)}}}{\delta_t^q \sqrt{M}} \nonumber\\
    &+ n\sum_{j=-J}^{J}\abs{a_j^{(J;q)}j^{s+1}}\frac{\norm{\hat{H}}_{\mathcal{K}}^{s+1}}{(s+1)!} \delta_t^{s-q+1}, \label{eq:msd_power_perturbation}
\end{align}
where
\begin{align}
    v(\mathbf{\Delta}_{\mathbf{M}}^{(q)}) &:= |a_0^{(J;q)}|^2 + 2\norm{\vec{a}_j^{(J;1)}}_1\sum_{j=1}^{J}\frac{|a_j^{(J;q)}|^2}{|a_j^{(J;1)}|}. 
\end{align} 
Note that Eq.~\eqref{eq:msd_power_perturbation} is derived under the assumption that the matrices $\tilde{\mathbf{H}}$ and $\tilde{\mathbf{S}}$ are estimated with the same number of shots, $M$. 
Specifically, choosing the optimal shift parameter $\delta_t^{(\rm opt)}$ as in Eq.~\eqref{eq:msd_optimal_delta}, the total matrix perturbation is bounded as
\begin{align}
    &\norm{\mathbf{\Delta}_{\mathbf{M}}^{(q)}} \nonumber\\
    &\lesssim \order{\frac{n\norm{\hat{H}}_{\mathcal{K}}^{q}}{\sqrt{M}} \left[ \sqrt{v(\mathbf{\Delta}_{\mathbf{M}}^{(q)})} +\frac{\sum_{j=1}^{J}\abs{a_j^{(J;q)}j^{s+1}}}{(s+1)!}\right] }.\label{eq:msd_power_perturbation_optimal}
\end{align}
The right-hand side of Eq.~\eqref{eq:msd_power_perturbation_optimal} suggests that the matrix perturbation increases with the order $q$ mainly due to the factor $\norm{\hat{H}}_{\mathcal{K}}^{q}$. Therefore, the matrix perturbation increases with the order $q$. 

From the projected Hamiltonian power matrix $\mathbf{M}^{(q)}$, the $q$-th order Hamiltonian moment $\mu_q$ is obtained as 
\begin{align}
    \mu_q &:= \bra{\psi_0}\hat{H}^q\ket{\psi_0} 
    \simeq \frac{\mathbf{v}_0^{\dag}\mathbf{M}^{(q)}\mathbf{v}_0}{\mathbf{v}_0^{\dag}\mathbf{v}_0}, \label{eq:hamiltonian_moment}
\end{align}
where $\ket{\psi_0}$ is the ground state of the Hamiltonian $\hat{H}$ and $\mathbf{v}_0$ is the approximate ground state vector obtained by solving the GEVP in Eq.~\eqref{eq:GEVP}. Note that we obtain the Hamiltonian moment $\mu_q$ with $q=1,2,\cdots,2J$ for the degree-$J$ central finite difference formulation. Using the Hamiltonian moments, we can perform error mitigation on the ground state energy estimate based on the Lanczos scheme~\cite{witte1994plaquette,haxton2005piecewise}. Specifically, we construct the following tridiagonal matrix~\cite{lanczos1950iteration}
\begin{align}
    \mathbf{T}_j = 
    \begin{pmatrix}
        \alpha_1 & \beta_1 & 0 \\
        \beta_1 & \alpha_2 & \beta_2 & \\
          & \beta_2 & \alpha_3 & \ddots \\
         &  & \ddots & \ddots & \beta_{j-2} \\
         & & & \beta_{j-2} & \alpha_{j-1} & \beta_{j-1} \\
         & & & &  \beta_{j-1} & \alpha_j
    \end{pmatrix}. \label{eq:lanczos_tridiagonal}
\end{align}
The matrix elements $\alpha_j$ and $\beta_j$ are calculated recursively using the Hamiltonian moments $\{\mu_q\}_{q=1}^{2j}$ as
\begin{align}
    \alpha_j &= \mathcal{M}_{j-1}/\mathcal{L}_{j-1} - \mathcal{M}_{j-2}/\mathcal{L}_{j-2}, \\
    \beta_j^2 &= \mathcal{L}_j\mathcal{L}_{j-2}/\mathcal{L}_{j-1}^2, 
\end{align}
where the determinants $\mathcal{L}_j$ and $\mathcal{M}_j$ are defined as 
\begin{align}
    \mathcal{L}_j &= 
    \begin{vmatrix}
        1 & \mu_1 & \mu_2 & \cdots & \mu_j \\
        \mu_1 & \mu_2 & \mu_3 & \cdots & \mu_{j+1} \\
        & \vdots & & \ddots & \vdots \\
        \mu_j & \mu_{j+1} & \mu_{j+2} & \cdots & \mu_{2j}
    \end{vmatrix},
\end{align}
and 
\begin{align}
    \mathcal{M}_j &= 
    \begin{vmatrix}
        1 & \mu_1 & \mu_2 & \cdots & \mu_{j-1} & \mu_{j+1} \\
        \mu_1 & \mu_2 & \mu_3 & \cdots & \mu_{j} & \mu_{j+2} \\
        & \vdots & & \ddots & \vdots & \vdots \\
        \mu_j & \mu_{j+1} & \mu_{j+2} & \cdots & \mu_{2j-1} & \mu_{2j+1}
    \end{vmatrix},
\end{align}
with $\mathcal{L}_0:=1$, $\mathcal{M}_{-1}:=0$, and $\mathcal{M}_0:=\mu_1$. We perform iterative diagonalization of the tridiagonal matrix $\mathbf{T}_j$ and obtain the lowest eigenvalue for $j=1,2,\cdots,J$. 
In an ideal noiseless scenario, the lowest eigenvalue gradually decreases and approaches the exact ground state energy as the iteration step $j$ increases~\cite{witte1994plaquette,haxton2005piecewise}. However, in the presence of noise in MSD, the Hamiltonian moments are perturbed, leading to unphysical behavior. For instance, $\beta_j^2<0$ can occur, especially for large $j$, because higher-order Hamiltonian moments are significantly affected by noise, as described by Eq.~\eqref{eq:msd_power_perturbation_optimal}. Furthermore, the lowest eigenvalue might increase with increasing iteration step $j$, which contradicts the principle of the Lanczos algorithm. To mitigate these undesirable behaviors, in practice, we terminate the Lanczos iteration when $\beta_j^2<0$ or an increase in the energy estimate is detected. 
Note that a similar moment-Lanczos approach has been previously suggested in Ref.~\cite{seki2021power}. However, the approach presented in~\cite{seki2021power} estimates the Hamiltonian moment using a degree-1 central finite difference formulation, in contrast to our higher-degree central finite difference scheme. 

\section{Numerical results \label{sec:numerical}}
In this section, we numerically demonstrate the performance of MSD. As a benchmark, we study the electronic structure of the \ce{H2} molecule at equilibrium geometry using the cc-pVDZ or 6-31G basis set. Specifically, we first perform the restricted Hartree-Fock calculation using the \texttt{PySCF} library~\cite{sun2018pyscf} and construct the second-quantized electronic structure Hamiltonian. Then, this fermionic Hamiltonian is mapped to the qubit representation by the Jordan-Wigner transformation~\cite{Jordan1928}. We adopt active space selection for the cc-pVDZ basis set to reduce the numerical simulation cost, and the two highest energy orbitals are eliminated from the active space. The resulting Hamiltonian contains 2 electrons, 8 spatial orbitals (4 spatial orbitals), and 16 spin orbitals (8 spin orbitals) for the cc-pVDZ (6-31G) basis set model. 
The \ce{H2} molecule is chosen because its small number of electrons allows for classical simulation even with large basis sets without prohibitive computational costs. This makes it an ideal system for demonstrating MSD's performance, particularly its significant sampling cost reduction compared to KQD when a large basis set is adopted, as shown in Sec.~\ref{subsec:simulation_cost}.
We also present numerical results for other molecular models in Appendix~\ref{append:msd_other_molecules} to further illustrate the broad applicability of MSD.

We used the \texttt{ffsim} library~\cite{ffsim} to perform the state vector simulation within a subspace of fixed particle number $N_e$ and total spin $S$. For the \ce{H2} model, $N_e=2$ and $S=0$. 
The Krylov subspace is constructed from the Hartree-Fock state $\ket{\phi_0}$ with the timestep $\tau=\pi/\Delta{E}_{\bm{Q}}$, where $\bm{Q}=(N_e,S)$ in the subsequent molecular simulation. The value of $\Delta{E}_{\bm{Q}}$ is obtained by the exact diagonalization. The Krylov order $n$ is set to be equal to the number of orbitals $N_{\rm orb}$ (i.e., $n=N_{\rm orb}=8$ for the cc-pVDZ model) unless otherwise mentioned. To focus on the sampling error, we perform noiseless simulations using the exact time-evolution operator. The central finite difference coefficients are calculated using the \texttt{findiff} library~\cite{findiff}. 

\begin{figure}[tbp]
  \centering
  \includegraphics[width=0.48\textwidth]{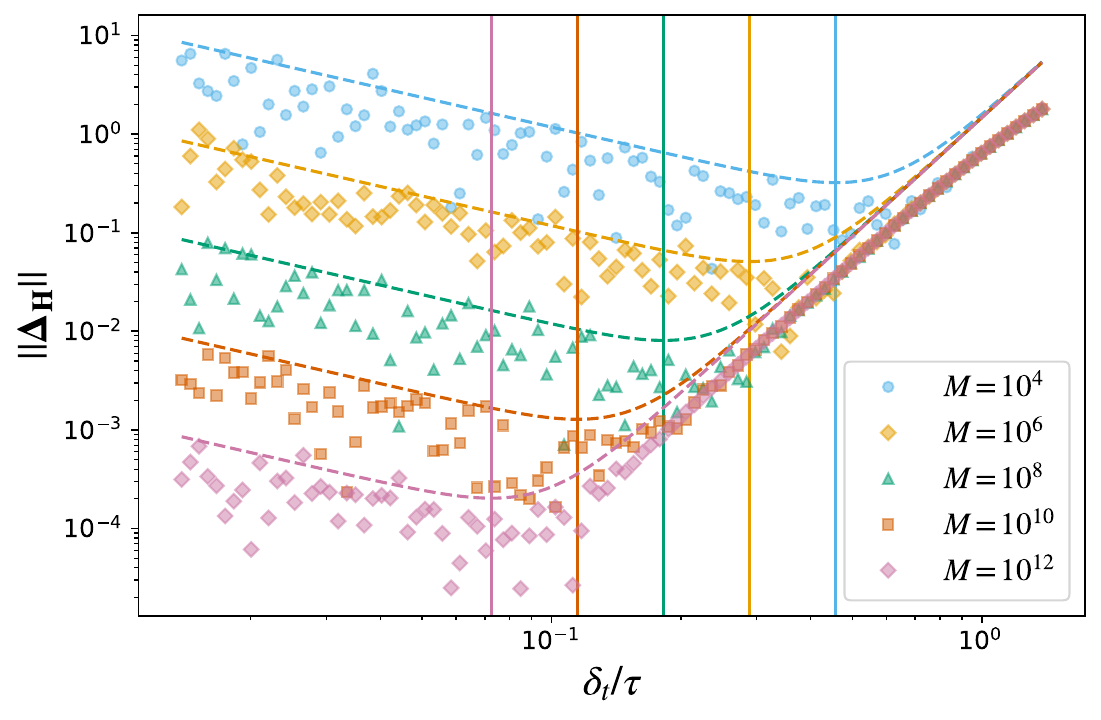}
  \caption{Dependence of matrix perturbation $\norm{\mathbf{\Delta}_{\mathbf{H}}}$ on shift parameter $\delta_t$ for the \ce{H2} (cc-pVDZ) model. Markers represent data obtained from finite sampling simulations of MSD with varying total numbers of shots $M$. MSD is conducted with the parameters $n=2$ and $J=n=2$. Dotted lines indicate the theoretical upper bound (Eq.\eqref{eq:msd_matrix_perturbation}), and vertical lines denote the optimal parameter $\delta_t^{(\rm opt)}$ (Eq.\eqref{eq:msd_optimal_delta}).}
  \label{fig:parameter_optimization}
\end{figure}

\subsection{Matrix perturbations}
First, we demonstrate the parameter optimization described in Sec.~\ref{subsec:parameter_optimization}. Figure~\ref{fig:parameter_optimization} illustrates the dependence of the matrix perturbation $\norm{\mathbf{\Delta}_{\mathbf{H}}}$ on $\delta_t$ for varying total numbers of shots, $M$. 
The matrix perturbation $\norm{\mathbf{\Delta}_{\mathbf{H}}}$ exhibits a convex downward behavior with respect to the shift parameter $\delta_t$. This behavior arises from the trade-off between the sampling error and the finite difference error, as discussed in Sec.\ref{subsec:parameter_optimization}. Moreover, we can see that the matrix perturbation $\norm{\mathbf{\Delta}_{\mathbf{H}}}$ is minimized at the theoretical optimal value $\delta_t^{(\rm opt)}$ (denoted by vertical lines), which aligns with the assertion of Theorem~\ref{thm:msd_optimization}. Subsequent numerical simulations are conducted using this optimal shift parameter, $\delta_t^{(\rm opt)}$. 

\begin{figure}[tbp]
  \centering
  \includegraphics[width=0.48\textwidth]{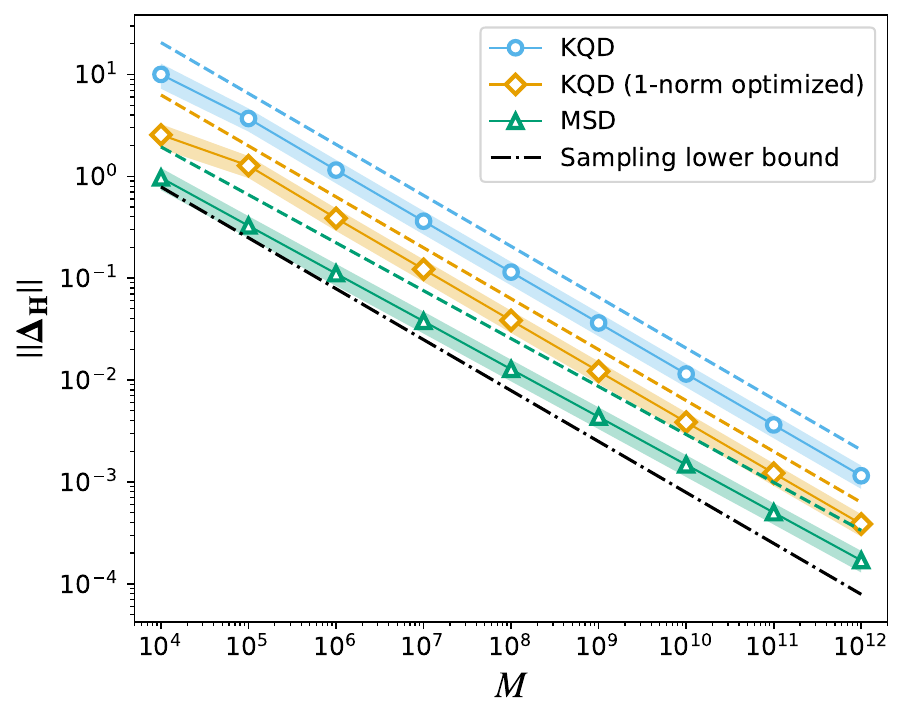}
  \caption{Dependence of matrix perturbation $\norm{\mathbf{\Delta}_{\mathbf{H}}}$ on total shot count $M$ for KQD (conventional and 1-norm optimized) and MSD of the \ce{H2} (cc-pVDZ) model. The markers and shaded regions represent the ensemble average and standard deviation, respectively, calculated from 10,000 random trials. The dashed and dash-dot lines indicate the theoretical upper bounds (Eq.\eqref{eq:msd_matrix_perturbation}) and sampling lower bound (Eq.\eqref{eq:sampling_lower_bound}), respectively. }
  \label{fig:matrix_error}
\end{figure}
Next, we investigate the sampling error of MSD in detail. Figure~\ref{fig:matrix_error} shows the dependence of the matrix perturbation $\norm{\mathbf{\Delta}_{\mathbf{H}}}$ on the total shot count $M$ for both KQD and MSD methods. KQD is applied to the Pauli-LCU obtained from the conventional electronic Hamiltonian (i.e., canonical orbital basis) and the 1-norm optimized electronic Hamiltonian (see Appendix~\ref{append:1norm_reduction} for details of the 1-norm reduction). MSD is implemented with $J=n=N_{\rm orb} = 8$, based on the discussion in Sec.~\ref{subsec:KQD_vs_MSD}. The sampling simulation is conducted using 10,000 different random seeds. 
MSD exhibits a smaller perturbation at each $M$ compared to both the conventional KQD and the 1-norm optimized KQD. Furthermore, the behavior of the matrix perturbations is tightly bounded above by the theoretical prediction (dashed lines). Consequently, MSD achieves a reduction in sampling cost compared to KQD, which aligns with the theoretical argument presented in Sec.~\ref{subsec:KQD_vs_MSD}. Moreover, MSD attains a sampling cost comparable to the sampling lower bound (dash-dot line), as stated in Sec.\ref{subsec:KQD_vs_MSD}.

\subsection{Ground state energy estimation}
\begin{figure}[tbp]
  \centering
  \includegraphics[width=0.48\textwidth]{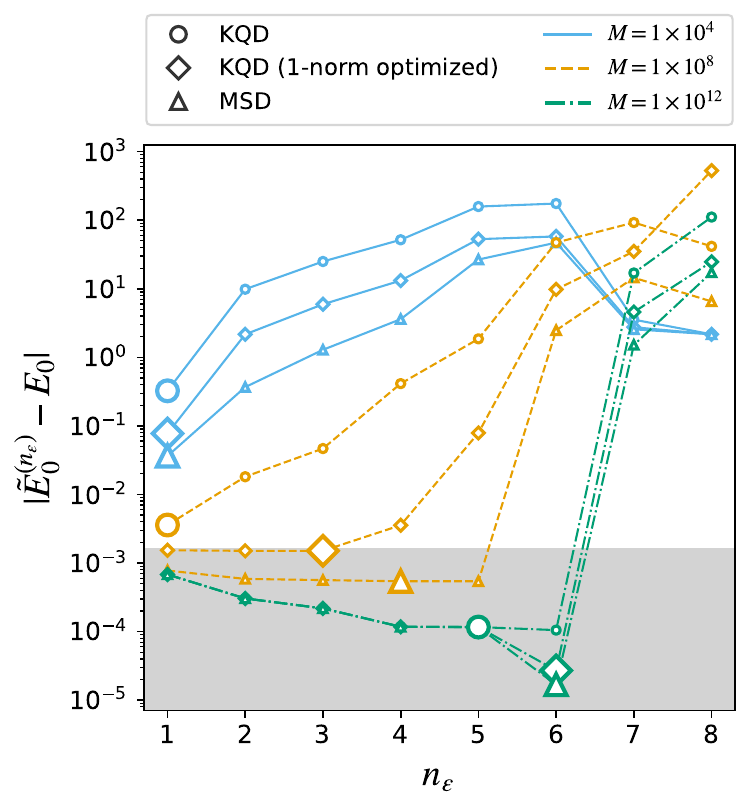}
  \caption{Absolute difference between the exact ground state energy, $E_0$, and perturbed Krylov solutions, $\tilde{E}_0^{(n_\epsilon)}$, for the \ce{H2} (cc-pVDZ) model. Energy values are expressed in Hartree (Ha), and the region corresponding to chemical accuracy (i.e., an absolute energy error below $1.6\times10^{-3}$ Ha) is highlighted with gray shading. Data points are obtained from finite-sampling simulations of KQD (conventional and 1-norm optimized) and MSD methods, varying the thresholding level $\epsilon>0$. The horizontal axis denotes the number of remaining basis vectors after the thresholding, $n_\epsilon\leq n=8$. Different colors represent variations in the total number of shots, $M$. The optimal threshold points, determined as $\epsilon = \max{(\norm{\mathbf{\Delta}_{\mathbf{S}}}, \norm{\mathbf{\Delta}_{\mathbf{H}}}/\norm{\hat{{H}}})}$~\cite{kirby2024analysis}, are emphasized with large markers. }
  \label{fig:thresholding}
\end{figure}
Now, we investigate the performance of MSD for ground state energy estimation. Subsequent results are obtained from finite-sampling simulations of $\mathbf{S}$ and $\mathbf{H}$, using an equal number of shots, $M$, for each. 
Figure~\ref{fig:thresholding} shows the thresholding behavior (see Sec.~\ref{subsec:GEVP_perturbation} for thresholding) of the absolute difference between the exact ground state energy, $E_0$, and perturbed Krylov solutions, $\tilde{E}_0^{(n_\epsilon)}$. The absolute error, $|\tilde{E}_0^{(n_\epsilon)}-E_0|$, varies non-monotonically with changes in the thresholding level, $\epsilon$, and the resulting effective subspace dimension, $n_\epsilon$. Specifically, the perturbation $|\tilde{E}_0^{(n_\epsilon)}-E_0|$ for both KQD and MSD begins to increase at a certain thresholding level due to the ill-conditioning at large subspace dimensions $n_\epsilon$, consistent with previously reported results~\cite{lee2024sampling,epperly2022theory}. This divergent behavior appears at larger $n_\epsilon$ as the total shot count, $M$, increases. We confirmed that this divergence point is well captured by the theoretically optimal thresholding level, $\epsilon = \max{(\norm{\mathbf{\Delta}_{\mathbf{S}}}, \norm{\mathbf{\Delta}_{\mathbf{H}}}/\norm{\hat{{H}}})}$, proposed in Ref.~\cite{kirby2024analysis}, as highlighted by the large markers in Fig.~\ref{fig:thresholding}. 

\begin{figure}[tbp]
  \centering
  \includegraphics[width=0.48\textwidth]{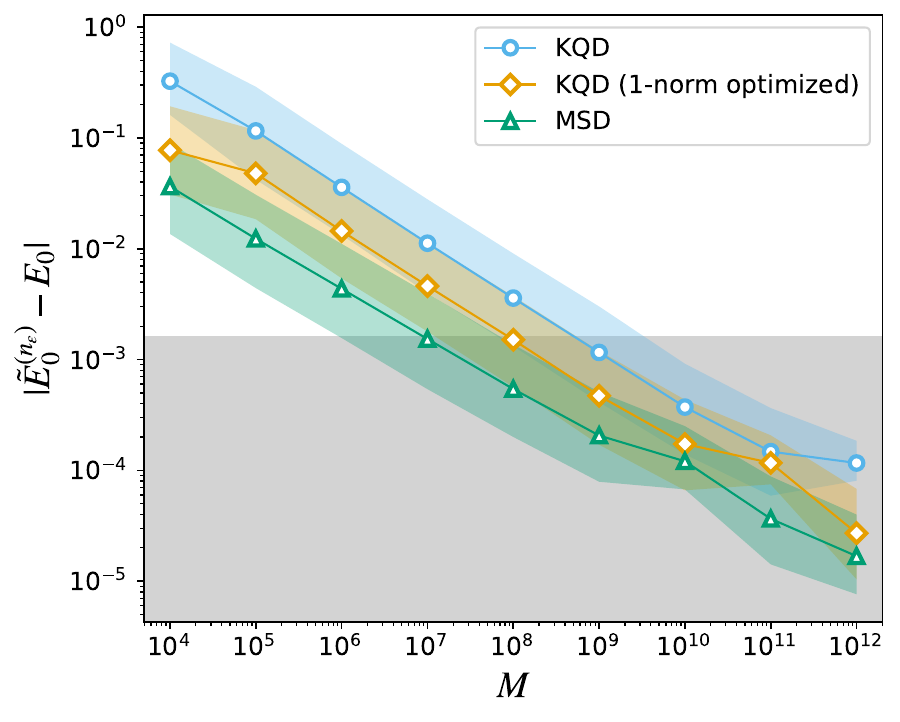}
  \caption{Dependence of energy error $|\tilde{E}_0^{(n_\epsilon)}-E_0|$ on total shot count $M$ for KQD (conventional and 1-norm optimized) and MSD applied to the \ce{H2} (cc-pVDZ) model. The markers and shaded regions represent the ensemble average and full width at half maximum (FWHM), respectively, calculated from 10,000 random trials. Energy values are expressed in Hartree (Ha), and the region corresponding to chemical accuracy (i.e., an absolute energy error below $1.6\times10^{-3}$ Ha) is highlighted with gray shading. The thresholding level $\epsilon$ is determined using the theoretically optimal value $\epsilon = \max{(\norm{\mathbf{\Delta}_{\mathbf{S}}}, \norm{\mathbf{\Delta}_{\mathbf{H}}}/\norm{\hat{{H}}})}$. }
  \label{fig:energy_error}
\end{figure}
Figure~\ref{fig:energy_error} illustrates the dependence of the absolute error, $|\tilde{E}_0^{(n_\epsilon)}-E_0|$, on the total number of shots, $M$, where the thresholding is applied at the optimal level, $\epsilon = \max{(\norm{\mathbf{\Delta}_{\mathbf{S}}}, \norm{\mathbf{\Delta}_{\mathbf{H}}}/\norm{\hat{{H}}})}$. Similar to the matrix perturbation observed in Fig.~\ref{fig:matrix_error}, the energy error $|\tilde{E}_0^{(n_\epsilon)}-E_0|$ also decreases monotonically with increasing total sampling $M$. Notably, MSD achieves a significant reduction in sampling cost compared to KQD. Specifically, MSD attains chemical accuracy with a total shot count approximately 100 (10) times smaller than the conventional KQD (1-norm optimized KQD). 

\subsection{Hamiltonian moments and energy error mitigation}
Next, we demonstrate the moment-Lanczos energy error mitigation method described in Sec.~\ref{sec:moment}. 
When the Krylov order $n$ is insufficient due to the limitations on maximum circuit depth, the corresponding Krylov solution fails to converge. In such cases, further Lanczos iterations using Hamiltonian moments $\bra{\psi_0}\hat{H}^q\ket{\psi_0}$ are generally expected to improve the accuracy of the energy estimate. 
In light of this consideration, subsequent numerical simulations are performed using a small Krylov order of $n=2$. 
\begin{figure*}[tbp]
  \centering
  \includegraphics[width=0.98\textwidth]{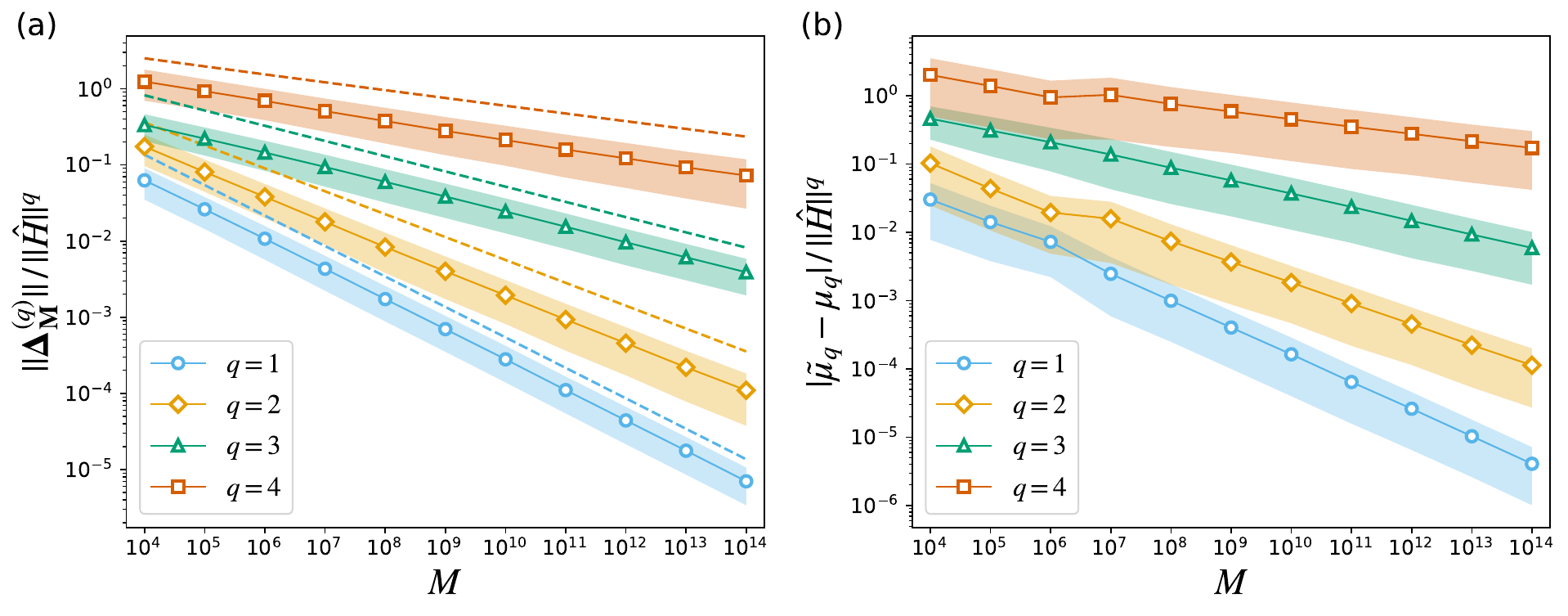}
  \caption{Dependence on total shot count, $M$, of (a) matrix perturbation $\norm{\mathbf{\Delta}^{(q)}_{\mathbf{M}}}/\norm{\hat{H}^q}$ and (b) error in the Hamiltonian moment $|\tilde{\mu}_q-\mu_q|/\norm{\hat{H}^q}$, for the \ce{H2} (6-31G) model. MSD is implemented with $n=2$ and $J=n=2$, resulting in Hamiltonian moments obtained up to order $q=2J=4$. Markers and shaded regions represent ensemble averages and standard deviations, respectively, calculated from 10,000 random trials. Dashed lines in (a) indicate the theoretical upper bound predicted by Eq.~\eqref{eq:msd_power_perturbation}. }
  \label{fig:moment_error}
\end{figure*}

Figure~\ref{fig:moment_error}(a) depicts the relationship between the matrix perturbation, $\norm{\mathbf{\Delta}^{(q)}_{\mathbf{M}}}$, and the total number of shots, $M$. With the degree of central finite difference, $J$, set to $J=n=2$, the projected Hamiltonian power matrix, $\mathbf{M}^{(q)}$, is computed up to $q=2J=4$. The results demonstrate a monotonic decrease in the perturbation $\norm{\mathbf{\Delta}^{(q)}_{\mathbf{M}}}$ as the total sampling $M$ increases, for all powers $q$. Furthermore, the perturbation is amplified at higher orders due to the increased finite difference error, as discussed in Sec.~\ref{sec:moment}. The observed numerical behavior remains bounded above by the theoretical prediction (dashed line) provided by Eq.~\eqref{eq:msd_power_perturbation_optimal}. 
Figure~\ref{fig:moment_error}(b) illustrates the corresponding sampling perturbation affecting the Hamiltonian moments. Specifically, it quantifies the difference between the perturbed solution, $\tilde{\mu}_q=\tilde{\mathbf{v}}_0^{\dag}\tilde{\mathbf{M}}^{(q)}\tilde{\mathbf{v}}_0/\tilde{\mathbf{v}}_0^{\dag}\tilde{\mathbf{v}}_0$, and the unperturbed solution, $\mu_q=\mathbf{v}_0^{\dag}\mathbf{M}^{(q)}\mathbf{v}_0/\mathbf{v}_0^{\dag}\mathbf{v}_0$, where $\mathbf{v}_0$ represents the exact Krylov solution obtained from a noiseless infinite-sampling simulation of KQD (see Eq.~\eqref{eq:hamiltonian_moment} for refernce). The data reveal that the sampling perturbation on the Hamiltonian moments exhibits a behavior closely resembling that of the projected Hamiltonian power matrix in Fig.~\ref{fig:moment_error}(a).
\begin{figure*}[tbp]
  \centering
  \includegraphics[width=0.98\textwidth]{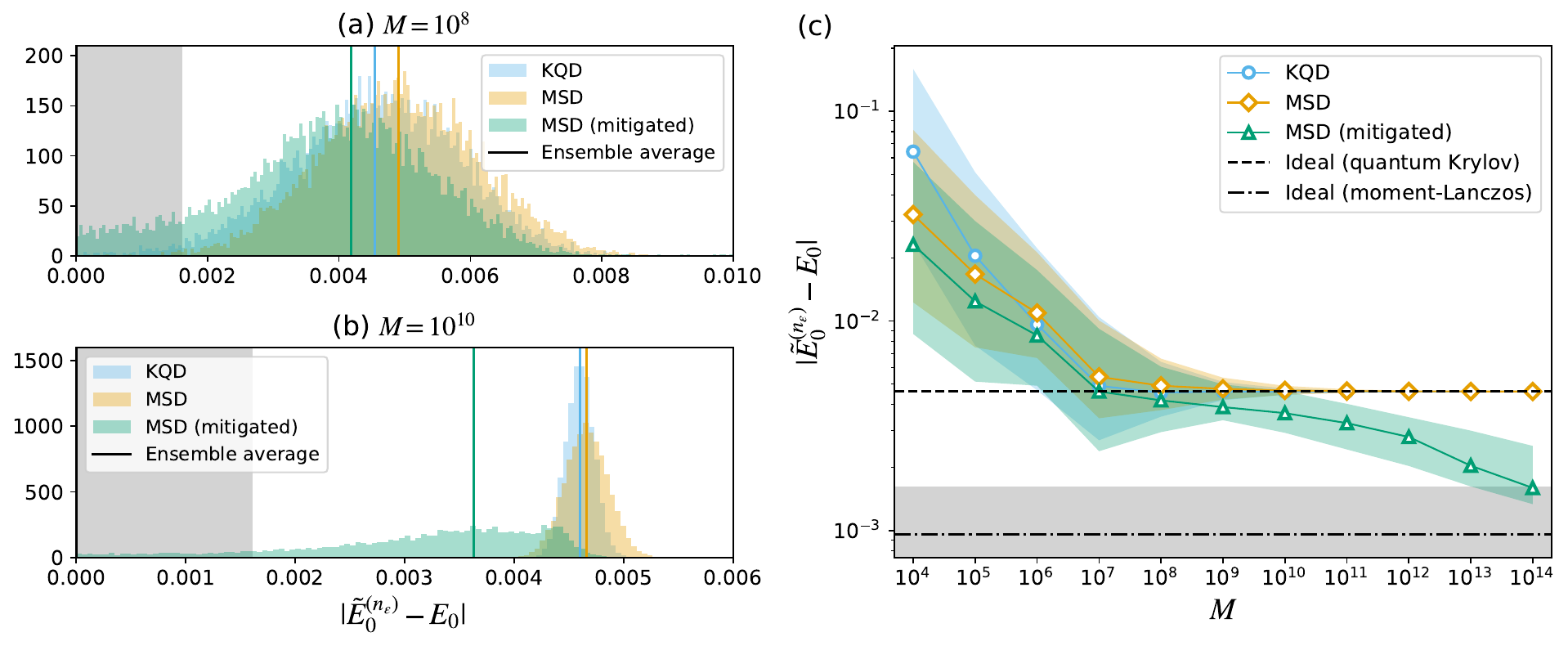}
  \caption{Numerical demonstration of moment-Lanczos error mitigation applied to the \ce{H2} (6-31G) model. The Krylov order is set to $n=2$, and MSD is implemented with $J=n=2$. (a), (b) Histograms depicting the error in ground state energy estimation obtained via KQD, MSD, and moment-Lanczos mitigated MSD. These histograms are generated from 10,000 independent random trials. Vertical lines indicate the ensemble averages derived from these trials. (c) Dependence of energy error on total shot count, $M$. Markers and shaded regions represent the ensemble averages and full width at half maximum (FWHM), respectively, calculated from the 10,000 random trials. Dashed and dash-dot horizontal lines denote the energy error obtained from infinite-sampling simulations of KQD, both with and without moment-Lanczos mitigation, using ideal Hamiltonian moments defined as $\mu_q=\mathbf{v}_0^{\dag}\mathbf{M}^{(q)}\mathbf{v}_0/\mathbf{v}_0^{\dag}\mathbf{v}_0$. Energy values are expressed in Hartree (Ha), and the region corresponding to chemical accuracy (i.e., an absolute energy error below $1.6\times10^{-3}$ Ha) is highlighted with gray shading. }
  \label{fig:lanczos_error}
\end{figure*}

Figure~\ref{fig:lanczos_error} demonstrates energy error mitigation achieved through the moment-Lanczos method. As illustrated in Figs.~\ref{fig:lanczos_error}(a) and (b), the systematic energy error, $|\tilde{E}^{(n_\epsilon)}_0-E_0|$, of MSD is reduced following the Lanczos iteration. This reduction is accompanied by an increase in variance, mirroring the bias-variance trade-off observed in quantum error mitigation methods~\cite{cai2023mitigation}. This increased variance stems from the amplified sampling perturbation affecting higher-order Hamiltonian moments, as detailed in Fig.\ref{fig:moment_error}.
Figure~\ref{fig:lanczos_error}(c) depicts the relationship between the energy error, $|\tilde{E}^{(n_\epsilon)}_0-E_0|$, and the total shot count, $M$. Given that the Krylov order, $n=2$, is insufficient in this instance, the Krylov solutions (KQD and MSD) fail to converge and plateau around $M=10^9$, remaining above the threshold for chemical accuracy. This contrasts with the result obtained for $n=8$ in Fig.~\ref{fig:energy_error}. Conversely, the moment-Lanczos error mitigation effectively reduces the energy error of MSD. The ensemble-averaged value of the mitigated results exhibits a gradual decrease with increasing total number of shots, $M$, ultimately approaching chemical accuracy.
This demonstration highlights the significant advantage of our moment-Lanczos mitigation method in the NISQ era, where limitations on quantum hardware often restrict the practical application of high Krylov orders. By effectively improving accuracy without requiring deeper circuits, our approach offers a crucial pathway towards obtaining chemically accurate ground-state energies on near-term quantum devices.

\section{Conclusion and discussion \label{sec:conclusion}}
In this paper, we developed the MSD algorithm for efficient ground state energy estimation using near-term quantum computers. MSD leverages a central finite-difference approximation of the Hamiltonian operator, expressing it as a linear combination of time-evolution unitaries with symmetrically shifted timesteps. This approach, combined with optimized timestep parameters and an energy spectrum shifting technique, significantly reduces the sampling cost inherent in conventional quantum Krylov methods.

Our theoretical analysis demonstrated that MSD approaches the fundamental lower bound of the sampling cost for the LCU-based KQD algorithms, achieving near-optimal performance up to a logarithmic factor. This superiority is particularly pronounced when the 1-norm of the Hamiltonian is significantly larger than its spectral norm, a common scenario in molecular electronic structure problems. The sampling cost reduction was corroborated through numerical simulations on the H$_2$ molecule, and further supported by sampling cost estimations for various molecular models. These results showed sampling cost reductions of up to four orders of magnitude compared to the conventional LCU-based KQD, and achieved chemical accuracy with significantly fewer shots. Furthermore, we showed that MSD does not incur a substantial increase in Hamiltonian simulation costs, ensuring that the reduction in sampling effort directly translates into a more efficient total runtime.
Another key innovation of this work is the development of a method to infer higher-order Hamiltonian moments from the propagator matrix elements obtained in MSD. We demonstrated that these moments can be effectively utilized for energy error mitigation based on the Lanczos scheme, particularly valuable when the Krylov order is insufficient due to circuit depth limitations. 

Looking ahead, the MSD algorithm opens several promising avenues for future research.
First, the framework of approximating operators using central finite differences of time-evolution unitaries is highly versatile and could be extended beyond ground state energy estimation. This includes the potential for applying MSD to the calculation of excited states~\cite{cortes2022quantum}, ground state properties~\cite{oumarou2025molecular,zhang2022computing}, real-time dynamics~\cite{cortes2022fast}, and Green's functions~\cite{jamet2021krylov,jamet2022quantum,jamet2025anderson,umeano2025quantum}, by adapting the target observable or the post-processing scheme. Furthermore, the utility of Hamiltonian moments extends beyond ground state energy mitigation, offering potential for their direct application in Green's function calculations~\cite{greene2023quantum} and simulating quantum dynamics~\cite{aulicino2022state}. 
Second, while this work primarily focused on sampling errors, a comprehensive analysis of other error sources, such as Hamiltonian simulation errors (e.g., Trotter error) and hardware noise, is crucial for practical implementations. Investigating the robustness of MSD against these errors and developing tailored mitigation strategies would be important. The inherent compatibility of Hamiltonian moments with various error mitigation techniques based on Hamiltonian moments~\cite{vallury2020quantum,suchsland2021algorithmic} also warrants further investigation, exploring how these moments can be integrated into broader error mitigation frameworks to enhance overall performance.
Finally, to fully assess the practical utility of MSD, detailed resource estimations for actual quantum devices are desirable. This includes quantifying the gate counts, circuit depths, and actual runtime required to simulate utility-scale molecular and materials systems. Such rigorous resource analysis would enable a direct and fair comparison of MSD with other state-of-the-art algorithms that utilize time-evolution operators for ground state computation, such as sample-based quantum Krylov methods~\cite{sugisaki2024hamiltonian,mikkelsen2024quantum,yu2025quantum} and early-FTQC oriented phase estimation algorithms~\cite{lin2022heisenberg,wan2022randomized,wang2023quantum,ding2023qcels,ni2023low,kshirsagar2024proving,liang2024modeling,wang2025efficient}, providing a clearer roadmap for its deployment on future quantum hardware.

\begin{acknowledgments}
We are profoundly indebted to Shintaro Sato and Tennin Yan for their invaluable support in fostering a conducive research environment. 
S.K. acknowledges Riki Toshio for his helpful comments and fruitful discussions. 
S.K. thanks Masatoshi Ishii for technical support in performing numerical simulations. 
\end{acknowledgments}

\section*{Author contributions}
S. K. was responsible for conceptualizing the detailed theoretical framework, developing the methodology, and performing numerical simulations for all figures and tables. S. K. also undertook the primary investigation and formal analysis of the data and prepared the original draft of the manuscript. Y. O. N. significantly contributed to the conceptualization of the theory and provided technical supervision throughout the work. N. M. and Y. H. participated in technical discussions and reviewed the manuscript. K. M. and H. O. were instrumental in supervising the project, securing necessary resources, and overseeing project administration, in addition to reviewing the manuscript.

\section*{Competing interests}
All authors declare no financial or non-financial competing interests. 

\section*{Funding statement}
This study received no funding.

\appendix

\section{Sampling variance of Hadamard test \label{append:hadamard_sampling}}
In this Appendix, we analyze the statistical behavior of Hadamard test sampling. The analysis presented in this Appendix is based on Refs.~\cite{lee2024sampling,lee2025efficient}. 

In the Hadamard test, the real and imaginary parts of the expectation value of a unitary operator $\hat{U}$ are estimated separately for an input state $\ket{\psi}$ as follows: 
\begin{align}
    \mathbb{E}{[\mathcal{Z}]} &=  \mathbb{E}{[\mathcal{R}]} +  \mathbb{E}{[\mathcal{I}]} \nonumber\\
    &= \mathrm{Re}{\bra{\psi}\hat{U}\ket{\psi}} + i\mathrm{Im}{\bra{\psi}\hat{U}\ket{\psi}},  \label{eq:hadamard_estimator}
\end{align}
where $\mathcal{Z}=\mathcal{R}+i\mathcal{I}$ is an estimator of $\bra{\psi}\hat{U}\ket{\psi}$ with $\mathcal{R}$ and $\mathcal{I}$ being the real and imaginary parts, respectively. Specifically, the sampling is performed using the quantum circuit in Fig.~\ref{fig:measurement_circuit}. In this scenario, the probability of finding $\ket{0}_a$ state in the ancilla register, $p_0$, is related to the expectation value of the estimators$\{\mathcal{R},\mathcal{I}\}=:\mathcal{X}$ as 
\begin{align}
    \mathbb{E}{[\mathcal{X}]} = 2p_0-1 . \label{eq:exp_hadamard}
\end{align}
Here, the random variables $\mathcal{X}=\{\mathcal{R},\mathcal{I}\}$ follow the averaged Bernoulli distributions as  
\begin{align}
    \mathcal{X} \sim \frac{2}{m}\mathrm{Bin}(m,p) - 1,
 \end{align}
 where $\mathrm{Bin}(m,p)$ denotes a binomial distribution, describing number of success in $m$ identical experiments with success probability $p$. Thus, the sampling variance of $\mathcal{X}=\{\mathcal{R},\mathcal{I}\}$ is given by 
\begin{align}
    \mathrm{Var}{[\mathcal{X}]} &= \frac{4p(1-p)}{m} = \frac{1-\mathbb{E}[\mathcal{X}]^2}{m} . \label{eq:var_hadamard}
\end{align}
From Eqs.~\eqref{eq:hadamard_estimator}, \eqref{eq:exp_hadamard} and \eqref{eq:var_hadamard}, the sampling variance of the estimator $\mathcal{Z}=\mathcal{R}+i\mathcal{I}$ is described as 
\begin{align}
   \mathrm{Var}{[\mathcal{Z}]} &=  \frac{1-\mathrm{Re}[\bra{\psi}\hat{U}\ket{\psi}]^2}{m^{(\rm r)}} + \frac{1-\mathrm{Im}[\bra{\psi}\hat{U}\ket{\psi}]^2}{m^{(\rm i)}} , \label{eq:var_Z}
\end{align}
where $m^{\rm (r)}$ and $m^{\rm (i)}$ denote the number of samples for the real part $\mathcal{R}$ and imaginary part $\mathcal{I}$, respectively. 

Equation~\eqref{eq:var_Z} indicates that the sampling variance is minimized when $m^{(\rm r)}\propto \sqrt{1-\mathrm{Re}[\bra{\psi}\hat{U}\ket{\psi}]^2}$ and $m^{(\rm i)}\propto \sqrt{1-\mathrm{Im}[\bra{\psi}\hat{U}\ket{\psi}]^2}$. However, it is infeasible to apply such a shot allocation before measuring $\bra{\psi}\hat{U}\ket{\psi}$ by the Hadamard test sampling. 
To eliminate the dependence on the input state $\ket{\psi}$, we take an average of the state over the uniform Haar random distribution $\mathcal{H}(d)$, where $d$ is the dimension of the Hilbert space. 
To do this, we apply the following relation from Schur's lemma~\cite{lee2025efficient} 
\begin{align}
    \mathbb{E}_{\hat{V}\sim\mathcal{U}(d)}[\hat{V}^{\dag}\hat{A}\hat{V}] = \frac{\hat{\1}}{d}\mathrm{Tr}[\hat{A}], \label{eq:schur_theorem}
\end{align}
where $\hat{A}\in\mathbb{C}^{d\times d}$ is any operator and $\mathcal{U}(d)$ denotes the uniform Haar random distribution over the $d$-dimensional unitary group. Equation~\eqref{eq:schur_theorem} leads to the following equality: 
\begin{align}
    \mathbb{E}_{\psi\sim\mathcal{H}(d)}\left[\bra{\psi}\hat{U}\ket{\psi}\right] = \frac{1}{d}\mathrm{Tr}[\hat{U}]. \label{eq:haar_ave_trace}
\end{align}
Hereafter, we use the notation $\mathbb{E}_{\psi}[\cdots]$ to denote the Haar random average $\mathbb{E}_{\psi\sim\mathcal{H}(d)}[\cdots]$ unless otherwise mentioned. From Eq.~\eqref{eq:haar_ave_trace}, we obtain
\begin{align}
    \mathbb{E}_{\psi,\hat{U}}{\left[\abs{\bra{\psi}\hat{U}\ket{\psi}}^2\right]} 
    &= \mathbb{E}_{\psi,\hat{U}}{\left[\bra{\psi}\hat{U}\ket{\psi}\bra{\psi}\hat{U}^{\dag}\ket{\psi}\right]} \nonumber\\ 
    &= \frac{1}{d}\mathbb{E}_{\psi,\hat{U}}\left[ \mathrm{Tr}[\hat{U}\ket{\psi}\bra{\psi}\hat{U}^{\dag} \right] \nonumber\\
    &= \frac{1}{d}\mathbb{E}_{\psi,\hat{U}}\left[ \bra{\psi}\hat{U}^{\dag} \hat{U}\ket{\psi}\right] \nonumber\\
    &= \frac{1}{d^2}\mathrm{Tr}[\hat{U}^\dag\hat{U}] = \frac{1}{d}. \label{eq:haar_average_1}
\end{align}
Furthermore, the unitary invariant property of the Haar random measure leads to 
\begin{align}
    &\mathbb{E}_{\psi,\hat{U}}{\left[\mathrm{Re}[\bra{\psi}\hat{U}\ket{\psi}]^2\right]} \nonumber\\ 
    &= \frac{1}{4}\mathbb{E}_{\psi,\hat{U}}\left[
    \bra{\psi}\hat{U}\ket{\psi}^2 + \bra{\psi}\hat{U}^{\dag}\ket{\psi}^2 + 2\abs{\bra{\psi}\hat{U}\ket{\psi}}^2 \right] \nonumber\\
    &= \frac{1}{4}\mathbb{E}_{\psi,\hat{U}}\left[
    \bra{\psi}i\hat{U}\ket{\psi}^2 + \bra{\psi}(-i)\hat{U}^{\dag}\ket{\psi}^2 + 2\abs{\bra{\psi}\hat{U}\ket{\psi}}^2 \right] \nonumber\\
    &=  \mathbb{E}_{\psi,\hat{U}}{\left[\mathrm{Im}[\bra{\psi}\hat{U}\ket{\psi}]^2\right]}. \label{eq:haar_average_2}
\end{align}
Applying Eqs.~\eqref{eq:haar_average_1} and \eqref{eq:haar_average_2} to Eq.~\eqref{eq:var_Z}, we obtain
\begin{align}
    \mathbb{E}_{\psi,\hat{U}}{\left[\mathrm{Var}{[\mathcal{Z}]}\right]} 
    &= \frac{2}{m}\left(2- \mathbb{E}_{\psi,\hat{U}}\left[|\bra{\psi}\hat{U}\ket{\psi}|^2\right] \right) \nonumber\\
    &= \frac{2}{m}\left( 2 - \frac{1}{d} \right), \label{eq:haar_average_variance}
\end{align}
where $m^{(\rm r)}=m^{(\rm i)}=m/2$. Equation~\eqref{eq:haar_average_variance} leads to the expressions of sampling variance (Eqs.~\eqref{eq:variance_S_opt} and \eqref{eq:variance_H}) in the main text.

\section{Sampling error analysis \label{append:sampling_error_derivation}}
This Appendix details the derivation of the sampling perturbation for KQD and MSD described in the main text. 
The derivation relies on the following lemma from the random matrix theory.

\begin{lem}[Norm behavior of Gaussian matrix series~\cite{tropp2015introduction,lee2024sampling}]\label{lemma:random}
    Let $\mathbf{\Delta}_{\mathbf{Z}}$ be a random matrix series with Gaussian coefficients, defined as
    \begin{align}
        \mathbf{\Delta}_{\mathbf{Z}} := \sum_k X_k\mathbf{A}_k,
    \end{align}
    where $\{\mathbf{A}_k\}$ is a finite set of $n\times n$ Hermitian matrices, $\{X_k\}$ is an independent finite set of normal random variables. Each $X_k$ follows the normal distribution $\mathcal{N}(0,\sigma_k^2)$ with zero mean and the variance $\sigma_k^2$. Let $v(\mathbf{\Delta}_{\mathbf{Z}})$ be the matrix variance statistics for $\mathbf{\Delta}_{\mathbf{Z}}$ defined as
    \begin{align}
        v(\mathbf{\Delta}_{\mathbf{Z}}) := \norm{\mathbb{E}[\mathbf{\Delta}_{\mathbf{Z}}^2]} = \left\|\sum_k\sigma_k^2\mathbf{A}_k^2\right\| . \label{eq:matrix_variance_statistics}
    \end{align}
    Then, the expectation value of the norm $\norm{\mathbf{\Delta}_{\mathbf{Z}}}$ is bounded as
    \begin{align}
        \mathbb{E}\left[ \norm{\mathbf{\Delta}_{\mathbf{Z}}} \right] \leq \sqrt{2v(\mathbf{\Delta}_{\mathbf{Z}})\log{(2n)}}. \label{eq:random_matrix_theory}
    \end{align}
\end{lem}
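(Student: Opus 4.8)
The plan is to run the matrix Laplace-transform (matrix Chernoff) method, which is the standard route to bounds of this type. First I would reduce the operator norm to extreme eigenvalues: since $\mathbf{\Delta}_{\mathbf{Z}}$ is Hermitian, $\norm{\mathbf{\Delta}_{\mathbf{Z}}} = \max\{\lambda_{\max}(\mathbf{\Delta}_{\mathbf{Z}}),\, \lambda_{\max}(-\mathbf{\Delta}_{\mathbf{Z}})\}$, and for every $\theta>0$ one has the elementary inequality $e^{\theta\norm{\mathbf{\Delta}_{\mathbf{Z}}}} \le \mathrm{Tr}\, e^{\theta\mathbf{\Delta}_{\mathbf{Z}}} + \mathrm{Tr}\, e^{-\theta\mathbf{\Delta}_{\mathbf{Z}}}$ (each trace being a sum of positive terms, one of which already dominates $e^{\theta\norm{\mathbf{\Delta}_{\mathbf{Z}}}}$). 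Taking expectations and applying Jensen's inequality to the convex map $x\mapsto e^{\theta x}$ gives
\[
e^{\theta\, \mathbb{E}[\norm{\mathbf{\Delta}_{\mathbf{Z}}}]} \le \mathbb{E}\big[\mathrm{Tr}\, e^{\theta\mathbf{\Delta}_{\mathbf{Z}}}\big] + \mathbb{E}\big[\mathrm{Tr}\, e^{-\theta\mathbf{\Delta}_{\mathbf{Z}}}\big],
\]
so everything reduces to bounding the trace moment generating function $\mathbb{E}[\mathrm{Tr}\, e^{\pm\theta\mathbf{\Delta}_{\mathbf{Z}}}]$.

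The core step is to establish $\mathbb{E}[\mathrm{Tr}\, e^{\theta\mathbf{\Delta}_{\mathbf{Z}}}] \le \mathrm{Tr}\,\exp\!\big(\tfrac{\theta^2}{2}\sum_k \sigma_k^2 \mathbf{A}_k^2\big)$, and I would do this by peeling off the independent Gaussian variables $X_k$ one at a time. Conditioning on all variables except $X_k$, the scalar-Gaussian identity $\mathbb{E}_{X_k}[e^{\theta X_k \mathbf{A}_k}] = e^{\theta^2\sigma_k^2\mathbf{A}_k^2/2}$ holds (the powers of $X_k\mathbf{A}_k$ all commute), and Lieb's concavity theorem — concavity of $\mathbf{C}\mapsto \mathrm{Tr}\exp(\mathbf{H}+\log\mathbf{C})$ on the positive-definite cone, combined with Jensen — lets me move the conditional expectation inside the trace exponential and replace the factor $e^{\theta X_k\mathbf{A}_k}$ by $e^{\theta^2\sigma_k^2\mathbf{A}_k^2/2}$. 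Iterating over all $k$ (this is precisely Tropp's master bound specialized to Gaussian series) yields the claimed inequality; the same argument with $\theta\to-\theta$ handles $-\mathbf{\Delta}_{\mathbf{Z}}$, whose matrix variance statistic is identical. This interchange of expectation and trace exponential is the step I expect to be the main obstacle: a naive Golden--Thompson bound is not enough once there is a sum of matrices, so Lieb's theorem (or a direct appeal to the published matrix-Gaussian-series bound) is essential here.

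It remains to estimate $\mathrm{Tr}\,\exp\!\big(\tfrac{\theta^2}{2}\sum_k \sigma_k^2\mathbf{A}_k^2\big)$. The matrix $\sum_k\sigma_k^2\mathbf{A}_k^2$ is positive semidefinite with largest eigenvalue equal to $v(\mathbf{\Delta}_{\mathbf{Z}})$ by definition~\eqref{eq:matrix_variance_statistics}, so its exponential has every eigenvalue at most $e^{\theta^2 v(\mathbf{\Delta}_{\mathbf{Z}})/2}$ and hence trace at most $n\,e^{\theta^2 v(\mathbf{\Delta}_{\mathbf{Z}})/2}$. Feeding this back, $e^{\theta\mathbb{E}[\norm{\mathbf{\Delta}_{\mathbf{Z}}}]}\le 2n\,e^{\theta^2 v(\mathbf{\Delta}_{\mathbf{Z}})/2}$, i.e.
\[
\mathbb{E}[\norm{\mathbf{\Delta}_{\mathbf{Z}}}] \le \frac{\log(2n)}{\theta} + \frac{\theta\, v(\mathbf{\Delta}_{\mathbf{Z}})}{2}.
\]
Finally I would optimize over $\theta>0$; the minimizer $\theta = \sqrt{2\log(2n)/v(\mathbf{\Delta}_{\mathbf{Z}})}$ makes the two terms equal and produces $\mathbb{E}[\norm{\mathbf{\Delta}_{\mathbf{Z}}}] \le \sqrt{2\,v(\mathbf{\Delta}_{\mathbf{Z}})\log(2n)}$, which is the assertion. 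Everything outside the Lieb step — the reduction to the trace MGF, the PSD eigenvalue bound, and the $\theta$-optimization — is routine.
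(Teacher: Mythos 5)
Your proposal is correct and follows exactly the route the paper relies on: the paper does not prove this lemma itself but cites it to Tropp's monograph (and its adaptation by Lee et al.), and your argument — reduction to the trace moment generating function, the Gaussian-series master bound via Lieb's concavity theorem, the eigenvalue bound $\mathrm{Tr}\exp\left(\tfrac{\theta^2}{2}\sum_k\sigma_k^2\mathbf{A}_k^2\right)\leq n\,e^{\theta^2 v(\mathbf{\Delta}_{\mathbf{Z}})/2}$, and the optimization $\theta=\sqrt{2\log(2n)/v(\mathbf{\Delta}_{\mathbf{Z}})}$ — is precisely that cited proof. No gaps; the steps you flag as delicate (replacing Golden--Thompson by Lieb's theorem in the peeling step) are handled correctly.
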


Lemma~\ref{lemma:random} was originally derived in Ref.~\cite{tropp2015introduction}, and subsequently adapted for the sampling error analysis of KQD in Ref.~\cite{lee2024sampling}. Specifically, the latter reference derived the following result for the matrix variance statistics $v(\mathbf{\Delta}_{\mathbf{Z}})$. 

\begin{lem}[Matrix variance statistics for Toeplitz-Hermitian matrix~\cite{lee2024sampling}]\label{lemma:toeplitz}
    Let $\mathbf{\Delta}_\mathbf{Z}$ be an $n\times n$ Gaussian random matrix with Toeplitz-Hermitian structure, defined as
    \begin{align}
        \mathbf{\Delta}_\mathbf{Z} &= X_0\mathbf{I}_n + \sum_{k=1}^{n-1}\left[ (X_k+iY_k)\mathbf{U}_n^k + (X_k-iY_k)\mathbf{L}_n^k \right]
    \end{align}
    where $\{X_k\}_{i=0}^{n-1}$ and $\{Y_k\}_{i=1}^{n-1}$ are Gaussian random variables for the real and imaginary parts of the matrix element, respectively. The diagonal elements are assumed to be real to maintain Hermiticity. Each $X_k$ and $Y_k$ follow the normal distribution $\mathcal{N}(0,\sigma_k^2)$. $\mathbf{I}_n$ is the $n\times n$ identity matrix, and $\mathbf{U}_n$ and $\mathbf{L}_n$ are the $n\times n$ upper and lower shift matrices defined as
    \begin{align}
        \mathbf{U}_{ij} = \delta_{i+1,j}, \quad
        \mathbf{L}_{ij} = \delta_{i,j+1}.
    \end{align}
    Consequently, $\mathbf{U}_n^k$ and $\mathbf{L}_n^k$ are the $n\times n$ matrices with only the $k$-th superdiagonal and subdiagonal elements, respectively, being 1 and all other elements being 0. 
    Then, the matrix variance statistics $v(\mathbf{\Delta}_{\mathbf{Z}})$ in Eq.~\eqref{eq:matrix_variance_statistics} can be rewritten as
    \begin{align}
        v(\mathbf{\Delta}_{\mathbf{Z}}) &= \norm{\mathrm{diag}{(v_1, \cdots, v_n)}} = \max_{l\in\{1,\cdots,\lfloor n/2\rfloor\}}v_l, \label{eq:v_general}
    \end{align}
    where 
    \begin{align}
        v_l &:= 
        \begin{dcases}
            \sigma_0^2 + \sum_{k=1}^{l-1}2\sigma_k^2 + \sum_{k=l}^{n-l}\sigma_k^2, & l\leq \lfloor n/2\rfloor \\
            v_{n-l}, & \mathrm{otherwise}
        \end{dcases}.
    \end{align}
\end{lem}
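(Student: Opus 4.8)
The plan is to reduce the claim to the matrix-variance-statistics identity Eq.~\eqref{eq:matrix_variance_statistics} of Lemma~\ref{lemma:random} and then evaluate the resulting operator in closed form. First I would recast $\mathbf{\Delta}_{\mathbf{Z}}$ as a single Gaussian matrix series with Hermitian coefficients,
\begin{align}
    \mathbf{\Delta}_{\mathbf{Z}} &= X_0\mathbf{I}_n + \sum_{k=1}^{n-1}\big( X_k\mathbf{A}_k + Y_k\mathbf{B}_k \big), \\
    \mathbf{A}_k &:= \mathbf{U}_n^k + \mathbf{L}_n^k, \qquad \mathbf{B}_k := i\big(\mathbf{U}_n^k - \mathbf{L}_n^k\big),
\end{align}
checking via $(\mathbf{U}_n^k)^{\dag}=\mathbf{L}_n^k$ that $\mathbf{A}_k$ and $\mathbf{B}_k$ are Hermitian, so that Lemma~\ref{lemma:random} applies with $\mathbf{A}_0=\mathbf{I}_n$ and gives
\begin{align}
    v(\mathbf{\Delta}_{\mathbf{Z}}) = \norm{ \sigma_0^2\mathbf{I}_n + \tfrac{1}{2}\sum_{k=1}^{n-1}\sigma_k^2\big(\mathbf{A}_k^2 + \mathbf{B}_k^2\big) },
\end{align}
where I read $\sigma_k^2$ as the variance of the complex matrix element, so that the real and imaginary parts $X_k,Y_k$ each carry variance $\sigma_k^2/2$ for $k\ge1$ (the normalization consistent with Eqs.~\eqref{eq:variance_S_opt} and~\eqref{eq:variance_H_opt}).

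The load-bearing step is to evaluate $\mathbf{A}_k^2+\mathbf{B}_k^2$. Expanding the squares, the off-diagonal pieces $\mathbf{U}_n^{2k}$ and $\mathbf{L}_n^{2k}$ enter $\mathbf{A}_k^2$ and $\mathbf{B}_k^2=-(\mathbf{U}_n^k-\mathbf{L}_n^k)^2$ with opposite signs and cancel, leaving the diagonal matrix
\begin{align}
    \mathbf{A}_k^2 + \mathbf{B}_k^2 = 2\big(\mathbf{U}_n^k\mathbf{L}_n^k + \mathbf{L}_n^k\mathbf{U}_n^k\big),
\end{align}
in which $\mathbf{U}_n^k\mathbf{L}_n^k$ is the $0/1$ projector onto coordinates $1,\dots,n-k$ and $\mathbf{L}_n^k\mathbf{U}_n^k$ the projector onto coordinates $k+1,\dots,n$. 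Hence $\mathbb{E}[\mathbf{\Delta}_{\mathbf{Z}}^2]$ is diagonal, and its $l$-th entry is $\sigma_0^2$ plus $\sigma_k^2$ summed over the offsets with $k\le n-l$ and over those with $k\le l-1$. For $l\le\lfloor n/2\rfloor$, where $l-1\le n-l$, the offsets $1\le k\le l-1$ are then counted from both ends and the offsets $l\le k\le n-l$ once, which is exactly the stated $v_l$. As an independent check I would run the direct entrywise computation $\mathbb{E}[(\mathbf{\Delta}_{\mathbf{Z}}^2)_{ll}]=\sum_m \mathbb{E}\,|(\mathbf{\Delta}_{\mathbf{Z}})_{ml}|^2$, and verify that for $i\ne j$ the only term of $\mathbb{E}[(\mathbf{\Delta}_{\mathbf{Z}}^2)_{ij}]$ not immediately vanishing by independence of distinct diagonals and zero mean — the midpoint term $m=(i+j)/2$ — equals $\mathbb{E}[(X_k\pm iY_k)^2]=\mathbb{E}[X_k^2]-\mathbb{E}[Y_k^2]=0$, confirming that $\mathbb{E}[\mathbf{\Delta}_{\mathbf{Z}}^2]$ is genuinely diagonal.

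Finally, since $\mathbb{E}[\mathbf{\Delta}_{\mathbf{Z}}^2]=\mathrm{diag}(v_1,\dots,v_n)$ with all $v_l\ge0$, its spectral norm equals $\max_l v_l$. The Toeplitz structure makes $\mathbf{\Delta}_{\mathbf{Z}}$ persymmetric, $\mathbf{J}\mathbf{\Delta}_{\mathbf{Z}}\mathbf{J}=\mathbf{\Delta}_{\mathbf{Z}}^{\mathrm{T}}$ with $\mathbf{J}$ the exchange matrix; combined with $\mathbf{\Delta}_{\mathbf{Z}}=\mathbf{\Delta}_{\mathbf{Z}}^{\dag}$ and $\mathbb{E}[\mathbf{\Delta}_{\mathbf{Z}}^2]$ real, this forces the sequence $(v_l)$ to be palindromic, $v_l=v_{n+1-l}$ — which is what the ``$v_{n-l}$, otherwise'' case records — so the maximum over all indices coincides with the maximum over $l\in\{1,\dots,\lfloor n/2\rfloor\}$, giving Eq.~\eqref{eq:v_general}.

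I expect the main obstacle to be bookkeeping rather than any real difficulty. The one genuine observation is that the contributions of the $k$-th super- and sub-diagonal Gaussians combine so that their mixed terms cancel and only a diagonal projector survives; everything after that is careful counting. The points to watch are the index ranges when splitting offsets into those counted twice versus once (and the even/odd-$n$ behaviour near the central index) and keeping the normalization of $\sigma_k^2$ consistent with the matrix-element variances used in the main text.
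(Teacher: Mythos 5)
The paper never proves this lemma itself --- it only says that the proof is given in Ref.~\cite{lee2024sampling} --- so there is no in-paper argument to compare against; judged on its own, your route is the natural one and its core is correct. Writing $\mathbf{\Delta}_\mathbf{Z}=X_0\mathbf{I}_n+\sum_{k\ge1}(X_k\mathbf{A}_k+Y_k\mathbf{B}_k)$, checking Hermiticity of $\mathbf{A}_k,\mathbf{B}_k$, and using $\mathbf{A}_k^2+\mathbf{B}_k^2=2(\mathbf{U}_n^k\mathbf{L}_n^k+\mathbf{L}_n^k\mathbf{U}_n^k)$ to show that $\mathbb{E}[\mathbf{\Delta}_\mathbf{Z}^2]$ is the diagonal matrix $\sigma_0^2\mathbf{I}_n+\sum_k\sigma_k^2(\Pi_{[1,n-k]}+\Pi_{[k+1,n]})$ is exactly the right computation, and your entrywise cross-check of the vanishing off-diagonal (midpoint) terms is valid. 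Your normalization remark is also well taken: the stated $v_l$, and the way the lemma is actually used in Appendix~\ref{append:sampling_error_derivation} (Eq.~\eqref{eq:vl_sampling}), are only consistent if $\sigma_k^2$ is read as the variance of the full complex element, i.e.\ $X_k,Y_k\sim\mathcal{N}(0,\sigma_k^2/2)$ for $k\ge1$, rather than the literal $\mathcal{N}(0,\sigma_k^2)$ in the statement; flagging and adopting that reading is the correct call.

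The one genuine gap is your final step. Your computation yields the palindromic symmetry $d_l=d_{n+1-l}$ of the diagonal entries (not $d_l=d_{n-l}$, so the ``$v_{n-l}$, otherwise'' branch is itself off by one), and for odd $n$ this symmetry does \emph{not} reduce the maximum to $l\in\{1,\dots,\lfloor n/2\rfloor\}$: the central index $l=(n+1)/2$ is its own mirror image, lies outside that range, and its entry $\sigma_0^2+2\sum_{k=1}^{(n-1)/2}\sigma_k^2$ can strictly dominate. For instance, $n=3$ with $\sigma_1^2>\sigma_2^2$ gives $\mathbb{E}[\mathbf{\Delta}_\mathbf{Z}^2]=\mathrm{diag}(\sigma_0^2+\sigma_1^2+\sigma_2^2,\ \sigma_0^2+2\sigma_1^2,\ \sigma_0^2+\sigma_1^2+\sigma_2^2)$, whose norm exceeds $v_1$, so the equality you assert (and Eq.~\eqref{eq:v_general} as literally written) fails there. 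What your argument actually proves is $v(\mathbf{\Delta}_\mathbf{Z})=\max_{l\le\lceil n/2\rceil}d_l$; you should either state the maximum over $l\le\lceil n/2\rceil$, or add the condition under which the central entry is dominated (e.g.\ all off-diagonal $\sigma_k^2$ equal, which is the situation in the paper's application, where the optimized allocations make the maximum sit at $l=1$ anyway). Aside from this midpoint bookkeeping --- arguably an imprecision inherited from the statement itself rather than a flaw in your idea --- the proof is sound.
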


The proof of Lemma~\ref{lemma:toeplitz} is presented in Ref.~\cite{lee2024sampling}. 
Applying Lemma~\ref{lemma:toeplitz} to the matrices $\mathbf{Z}=\mathbf{H},\mathbf{S}$ measured in KQD and MSD, with sampling variances given by Eqs.~\eqref{eq:variance_S_opt}, \eqref{eq:variance_H_opt}, and~\eqref{eq:variance_H_opt_diff}, we obtain 
\begin{align}
    v_l(\vec{m}) &= 2V_{\mathbf{Z}}^2\left(\frac{\delta_{\mathbf{Z}\mathbf{H}}}{m_0}+\sum_{k=1}^{l-1}\frac{4}{m_k}+\sum_{k=l}^{n-l}\frac{2}{m_k}\right), \label{eq:vl_sampling}
\end{align}
where $V_{\mathbf{S}} := 1$ and
\begin{align}
    V_\mathbf{H} &:= 
    \begin{dcases}
        \lambda, & \mathrm{KQD} \\
        \norm{\vec{a}^{(J;1)}}_1/\delta_t, & \mathrm{MSD}
    \end{dcases}
\end{align}
$\vec{m}=(m_0,m_1,\cdots,m_{n-1})$ denotes the shot allocation for each matrix elements. 
$\delta_{\mathbf{Z}\mathbf{H}}$ is defined as 
\begin{align}
    \delta_{\mathbf{Z}\mathbf{H}} := 
    \begin{cases}
        1, & \mathbf{Z} = \mathbf{H} \\
        0, & \mathbf{Z} = \mathbf{S}
    \end{cases}. 
\end{align}
The shot allocation $\vec{m}$ is optimized to minimize the matrix variance statistics $v(\mathbf{\Delta}_\mathbf{Z})$ under the constraint of $\sum_km_k=M$~\cite{lee2024sampling}. Specifically, we formulate the optimization problem using the following Lagrangian: 
\begin{align}
    \mathcal{L}_l(\vec{m},\xi) = v_l(\vec{m}) + \xi\left(\sum_{k=0}^{n-1}m_k-M\right),
\end{align}
where $\xi$ is a constant. The Lagrangian $\mathcal{L}_l(\vec{m},\xi)$ is a convex function and minimized at a shot allocation $\vec{m}^{(l)}$ given by
\begin{align}
    m_k^{(l)} &= 
    \begin{dcases}
        m_0^{(l)}, & k=0 \\
        2m_0^{(l)}, & 1 \leq k \leq l-1 \\
        \sqrt{2}m_0^{(l)}, & l \leq k \leq n-l \\
        0, & \mathrm{otherwise}
    \end{dcases} \label{eq:mk_opt}
\end{align}
where 
\begin{align}
    m_0^{(l)} := \frac{M}{\sqrt{2}n-2(\sqrt{2}-1)l-2+\sqrt{2}+\delta_{\mathbf{Z}\mathbf{H}}}. \label{eq:m0_opt}
\end{align}
Inserting Eq.~\eqref{eq:mk_opt} into Eq.~\eqref{eq:vl_sampling}, we obtain
\begin{align}
    v_l(\vec{m}^{(l)}) &= \frac{2V_{\mathbf{Z}}^2}{M}\left(\sqrt{2}n-2(\sqrt{2}-1)l-2+\sqrt{2}+\delta_{\mathbf{Z}\mathbf{H}}\right)^{2}, \label{eq:vl_sampling_opt}
\end{align}
for $l\leq \lfloor n/2 \rfloor$. 
Equations.~\eqref{eq:vl_sampling} and~\eqref{eq:mk_opt} indicate that 
\begin{align}
    \max_{j\in\{1,\cdots,\lfloor n/2\rfloor\}}v_j(\vec{m}^{(l)}) = v_1(\vec{m}^{(l)}). 
\end{align}
Furthermore, Eqs.~\eqref{eq:mk_opt} and~\eqref{eq:vl_sampling_opt} lead to
\begin{align}
    v_1(\vec{m}^{(l)})  = v_1(\vec{m}^{(1)}). 
\end{align}
Therefore, from Eq.~\eqref{eq:v_general}, the optimal value of $v(\mathbf{\Delta}_\mathbf{Z})$ is given by
\begin{align}
    v_{\rm opt}(\mathbf{\Delta}_\mathbf{Z}) &= \min_{\vec{m}} \max_{l\in\{1,\cdots,\lfloor n/2\rfloor\}}v_l(\vec{m}) \nonumber\\
    &=v_1(\vec{m}^{(1)}) \nonumber\\
    &= \frac{2V_{\mathbf{Z}}^2}{M}\left(\sqrt{2}(n-1)+\delta_{\mathbf{Z}\mathbf{H}}\right)^{2},
\end{align}
and the corresponding optimal shot allocation is $\vec{m}^{(1)}$, which is described as Eq.~\eqref{eq:kqd_shot_allocation} in the main text.

Combining this result with Lemma~\ref{lemma:random}, the sampling perturbation for KQD and MSD under the optimal shot allocation is obtained as 
\begin{align}
    \mathbb{E}\left[ \norm{\mathbf{\Delta}_{\mathbf{Z}}} \right] &\leq \sqrt{2v_{\rm opt}(\mathbf{\Delta}_{\mathbf{Z}})\log{(2n)}} \nonumber\\
    &= \frac{2V_{\mathbf{Z}}\sqrt{\log{(2n)}}}{\sqrt{M}}\left(\sqrt{2}(n-1)+\delta_{\mathbf{Z}\mathbf{H}}\right) \nonumber\\
    & \simeq \frac{2nV_{\mathbf{Z}}\sqrt{2\log{(2n)}}}{\sqrt{M}}. 
\end{align}
In Ref.~\cite{lee2024sampling}, it was numerically clarified that the probability of finding the norm $\norm{\mathbf{\Delta}_{\mathbf{Z}}}$ larger than the upper bound of its expectation value $\mathbb{E}\left[ \norm{\mathbf{\Delta}_{\mathbf{Z}}}\right]$ is significantly small in KQD. Such behavior is also confirmed for MSD in our numerical simulation (Fig.~\ref{fig:matrix_error} in the main text). Consequently, the sampling perturbation $\norm{\mathbf{\Delta}_{\mathbf{Z}}}$ can be estimated by its expectation value, which serves as a practical upper bound: 
\begin{align}
    \norm{\mathbf{\Delta}_{\mathbf{Z}}} \lesssim \mathbb{E}\left[ \norm{\mathbf{\Delta}_{\mathbf{Z}}} \right] \simeq \frac{2nV_{\mathbf{Z}}\sqrt{2\log{(2n)}}}{\sqrt{M}}. \label{eq:sampling_error_general}
\end{align}
Equation~\eqref{eq:sampling_error_general} is used to obtain the sampling error bound in Eqs.~\eqref{eq:sampling_perturbation_S}, \eqref{eq:sampling_perturbation_H}, and \eqref{eq:sampling_perturbation_H_diff}.

\section{Finite difference error analysis \label{append:fd_error_derivation}}
In this Appendix, we analyze the finite difference error for MSD by utilizing the following lemma concerning the degree-$J$ central finite difference formula. 
\begin{lem}\label{lemma:fd_error}
    Let $\delta\in\mathbb{R}_+$ and $J,q\in\mathbb{N}$. Suppose $f:\mathbb{R}\to\mathbb{R}$ is an $(s+1)$-times differentiable function, where $s:=q-1+2(J+1-\lfloor\frac{q+1}{2}\rfloor)$. Then, 
    \begin{align}
        &\abs{\left.\frac{d^q}{dx^q}f(x)\right|_{x=0} - \frac{1}{\delta^q}\sum_{j=-J}^{J}a_j^{(J;q)}f(j\delta)} \nonumber\\
        &\leq \sum_{j=-J}^{J}\abs{ a_j^{(J;q)}j^{s+1} }\frac{\norm{f^{(s+1)}}_{\infty}}{(s+1)!}\delta^{s-q+1}
    \end{align}
    where $\norm{f^{(s+1)}}_{\infty}:=\sup_{\zeta\in[-J\delta,J\delta]}\abs{f^{(s+1)}(\zeta)}$ and $a_j^{(J;q)}$ is defined as Eq.~\eqref{eq:cfd_coeff_general}. 
\end{lem}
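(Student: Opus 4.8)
The plan is to derive the lemma from Taylor's theorem, using the elementary fact that the degree-$J$ central finite difference formula is exact on polynomials of degree at most $2J$. Write $\ell_j(z):=\prod_{r=-J,\,r\neq j}^{J}\frac{z-r}{j-r}$ for the Lagrange basis polynomial attached to node $j$ of the symmetric stencil $\{-J,\dots,J\}$, so that $a_j^{(J;q)}=\ell_j^{(q)}(0)$ by the definition in Eq.~\eqref{eq:cfd_coeff_general}. Since every polynomial $p$ with $\deg p\le 2J$ equals its own interpolant through these $2J+1$ nodes, we have the identity $p(z)=\sum_{j=-J}^{J}p(j)\,\ell_j(z)$; applying this to $p(z)=z^m$, differentiating $q$ times, and evaluating at $z=0$ gives the moment relations $\sum_{j=-J}^{J}a_j^{(J;q)}j^{m}=q!\,\delta_{mq}$ for all $0\le m\le 2J$ (with $q\le 2J$, so the $m=q$ term survives).

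A single extra order appears when $q$ is even, where $s=2J+1$; in that case I would separately show $\sum_{j=-J}^{J}a_j^{(J;q)}j^{2J+1}=0$. This is forced by the reflection symmetry of the stencil: $\ell_{-j}(z)=\ell_j(-z)$ implies $a_{-j}^{(J;q)}=(-1)^{q}a_j^{(J;q)}$, so for even $q$ the contributions of $j$ and $-j$ cancel because $2J+1$ is odd, and the $j=0$ term vanishes trivially. Together with the previous paragraph, $\sum_{j=-J}^{J}a_j^{(J;q)}j^{m}=q!\,\delta_{mq}$ holds for all $0\le m\le s$.

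Next I would expand each sampled value by Taylor's theorem with Lagrange remainder of order $s$: as $f$ is $(s+1)$-times differentiable, $f(j\delta)=\sum_{m=0}^{s}\frac{f^{(m)}(0)}{m!}(j\delta)^{m}+\frac{f^{(s+1)}(\xi_j)}{(s+1)!}(j\delta)^{s+1}$ for some $\xi_j$ between $0$ and $j\delta$, hence $\xi_j\in[-J\delta,J\delta]$. Substituting into $\frac{1}{\delta^{q}}\sum_{j}a_j^{(J;q)}f(j\delta)$ and invoking the moment relations, every polynomial term of degree $m\neq q$ vanishes and the $m=q$ term contributes exactly $f^{(q)}(0)$, so that $\abs{f^{(q)}(0)-\frac{1}{\delta^{q}}\sum_{j=-J}^{J}a_j^{(J;q)}f(j\delta)}=\abs{\frac{\delta^{s-q+1}}{(s+1)!}\sum_{j=-J}^{J}a_j^{(J;q)}j^{s+1}f^{(s+1)}(\xi_j)}$. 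Bounding $\abs{f^{(s+1)}(\xi_j)}\le\norm{f^{(s+1)}}_{\infty}$ termwise and using the triangle inequality, together with $\abs{a_j^{(J;q)}j^{s+1}}=\abs{a_j^{(J;q)}}\,\abs{j}^{s+1}$, yields the claimed bound $\sum_{j=-J}^{J}\abs{a_j^{(J;q)}j^{s+1}}\frac{\norm{f^{(s+1)}}_{\infty}}{(s+1)!}\delta^{s-q+1}$.

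The main obstacle is the moment cancellation at order $s$: the textbook ``exact on degree $\le 2J$'' statement only controls orders $m\le 2J$, so for even $q$ one must additionally kill the order-$(2J+1)$ moment of the stencil weights, which is exactly where the choice of a symmetric (central) rather than a one-sided stencil is indispensable. Everything else is routine bookkeeping with the Taylor remainder; the argument is the specialization of Lemma~22 of Ref.~\cite{gilyen2019optimizing} to the $q$-th derivative.
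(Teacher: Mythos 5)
Your proof is correct, and it rests on the same two pillars as the paper's argument (Taylor expansion with Lagrange remainder plus polynomial exactness of the symmetric stencil), but you organize the exactness step differently: you convert it into the moment identities $\sum_{j=-J}^{J}a_j^{(J;q)}j^{m}=q!\,\delta_{mq}$ for $0\le m\le s$, supplemented by the reflection-symmetry argument $a_{-j}^{(J;q)}=(-1)^{q}a_j^{(J;q)}$ to kill the order-$(2J+1)$ moment when $q$ is even, whereas the paper inserts the truncated Taylor polynomial $P_s$ into the Lagrange interpolation identity $P_s(z)=\sum_j P_s(j)L_j(z)$ and differentiates $q$ times at $z=0$. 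Your route actually patches two points the paper glosses over. First, for even $q$ one has $s=2J+1$, so $P_s$ can have degree $2J+1$ and is then \emph{not} equal to its interpolant through the $2J+1$ nodes; the paper's interpolation identity is only rescued after taking the $q$-th derivative at the center, precisely because the nodal polynomial is odd --- this is exactly the extra cancellation you prove explicitly, and which the paper's proof never addresses. Second, the Lagrange remainder point depends on the evaluation node, so it should be $\xi_j$ rather than a single $\zeta$ as in the paper's display; your per-node treatment with the triangle inequality is what legitimately produces the stated bound $\sum_j\abs{a_j^{(J;q)}j^{s+1}}$ (the paper's final line instead carries $\abs{\sum_j a_j^{(J;q)}j^{s+1}}$, which its own argument does not justify once $\zeta$ varies with $j$). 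The only caveat, shared implicitly by both proofs, is the tacit assumption $q\le 2J$ (needed so the $m=q$ moment lies in the exactness range), which matches how the lemma is used in the paper.
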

\begin{proof}
    The proof is a straightforward extension of the analysis in Ref.~\cite{gilyen2019optimizing}, which derives the upper bound for the first-order derivative. Taylor's theorem with Lagrange remainder term leads to the following expression of $f$:
    \begin{align}
        f(x) = \sum_{j=0}^{s}\frac{f^{(j)}(0)}{j!}x^j + \frac{f^{(s+1)}(\zeta)}{(s+1)!}x^{s+1}, \label{eq:taylor_expansion}
    \end{align}
    where $x\in\mathbb{R}$ and $\zeta\in[0,x]$. For a rescaled variable $z:=x/\delta$, Eq.~\eqref{eq:taylor_expansion} can be rewritten as
    \begin{align}
        P_s(z) &:= \sum_{j=0}^{s}\frac{f^{(j)}(0)}{j!}(z\delta)^j \nonumber\\
        &= f(z\delta) - \frac{f^{(s+1)}(\zeta)}{(s+1)!}(z\delta)^{s+1}. \label{eq:taylor_expansion_scaled}
    \end{align}
    Here, the Lagrange interpolation formula allows us to rewrite the degree-$s$ polynomial $P_s(z)$ as 
    \begin{align}
        P_s(z) &= \sum_{j=-J}^{J}P_s(j)L_j(z) , \label{eq:lagrange_interpolation} \\
        L_{j}(z) &:= \prod_{\substack{r=-J\\ r\neq j}}^{J}\frac{z-r}{j-r} \label{eq:lagrange_interpolation_function} . 
    \end{align}
    From Eqs.~\eqref{eq:taylor_expansion_scaled} and~\eqref{eq:lagrange_interpolation},  we obtain
    \begin{align}
        &P_s^{(q)}(0) \nonumber\\
        &= f^{(q)}(0)\delta^q \nonumber\\
        &= \sum_{j=-J}^{J}\left[ f(j\delta) - \frac{f^{(s+1)}(\zeta)}{(s+1)!}(j\delta)^{s+1} \right]L_j^{(q)}(0) \nonumber\\
        &= \sum_{j=-J}^{J}a_j^{(J;q)}\left[ f(j\delta) - \frac{f^{(s+1)}(\zeta)}{(s+1)!}(j\delta)^{s+1} \right], \label{eq:deriv_legendre}
    \end{align}
    where Eq.~\eqref{eq:cfd_coeff_general} was used in the last line. 
    Applying the triangle inequality to Eq.~\eqref{eq:deriv_legendre}, the upper bound of the finite difference error is estimated as follows:
    \begin{align}
        &\abs{f^{(q)}(0) -\frac{1}{\delta^q}\sum_{j=-J}^{J}a_j^{(J;q)}f(j\delta)} \nonumber\\
        &= \abs{\frac{1}{\delta^q}\sum_{j=-J}^{J}a_j^{(J;q)}\frac{f^{(s+1)}(\zeta)}{(s+1)!}(j\delta)^{s+1}} \nonumber\\
        &\leq \abs{\sum_{j=-J}^{J}a_j^{(J;q)}j^{s+1}}\frac{\norm{f^{(s+1)}}_{\infty}}{(s+1)!}\delta^{s-q+1}.
    \end{align}
\end{proof}
Applying Lemma~\ref{lemma:fd_error} to a matrix function $f(\hat{H})=e^{-i\hat{H}t}$, we obtain the upper bound of the finite difference matrix perturbation as shown in Eqs.~\eqref{eq:H_mat_fd_error} and~\eqref{eq:msd_power_perturbation}. 

\section{1-norm reduction \label{append:1norm_reduction}}
This Appendix describes the 1-norm reduction methods for the electronic structure Hamiltonian. The main text uses these methods for numerical simulation and sampling cost estimation of KQD. 
Specifically, we consider the electronic structure Hamiltonian in the second quantized representation:
\begin{align}
    \hat{H} &= \sum_{pq}^{N_{\rm orb}}\sum_{\sigma=\uparrow,\downarrow}h_{pq}\hat{a}_{p\sigma}^{\dag}\hat{a}_{q\sigma} + \frac{1}{2}\sum_{pqrs}^{N_{\rm orb}}\sum_{\sigma\tau}g_{pqrs}\hat{a}_{p\sigma}^{\dag}\hat{a}_{r\tau}^{\dag}\hat{a}_{s\tau}\hat{a}_{q\sigma}, \label{eq:electronic_hamiltonian}
\end{align}
where $N_{\rm orb}$ is the number of spatial orbitals, and $h_{pq}$ and $g_{pqrs}$ are the one- and two-electron integrals for spatial orbital indices $\{p,q,r,s\}$.  $\hat{a}_{p\sigma}$ and $\hat{a}_{p\sigma}^{\dag}$ denote the annihilation and creation operators for a $p$-orbital electron with spin $\sigma\in\{\uparrow,\downarrow\}$, respectively. 
Applying a fermion-to-qubit mapping to the fermionic operators, the electronic structure Hamiltonian~\eqref{eq:electronic_hamiltonian} can be translated into the Pauli-LCU form as in Eq.~\eqref{eq:LCU}. The corresponding 1-norm, defined as Eq.~\eqref{eq:1_norm}, can be generally expressed using $h_{pq}$ and $g_{pqrs}$ as follows~\cite{koridon2021orbital}: 
\begin{align}
    \lambda &= \sum_{pq}\abs{h_{pq}+\sum_{r}g_{pqrr}-\frac{1}{2}\sum_{r}g_{prrq}} \nonumber\\
    &+ \frac{1}{2}\sum_{p>r,s>q}\abs{g_{pqrs}-g_{psrq}} + \frac{1}{4}\sum_{pqrs}\abs{g_{pqrs}}. \label{eq:1norm_ele}
\end{align}
The 1-norm reduction methods introduced below aim to minimize the 1-norm $\lambda$ given by Eq.~\eqref{eq:1norm_ele} by optimizing the one- and two-electron integrals $h_{pq}$ and $g_{pqrs}$ under the constraint of preserving the spectrum of the Hamiltonian.  

\subsection{Orbital optimization \label{append:orbital_optimization}}
One of the most well-established 1-norm reduction approaches is the orbital optimization~\cite{koridon2021orbital}. The orbital optimization is achieved through a unitary transformation of the spatial molecular orbitals: 
\begin{align}
    \mathbf{U}_{\rm OO} = e^{-\mathbf{K}}, \label{eq:U_oo}
\end{align}
where $\mathbf{K}$ is a skew-symmetric matrix (i.e., $\mathbf{K}^{\rm T}=-\mathbf{K}$) described as 
\begin{align}
    \mathbf{K} = 
    \begin{pmatrix}
        0 & K_{12} & K_{13} & \cdots & K_{1N_{\rm orb}} \\
        -K_{12} & 0 & K_{23} & \cdots & K_{2N_{\rm orb}} \\
        -K_{13} & -K_{23} & 0 & \cdots & K_{3N_{\rm orb}} \\
        \vdots & \vdots & \vdots & \ddots & \vdots \\
        -K_{1N_{\rm orb}} & -K_{2N_{\rm orb}} & -K_{3N_{\rm orb}} & \cdots & 0
    \end{pmatrix}.
\end{align}
For real molecular orbitals, the parameters ${K_{ij}}$ are assumed to be real. Therefore, the unitary matrix $\mathbf{U}_{\rm OO}$ contains $N_{\rm orb}(N_{\rm orb}-1)/2$ parameters in total. Under this unitary transformation, the one- and two-electron integrals are parameterized as
\begin{align}
    \tilde{h}_{pq}(\mathbf{K}) &= \sum_{ij}^{N_{\rm orb}}[e^{-\mathbf{K}}]_{ip}h_{ij}[e^{-\mathbf{K}}]_{jq}, \\
    \tilde{g}_{pqrs}(\mathbf{K}) &= \sum_{ijkl}^{N_{\rm orb}}[e^{-\mathbf{K}}]_{ip}[e^{-\mathbf{K}}]_{jq}g_{ijkl}[e^{-\mathbf{K}}]_{kr}[e^{-\mathbf{K}}]_{ls}.
\end{align}
Then, the parameters $\{K_{ij}\}$ are optimized to minimize the Pauli 1-norm in Eq.~\eqref{eq:1norm_ele} for the modified one- and two-electron integrals $\tilde{h}_{pq}(\mathbf{K})$ and $\tilde{g}_{pqrs}(\mathbf{K})$. 
Since the orbital optimization is a unitary transformation of the Hamiltonian, the spectrum remains invariant. 

We also mention that orbital optimization can be used to estimate the spectral range~\cite{cortes2024spectral}, which dictates the lower bound of the 1-norm~\cite{loaiza2023lcu}. In this case, we perform orbital optimization to minimize the following cost function~\cite{cortes2024spectral}:
\begin{align}
    E_{\rm HF}(\mathbf{K}) &= \sum_{p}^{N_\alpha}\tilde{h}_{pp}(\mathbf{K}) +  \sum_{p}^{N_\beta}\tilde{h}_{pp}(\mathbf{K}) \nonumber\\
    &+ \frac{1}{2} \sum_{p,q}^{N_\alpha,N_\alpha}\tilde{g}_{ppqq}(\mathbf{K}) + \frac{1}{2}\sum_{p,q}^{N_\beta,N_\beta}\tilde{g}_{ppqq}(\mathbf{K}) \nonumber\\
    &- \frac{1}{2}\sum_{p,q}^{N_\alpha,N_\alpha}\tilde{g}_{pqqp}(\mathbf{K}) - \frac{1}{2}\sum_{p,q}^{N_\beta,N_\beta}\tilde{g}_{pqqp}(\mathbf{K}) \nonumber\\
    &+ \sum_{p,q}^{N_\alpha,N_\beta}\tilde{g}_{ppqq}(\mathbf{K}) , \label{eq:hf_level_energy}
\end{align}
where $N_\alpha$ and $N_\beta$ denote the number of spin-up and spin-down electrons, respectively. Then, the Hartree-Fock level approximation of the minimum and maximum eigenvalues in the $\bm{Q}$-symmetry sector is obtained as 
\begin{align}
    E_{\rm min}^{(\mathrm {HF}, \bm{Q})} &:= \min_{\mathbf{K}}E_{\rm HF}(\mathbf{K}), \\
     E_{\rm max}^{(\mathrm {HF}, \bm{Q})} &:= \max_{\mathbf{K}}E_{\rm HF}(\mathbf{K}) \nonumber\\ &= -\min_{\mathbf{K}}\bar{E}_{\rm HF}(\mathbf{K}), 
\end{align}
where $\bar{E}_{\rm HF}(\mathbf{K})$ is defined by taking the negative of the one- and two-electron integrals in Eq.~\eqref{eq:hf_level_energy}. Using $E_{\rm min}^{(\mathrm {HF}, \bm{Q})}$ and $E_{\rm max}^{(\mathrm {HF}, \bm{Q})}$, we obtain the Hartree-Fock level approximation of the spectral range as $\Delta{E}_{\bm{Q}}\simeq E_{\rm max}^{(\mathrm {HF}, \bm{Q})} - E_{\rm min}^{(\mathrm {HF}, \bm{Q})}$, which is generally slightly smaller than the exact spectral range, as discussed in Sec.~\ref{subsec:KQD_vs_MSD}.

\subsection{Block-invariant symmetry shift  \label{append:bliss}}
Another promising approach to reduce the 1-norm of the electronic structure Hamiltonian is the block-invariant symmetry shift (BLISS)~\cite{loaiza2023bliss,patel2024blisslp}, which constructs a shifted Hamiltonian $\hat{H}_T:=\hat{H}-\hat{T}$ while the spectrum of subspaces of interest remains invariant. BLISS is implemented by a shift operator $\hat{T}$ satisfying $\hat{T}\ket{\psi_i}=0$ for any Hamiltonian eigenstates of interest $\{\ket{\psi_i}\}$. Specifically, the electronic structure Hamiltonian~\eqref{eq:electronic_hamiltonian} preserves the particle number symmetry, and thus it is sufficient to maintain the spectrum of a fixed electron number sector with $N_e$ electrons. Considering this particle number symmetry, the shift operator is described as 
\begin{align}
    \hat{T}(\vec{\mu},\vec{\xi}) &= \mu_1(\hat{N}-N_e)+\mu_2(\hat{N}^2-N_e^2) \nonumber\\
    &+\sum_{pq}^{N_{\rm orb}}\xi_{pq}\hat{F}_{pq}(\hat{N}-N_e), \label{eq:bliss_operator}
\end{align}
where $\hat{N}:=\sum_{p\sigma}\hat{a}_{p\sigma}^{\dag}\hat{a}_{p\sigma}$ is the total number operator, $\hat{F}_{pq}:=\sum_{pq,\sigma}\hat{a}_{p\sigma}^{\dag}\hat{a}_{q\sigma}$ are spatial excitation operators, and $\xi_{pq}=\xi_{qp}$ are real symmetric parameters. The shift operator $\hat{T}(\vec{\mu},\vec{\xi})$ is parameterized by $2+N_{\rm orb}(N_{\rm orb}+1)/2$ real parameters, namely $\vec{\mu}:=(\mu_1,\mu_2)$ and $\vec{\xi}:=\{\xi_{pq}\}$. This shift operator satisfies $\hat{T}(\vec{\mu},\vec{\xi})\ket{\psi}=0$ for any state $\ket{\psi}$ with $N_e$ electrons. 
The Hamiltonian is modified by the above BLISS operation as follows: 
\begin{align}
    \hat{H}_T(\vec{\mu},\vec{\xi}) &:= \hat{H}-\hat{T}(\vec{\mu},\vec{\xi}), \nonumber\\
    &= \sum_{pq}^{N_{\rm orb}}\sum_{\sigma=\uparrow,\downarrow}\tilde{h}_{pq}(\vec{\mu},\vec{\xi})\hat{a}_{p\sigma}^{\dag}\hat{a}_{q\sigma} \nonumber\\
    &+ \frac{1}{2}\sum_{pqrs}^{N_{\rm orb}}\sum_{\sigma\tau}\tilde{g}_{pqrs}(\vec{\mu},\vec{\xi})\hat{a}_{p\sigma}^{\dag}\hat{a}_{r\tau}^{\dag}\hat{a}_{s\tau}\hat{a}_{q\sigma} \nonumber\\
    &+ \mu_1N_e + \mu_2N_e^2,
\end{align}
where the shifted one- and two-electron integrals are defined as
\begin{align}
   \tilde{h}_{pq}(\vec{\mu},\vec{\xi}) &= h_{pq} - (\mu_1+\mu_2)\delta_{pq} + (N_e-1)\xi_{pq}, \\
    \tilde{g}_{pqrs}(\vec{\mu},\vec{\xi}) &= g_{pqrs} - 2\mu_2\delta_{pq}\delta_{rs} - (\xi_{pq}\delta_{rs}+\delta_{pq}\xi_{rs}).  
\end{align}
The parameters $\vec{\mu}$ and $\vec{\xi}$ are optimized to minimize the 1-norm~\eqref{eq:1norm_ele} of the shifted one- and two-electron integrals, which is formally given by
\begin{align}
    &\lambda_{\rm BLISS}(\vec{\mu},\vec{\xi}) \nonumber\\
    &= \sum_{pq}\abs{\tilde{h}_{pq}(\vec{\mu},\vec{\xi})+\sum_{r}\tilde{g}_{pqrr}(\vec{\mu},\vec{\xi})-\frac{1}{2}\sum_{r}\tilde{g}_{prrq}(\vec{\mu},\vec{\xi})} \nonumber\\
    &+ \frac{1}{2}\sum_{p>r,s>q}\abs{\tilde{g}_{pqrs}(\vec{\mu},\vec{\xi})-\tilde{g}_{psrq}(\vec{\mu},\vec{\xi})} \nonumber\\
    &+ \frac{1}{4}\sum_{pqrs}\abs{\tilde{g}_{pqrs}(\vec{\mu},\vec{\xi})}.
\end{align}
To accelerate the numerical optimization, we can utilize the analytic gradients for the cost function $\lambda_{\rm BLISS}(\vec{\mu},\vec{\xi})$ obtained as follows: 
\begin{align}
    &\frac{\partial\lambda_{\rm BLISS}}{\partial\mu_1} \nonumber\\
    &= -\sum_{p}\sgn{\tilde{h}_{pp} + \sum_{r}\tilde{g}_{pprr} - \frac{1}{2}\sum_{r}\tilde{g}_{prrp} }, \label{eq:bliss_gradient1} \\
    &\frac{\partial\lambda_{\rm BLISS}}{\partial\mu_2} \nonumber\\
    &= -\left(N_{\rm orb}+\frac{1}{2}\right)\sum_{p}\sgn{\tilde{h}_{pp} + \sum_{r}\tilde{g}_{pprr} - \frac{1}{2}\sum_{r}\tilde{g}_{prrp}} \nonumber\\
    &+ \frac{1}{2}\sum_{p>r}\sgn{\tilde{g}_{prrp} - \tilde{g}_{pprr} } - \frac{1}{4}\sum_{pr}\sgn{\tilde{g}_{pprr} }, \label{eq:bliss_gradient2} \nonumber\\
    &\frac{\partial\lambda_{\rm BLISS}}{\partial\xi_{pq}} \\
    &= (N_e-N_{\rm orb})\sgn{\tilde{h}_{pq} + \sum_{r}\tilde{g}_{pqrr} - \frac{1}{2}\sum_{r}\tilde{g}_{prrq}}  \nonumber\\
    & + \frac{1}{2}\sum_{r\neq p,q}\sgn{\tilde{g}_{prrq} - \tilde{g}_{pqrr}} - \frac{1}{2}\sum_{r}\sgn{ \tilde{g}_{rrpq}}, \label{eq:bliss_gradient3}
\end{align}
where $p<q$ in Eq.~\eqref{eq:bliss_gradient3}. 

\section{Details of molecular Hamiltonian \label{append:molecule}}
This Appendix provides detailed information regarding the molecular Hamiltonians employed in the numerical results in the main text. 

We constructed the second-quantized electronic structure Hamiltonian in the form of Eq.~\eqref{eq:electronic_hamiltonian} for \ce{H2}, \ce{LiH}, \ce{BeH2}, \ce{H2O}, \ce{NH3}, \ce{N2}, \ce{Cr2}, and \ce{C4H4N2} (pyrazine) molecules at their equilibrium geometry by varying the basis sets. Specifically, the one- and two-electron integrals were obtained by performing the self-consistent restricted Hartree-Fock calculations using the \texttt{PySCF} library~\cite{sun2018pyscf}. The frozen-core approximation was adopted for all molecules, except the \ce{H2} molecule, and the 1s molecular orbitals were frozen in the self-consistent calculation. 

Table~\ref{tab:molecule_info} summarizes the information of each molecular Hamiltonian. The 1-norm $\lambda$ was obtained for the qubit representation of the molecular Hamiltonians, which were derived using the Jordan-Wigner transformation~\cite{Jordan1928}. Values of $\lambda$ in parentheses represent the 1-norm for the optimized Hamiltonians obtained by applying orbital optimization and BLISS, as described in Appendix~\ref{append:1norm_reduction}. We empirically found that applying BLISS after orbital optimization further reduces the 1-norm. More specifically, we numerically performed orbital optimization and BLISS using the sequential least squares programming (SLSQP) optimizer implemented in the \texttt{SciPy} package~\cite{2020SciPy-NMeth}. We chose the SLSQP optimizer due to the presence of an absolute value function in the 1-norm cost function (Eq.~\eqref{eq:1norm_ele}), which introduces discontinuities in its gradient. Indeed, we empirically found that other optimizers, such as the L-BFGS-B method, struggle to minimize the 1-norm, consistent with previous work~\cite{koridon2021orbital}. Furthermore, we leveraged the gradients evaluated using \texttt{JAX}~\cite{jax2018github} or analytical gradients in Eqs.~\eqref{eq:bliss_gradient1}, \eqref{eq:bliss_gradient2}, and \eqref{eq:bliss_gradient3} for the orbital optimization and BLISS methods, respectively, to accelerate the numerical optimization. 

The spectral range $\Delta{E}_{\bm{Q}}$ is obtained for a symmetry sector $\bm{Q}=(N_e,S)$ with the electron number $N_e$ and total spin $S=0$. For small molecular models, $\Delta{E}_{\bm{Q}}$ is calculated exactly by full configuration interaction. For large molecular models, exact diagonalization of the Hamiltonian is difficult due to the exponentially large number of configurations. Therefore, we adopted the orbital optimization method proposed in Ref.~\cite{cortes2024spectral} (see Appendix~\ref{append:orbital_optimization}) for large molecular models, and estimated the spectral range $\Delta{E}_{\bm{Q}}$ using a Hartree-Fock level approximation. 

\begin{table*}[htbp]
\caption{\label{tab:molecule_info}%
Data of molecular Hamiltonians. $(N_{\rm orb}, N_e)$ represents the system size, with $N_{\rm orb}$ being the number of spatial orbitals for a given basis set and $N_e$ being the number of electrons under the frozen-core approximation. In columns for the 1-norm $\lambda$, values without parentheses are obtained for molecular Hamiltonians in the canonical molecular orbital basis, while values within parentheses are obtained for 1-norm reduced Hamiltonians obtained by applying the orbital optimization and BLISS described in Appendix~\ref{append:1norm_reduction}. In columns for the spectral range $\Delta{E}_{(N_e,S)}$, values without parentheses are obtained by exact diagonalization of the Hamiltonian within the $(N_e,S)$ sector, while values within parentheses are obtained by the Hartree-Fock level approximation in Ref.~\cite{cortes2024spectral}. The last column shows the values of the energy shift error defined by Eq.~\eqref{eq:energy_shift_error}. 
}
\begin{ruledtabular}
\begin{tabular}{ l c c c c}
    System & $(N_{\rm orb}, N_e)$ & $\lambda$ & $\Delta{E}_{(N_e,S)}$ & $\delta_{\rm shift}$ (Eq.~\eqref{eq:energy_shift_error}) \\
    \colrule
    \ce{H2} [STO-3G] & (2,2) & 1.86 (1.07) & 1.62 (1.58) & $4.4\times10^{-7}$\% \\
    \ce{H2} [6-31G] & (4,2) & 11.5 (4.48) & 3.08 (3.01) & 1.7\%  \\
    \ce{H2} [cc-pVDZ] & (10,2) & 101.3 (31.1) & 7.32 (7.27) & 0.48\% \\
    \ce{H2} [cc-pVTZ] & (28,2) & 1117.1 (409.7) & 13.9 (13.8) &  0.24\% \\
    \ce{LiH} [STO-3G] & (6,4) & 12.3 (5.69) &  6.62 (6.52) & 0.69\% \\
    \ce{LiH} [6-31G] & (11,4) & 41.9 (15.2) & 9.48 (9.46) & 0.27\% \\
    \ce{LiH} [cc-pVDZ] & (19,4) & 178.4 (60.9) & 13.9 (13.8) & 3.0\% \\
    \ce{LiH} [cc-pVTZ] & (44,4) & 1657.9 (679.3) & (25.1) \\
    \ce{BeH2} [STO-3G] & (7,6) & 21.5 (11.1) & 13.0 (12.8) & 0.35\% \\
    \ce{BeH2} [6-31G] & (13,6) & 85.5 (27.9) & 18.3 (18.2) & 0.080\% \\
    \ce{BeH2} [cc-pVDZ] & (24,6) & 392.4 (131.0) & (27.2)  \\
    \ce{BeH2} [cc-pVTZ] & (58,6) & 3745.8 (1293.1) & (45.5) \\
    \ce{H2O} [STO-3G] & (6,8) & 27.7 (8.43) & 4.70 (4.54) & 0.045\%  \\
    \ce{H2O} [6-31G] & (12,8) & 113.1 (35.3) & 16.20 (16.0) & 0.059\% \\
    \ce{H2O} [cc-pVDZ] & (23,8) & 674.2 (206.2) & (30.9) \\
    \ce{H2O} [cc-pVTZ] & (57,8) & 7652.8 (2201.3) & (73.6) \\
    \ce{NH3} [STO-3G] & (7,8) & 28.5 (11.1) & 5.69 (5.48) & 0.059\% \\
    \ce{NH3} [6-31G] & (14,8) & 160.5 (46.8) & (14.0) \\
    \ce{NH3} [cc-pVDZ] & (28,8) & 1113.0 (316.5) & (27.5) \\
    \ce{NH3} [cc-pVTZ] & (71,8) & 15325.6 (3618.1) & (68.2)  \\
    \ce{N2} [STO-3G] & (8,10) & 35.6 (13.4) & 6.87 (6.59) & 0.017\% \\
    \ce{N2} [6-31G] & (16,10) & 184.9 (56.4) & (18.3) \\
    \ce{N2} [cc-pVDZ] & (26,10) & 684.1 (242.6) & (33.4) \\
    \ce{N2} [cc-pVTZ] &  (58,10) & 5602.1 (2312.8) & (89.9) \\
    \ce{Cr2} [STO-3G] &  (26,28) & 516.6 (261.5) & (73.8) \\
    \ce{Cr2} [6-31G] & (44,28) & 1442.4 (576.7) & (141.5) \\
    \ce{Cr2} [cc-pVDZ] &  (76,28) & 6106.3 (2423.4) & (174.0) \\
    \ce{C4H4N2} [STO-3G] &  (28,30) & 500.9 (177.6) & (44.3) \\
    \ce{C4H4N2} [6-31G] &  (56,30) & 2946.7 (717.5) & (91.7) \\
    \ce{C4H4N2} [cc-pVDZ] &  (98,30) & 14141.1 & (152.4) \\
\end{tabular}
\end{ruledtabular}
\end{table*}

Figure~\ref{fig:mol_scaling} depicts the system size scaling of the 1-norm $\lambda$ and spectral range $\Delta{E}_{(N_e,S)}$ for various molecules. It is illustrated that $\lambda$ and $\Delta{E}_{(N_e,S)}$ polynomially increase with increasing the number of spatial orbitals $N_{\rm orb}$ by enlarging the basis set. Indeed, the data shown in Table~\ref{tab:molecule_scaling} indicates that the dependence of $\lambda$ and $\Delta{E}_{(N_e,S)}$ on $N_{\rm orb}$ can be accurately reproduced by polynomial fitting. Specifically, the 1-norm $\lambda$ scales as $\order{N_{\rm orb}^{2.52}}$ by averaging the fitting coefficients for all benchmark molecules, which decreases to $\order{N_{\rm orb}^{2.34}}$ for the 1-norm reduced Hamiltonians. On the other hand, the spectral range exhibits a sublinear scaling of $\order{N_{\rm orb}^{0.93}}$ on average. The sublinear scaling of the spectral range was also reported in Ref.~\cite{cortes2024spectral}. 
\begin{figure*}[tbp]
  \centering
  \includegraphics[width=0.99\textwidth]{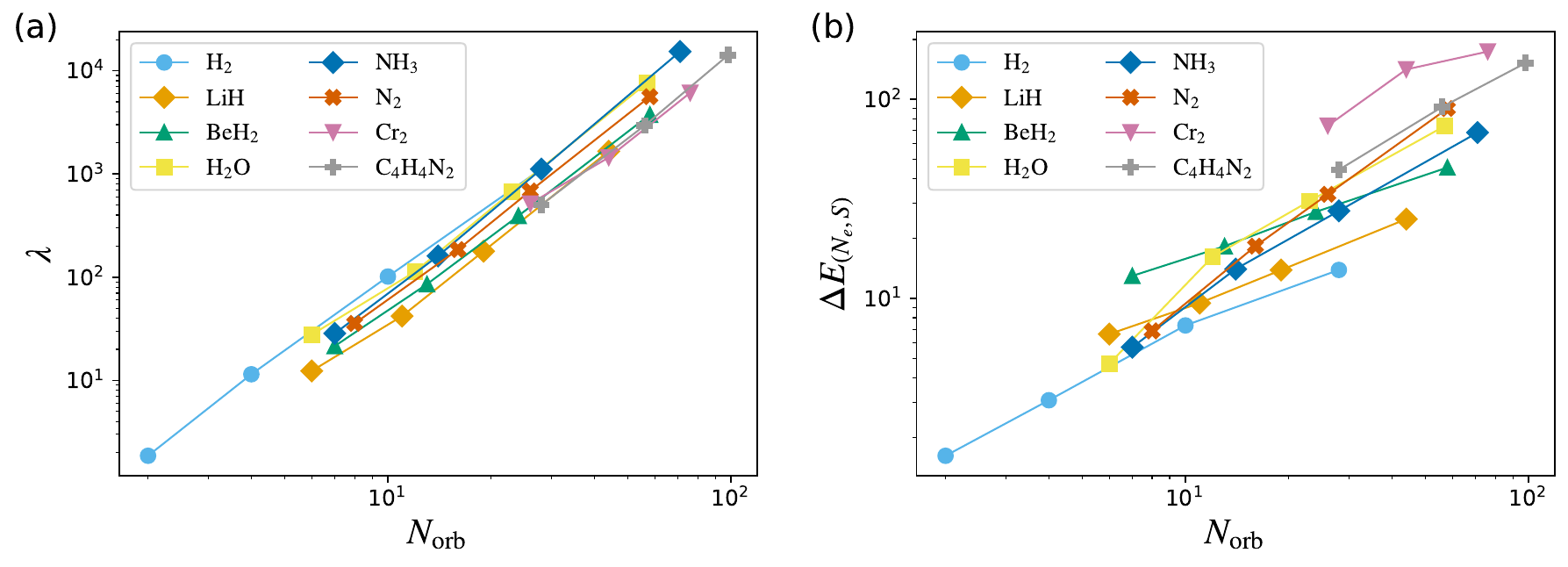}
  \caption{Scaling of the 1-norm $\lambda$ and spectral range $\Delta{E}_{(N_e,S)}$ for various molecules with respect to the number of spatial orbitals $N_{\rm orb}$ for different basis sets. Values of the 1-norm $\lambda$ are obtained for the canonical molecular orbital basis. }
  \label{fig:mol_scaling}
\end{figure*}

\begin{table*}[htbp]
\caption{\label{tab:molecule_scaling}%
Fitting coefficients of the 1-norm $\lambda$ and spectral range $\Delta{E}_{(N_e,S)}$ with respect to the number of spatial orbitals $N_{\rm orb}$ for different basis sets. Values are obtained by conducting a polynomial fitting assuming $\log_{10}\lambda=a_\lambda\log_{10}N_{\rm orb}+b_\lambda$ and $\log_{10}\Delta{E}_{\bm{Q}}=a_\Delta\log_{10}N_{\rm orb}+b_\Delta$ with the associated regression coefficients $R_\lambda^2$ and $R_\Delta^2$, respectively. Values in parentheses represent fitting data obtained for the 1-norm reduced Hamiltonians.  }
\begin{ruledtabular}
\begin{tabular}{ l c c c c c c }
    Molecule & $a_\lambda$ & $b_\lambda$ & $R_\lambda^2$ & $a_\Delta$ & $b_\Delta$ & $R_\Delta^2$ \\
    \colrule
    \ce{H2} & 0.37 (0.21) & 2.41 (2.25) & 0.9994 (0.9977) & 0.97 & 0.82 & 0.9904 \\
    \ce{LiH} & 0.13 (0.06) & 2.48 (2.43) & 0.9962 (0.9854) & 1.94 & 0.67 & 0.9985 \\
    \ce{BeH2} & 0.17 (0.10) & 2.45 (2.29) & 0.9991 (0.9878) & 4.02 & 0.60 & 0.9993 \\
    \ce{H2O} & 0.26 (0.08) & 2.53 (2.50) & 0.9960 (0.9971) & 0.67 & 1.19 & 0.9731 \\
    \ce{NH3} & 0.13 (0.07) & 2.73 (2.53) & 0.9992 (0.9939) & 0.78 & 1.06 & 0.9947 \\
    \ce{N2} & 0.16 (0.05) & 2.56 (2.63) & 0.9993 (0.9939) & 0.48 & 1.29 & 0.9987 \\
    \ce{Cr2} & 0.27 (0.27) & 2.30 (2.08) & 0.9926 (0.9763) & 5.95 & 0.80 & 0.9118 \\
    \ce{C4H4N2} & 0.07 (0.22) & 2.66 (2.01) & 0.9993 (1.0000) & 1.66 & 0.99 & 0.9983 \\
\end{tabular}
\end{ruledtabular}
\end{table*}

\section{Numerical results for other molecules \label{append:msd_other_molecules}}
As a further illustration of MSD and potential benefits across various molecular systems, we present numerical results for \ce{H2O} and \ce{NH3} molecules using the STO-3G basis set in Fig.~\ref{fig:msd_mols}. The figures correspond to Figs.~\ref{fig:matrix_error}, \ref{fig:thresholding}, and \ref{fig:energy_error} in the main text. 
In contrast to the \ce{H2} (cc-pVDZ) model used in Sec~\ref{sec:numerical}, the \ce{H2O} (STO-3G) and \ce{NH3} (STO-3G) models exhibit smaller values of the ratio $\lambda/\Delta{E}_{(N_e,S)}$ (see Table~\ref{tab:molecule_info}). Consequently, the observed advantages of MSD over KQD are less pronounced in these systems compared to what is shown for the \ce{H2} (cc-pVDZ) model. Nevertheless, even with these smaller basis set models, MSD consistently demonstrates its superiority over KQD, as evidenced by the approximately 10-fold reduction in sampling costs needed to achieve chemical accuracy (Figs.~\ref{fig:msd_mols}(e) and~\ref{fig:msd_mols}(f)).

Based on the theoretical insights presented in Sec.~\ref{subsec:KQD_vs_MSD}, we anticipate that the benefits of MSD will become even more significant when applied to larger molecular models. In such scenarios, where classical simulation becomes prohibitively expensive or impossible, MSD is expected to offer a more substantial sampling cost reduction to achieve the chemical accuracy. 

\begin{figure*}[tbp]
  \centering
  \includegraphics[width=1.0\textwidth]{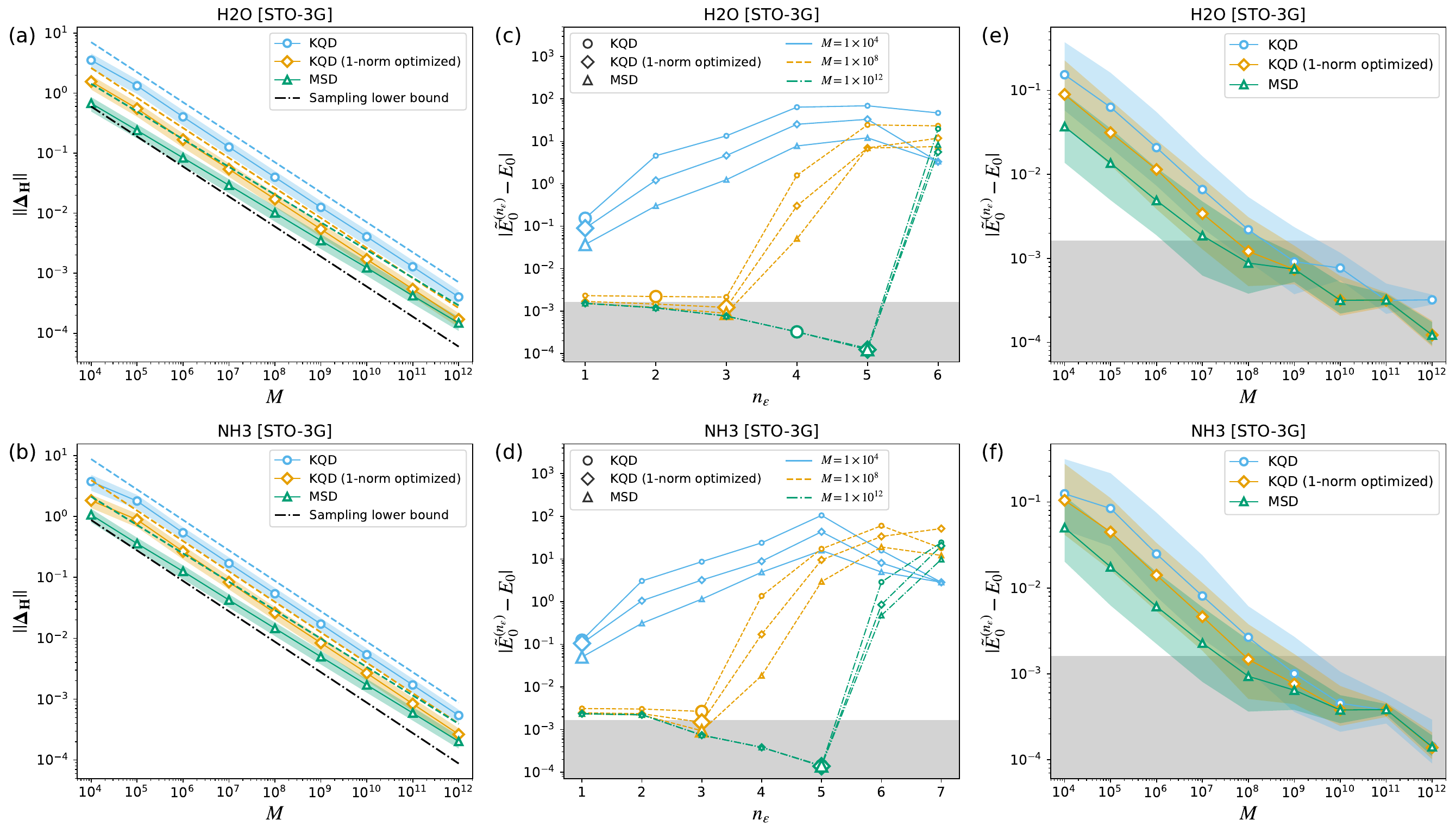}
  \caption{Numerical comparison of KQD (conventional and 1-norm optimized) and MSD for the \ce{H2O} (STO-3G) and \ce{NH3} (STO-3G) models. Similar to the results in Sec.~\ref{sec:numerical} in the main text, finite sampling simulations were conducted for 1,000 different random seeds. (a), (b) Dependence of matrix perturbation $\norm{\mathbf{\Delta}_\mathbf{H}}$ on total shot count $M$. The markers and shaded regions represent the ensemble average and standard deviation, respectively. The dashed and dash-dot lines indicate the theoretical upper bound (Eq.\eqref{eq:msd_matrix_perturbation}) and sampling lower bound (Eq.\eqref{eq:sampling_lower_bound}), respectively. (c), (d) Absolute difference between the exact ground state energy, $E_0$, and perturbed Krylov solutions, $\tilde{E}_0^{(n_\epsilon)}$, as a function of the remaining Krylov dimension after the thresholding, $n_\epsilon$. The optimal threshold points, determined as $\epsilon = \max{(\norm{\mathbf{\Delta}_{\mathbf{S}}}, \norm{\mathbf{\Delta}_{\mathbf{H}}}/\norm{\hat{{H}}})}$~\cite{kirby2024analysis}, are emphasized with large markers. (e), (f) Dependence of energy error $|\tilde{E}_0^{(n_\epsilon)}-E_0|$ at on total shot count $M$. The thresholding is performed for the optimal thresholding level. The markers and shaded regions represent the ensemble average and full width at half maximum (FWHM), respectively. The chemical accuracy (i.e., an absolute energy error below $1.6\times10^{-3}$ Ha) is highlighted with gray shading. }
  \label{fig:msd_mols}
\end{figure*}

\nocite{*}

\clearpage 

\bibliography{main}

\end{document}